\newtheorem*{rep@theorem}{\rep@title}
\newcommand{\newreptheorem}[2]{
\newenvironment{rep#1}[1]{
 \def\rep@title{#2 \ref{##1}}
 \begin{rep@theorem}\itshape}
 {\end{rep@theorem}}}
\def\colorful{1}
\newcommand{\orange}[1]{{\color{orange}{#1}}}
\newcommand{\blue}[1]{{{#1}}}
\newcommand{\orange}[1]{{{#1}}}
\newcommand{\blue}[1]{{{#1}}}
\newcommand{\ignore}[1]{}
\newcommand{\rnote}[1]{\footnote{{\bf \color{orange}Rocco:} {#1}}}
\newcommand{\down}{\downarrow}
\newcommand{\up}{\uparrow}
\newcommand{\score}{\mathrm{score}}
\renewcommand{\epsilon}{\eps}
\begin{document}
\title{New algorithms and lower bounds for monotonicity testing}

\author{
Xi Chen\thanks{Supported by NSF grant CCF-1149257 and a Sloan research fellowship. } \\ Columbia University \and 
Rocco A.\ Servedio\thanks{Supported by NSF grants CCF-1115703 and CCF-1319788.} \\ Columbia University \\ \and 
Li-Yang Tan\thanks{This work was done while the author was at Columbia Univerisity, supported by NSF grants CCF-1115703 and CCF-1319788.}\\ Simons Institute, UC Berkeley \\ \\ {\tt \{xichen,rocco,liyang\}@cs.columbia.edu}}

\maketitle

\begin{abstract}
We consider the problem of testing whether an unknown Boolean function
$f\isafunc$ is monotone versus $\eps$-far from every monotone
function.  The two main results of this paper are a new lower bound
and a new algorithm for this well-studied problem.

{\bf Lower bound:}
We prove an {$\tilde{\Omega}(n^{1/5})$} lower bound on the query complexity of any non-adaptive two-sided error algorithm for testing whether an unknown
Boolean function $f$ is monotone versus constant-far from monotone.  This gives an exponential improvement on the previous lower bound of $\Omega(\log n)$ due to Fischer {\it et al.}~\cite{FLN+02}. We show that the same lower bound holds for monotonicity testing of Boolean-valued functions over hypergrid
domains $\{1,\dots,m\}^n$ for all $m \geq 2.$

{\bf Upper bound:}
We give an $\tilde{O}(n^{5/6})\poly(1/\eps)$-query
algorithm that tests
whether an unknown Boolean function $f$ is monotone versus $\eps$-far
from monotone.  Our algorithm, which is non-adaptive and makes one-sided error,
is a modified version of the algorithm of Chakrabarty and Seshadhri
\cite{CS13a}, which makes $\tilde{O}(n^{7/8})\poly(1/\eps)$ queries.

\end{abstract}
\thispagestyle{empty}

\newpage
\setcounter{page}{1}

\section{Introduction}

Monotonicity is a basic and natural property of functions. In the field of property testing, the problem of efficiently testing whether an unknown function is monotone has been the focus of a long and fruitful line of research, with many works (see e.g.~\cite{GGL+98, DGL+99, GGL+00, EKK+00, FLN+02, Fis04, BKR04, ACCL07,HK08, RS09c, BBM12, BCGM12, RRSW12, CS13a, CS13b, CS13c, BRY13}) studying this problem for functions with various domains and ranges.

In this work we will be concerned with the classical problem of testing monotonicity of \emph{Boolean functions} $f\isafunc$, which was first posed and considered explicitly by Goldreich {\it et al.}~\cite{GGL+98}.  Recall that a Boolean function $f$ is monotone if $f(x) \le f(y)$ for all $x \prec y$, where $\prec$ denotes the bitwise partial order on the hypercube.  Let $\dist(f,g) := \Pr_{\bx \in \{-1,1\}^n}[f(\bx) \neq g(\bx)]$; we say that $f$ is \emph{$\eps$-close to monotone} if $\dist(f,g) \leq \eps$
for some monotone Boolean function $g$,  and that $f$ is \emph{$\eps$-far from monotone} otherwise.  We will be interested in query-efficient randomized testing algorithms for the following task:
\begin{quote}
{\it Given as input a distance parameter $\eps > 0$ and oracle access to an unknown Boolean function $f\isafunc$, output {\sf Yes} with probability at least $2/3$ if $f$ is monotone, and {\sf No} with probability at least $2/3$ if $f$ is $\eps$-far from monotone.}
\end{quote}

The work of Goldreich {\it et al.}~\cite{GGL+98} proposed a simple ``edge tester'' which queries uniform random edges of $\{-1,1\}^n$ hoping to find an edge whose endpoints violate monotonicity.  \cite{GGL+98} proved an $O(n^2\log(1/\eps)/\eps)$ upper bound on the query complexity of the edge tester, which was subsequently improved to $O(n/\eps)$ in the journal version~\cite{GGL+00}. Fischer {\it et al.}~\cite{FLN+02} established the first lower bounds shortly after, showing that there exists a constant distance parameter $\eps_0 > 0$ such that $\Omega(\log n)$ queries are necessary for any \emph{non-adaptive} tester (one whose queries do not depend on the oracle's responses to prior queries).  This directly implies an $\Omega(\log \log n)$ lower bound for adaptive testers, since any $q$-query adaptive tester can be simulated by a non-adaptive one that simply carries out all $2^q$ possible executions.  These upper and lower bounds were the best known for more than a decade, until the recent work of Chakrabarty and Seshadhri~\cite{CS13a} improved on the linear upper bound of Goldreich {\it et al.}~with an $\tilde{O}(n^{7/8}\eps^{-3/2})$-query tester.

Our main contributions in this work are (i) a new lower bound that improves
on the \cite{FLN+02} lower bound by an exponential factor, and (ii)
a new algorithm that improves on the \cite{CS13a} upper bound (in terms of the dependence on $n$) by a polynomial factor.  We now describe these contributions in more detail.

\medskip
{\bf Our lower bound.} We give an exponential improvement on the
above-mentioned lower bounds of Fischer {\it et al.}:
\begin{theorem}
\label{main-theorem-lb}
There exists a universal constant $\eps_0 > 0$ such that any non-adaptive algorithm for testing whether an unknown Boolean function is monotone versus $\eps_0$-far from monotone must make $\Omega(n^{1/5}(\log n)^{-2/5})$ queries.  Consequently, any adaptive algorithm must make $\Omega(\log n)$ queries.
\end{theorem}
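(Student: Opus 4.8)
The plan is to construct an explicit pair of distributions over functions — one supported on monotone functions ("yes-distribution") and one supported on functions that are $\eps_0$-far from monotone ("no-distribution") — that are hard to distinguish with few non-adaptive queries. The natural template, following the Fischer {\it et al.}~approach but pushed much further, is to use functions that depend on a small random subset of "relevant" coordinates: a yes-function is a monotone function of the bits in a random set $S$ (for instance a weighted threshold or a "dictator-like" monotone function on $S$), while a no-function is a slightly perturbed, non-monotone variant on a random set $S'$ of the same size. The key is to choose the number of relevant coordinates $k$ as a function of $n$ so that (a) the no-functions really are $\Omega(1)$-far from monotone, and (b) any set of $q = o(n^{1/5}(\log n)^{-2/5})$ queries fails to "hit" the hidden structure with probability bounded away from $1$.

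**Next I would** carry out the indistinguishability argument via a hybrid/coupling strategy. Given a fixed set $Q$ of $q$ query points in $\{-1,1\}^n$, I would argue that under both distributions the induced distribution on the answer vector $(f(x))_{x\in Q}$ is statistically close. The cleanest route is: first reduce to understanding, for a random relevant set $S$, the "trace" that $Q$ leaves on the coordinates in $S$ — i.e. the projections $x|_S$ for $x \in Q$ together with which pairs of query points are comparable; then show that with high probability over $S$ this trace is so sparse/structureless that the yes- and no-constructions produce identically distributed answers on it. This is where a careful counting argument enters: one needs that among $q$ points, the number of comparable pairs, or the number of pairs agreeing on many coordinates of $S$, is controlled, and that pushing the "violation-detecting" structure into a random small set of coordinates makes it invisible unless $q$ is polynomially large. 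Balancing the farness requirement against the size of $S$ is exactly what yields the $n^{1/5}$ (as opposed to, say, $\log n$) exponent, with the $(\log n)^{-2/5}$ factor coming from union-bounding over the $\binom{q}{2}$ pairs or over coordinates.

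**The main obstacle** will be simultaneously achieving constant farness and strong indistinguishability: making the no-functions far from monotone requires "spreading" many monotonicity violations across the cube, but spread-out violations are easier for a random query set to stumble upon, which hurts indistinguishability; conversely, concentrating the violations makes them easy to hide but may not give distance $\Omega(1)$. Threading this needle — presumably by a construction in which violations are common enough to force constant distance yet are each only detectable by queries that are comparable {\it and} agree with the hidden relevant set in a specific way, an event of probability $\sim 1/\mathrm{poly}(n)$ per query pair — is the crux, and quantifying it will require the bulk of the technical work (likely an anti-concentration or isoperimetric estimate for the chosen monotone "base" function, plus a delicate pair-counting bound for arbitrary non-adaptive query sets).

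**Finally,** the adaptive consequence follows immediately from the standard simulation: a $q$-query adaptive tester is simulated by a non-adaptive one making all $2^q$ possible queries, so $2^q = \Omega(n^{1/5}(\log n)^{-2/5})$, giving $q = \Omega(\log n)$. I would state this as a one-line corollary at the end.
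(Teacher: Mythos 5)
Your approach diverges from the paper's in a fundamental way, and I do not think it can be pushed to the claimed polynomial bound. You propose to hide the relevant structure inside a random small subset of coordinates (a ``junta''), generalizing the Fischer et al.\ construction, and to argue indistinguishability by showing that a small query set with high probability fails to produce an informative ``trace'' on the hidden set. This is not what the paper does, and the difference matters.

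The paper's $\calD_{yes}$ and $\calD_{no}$ are both distributions over linear threshold functions $\sign(\bw\cdot x)$ in which \emph{every} coordinate is relevant: under $\calD_{yes}$ each weight $\bw_i$ is i.i.d.\ uniform on $\{1,3\}$ (so the function is monotone), and under $\calD_{no}$ each $\bw_i$ is i.i.d.\ $-1$ w.p.~$1/10$ and $7/3$ w.p.~$9/10$ (so about $n/10$ coordinates have $\Inf_i[\boldf_{no}]=\Omega(1/\sqrt n)$ with the wrong sign, forcing $\Omega(1)$ distance by Parseval). The crucial point you are missing is that these two weight distributions are chosen to have \emph{the same mean and the same second moment}. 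Consequently the random response vector $\bS=Q\bw\in\R^q$ for a $q\times n$ query matrix $Q$ has the same mean and covariance under both distributions, and a multidimensional CLT (the paper adapts the Valiant--Valiant Wasserstein-distance CLT, Theorem~\ref{vv}, to control orthant probabilities) shows that both $\bS$-distributions are close to the \emph{same} Gaussian whenever $q$ is small. There is no union bound over pairs of comparable queries, no counting of how many queries agree on the hidden set, and no hiding of structure in a random subset --- the bound comes entirely from quantitative convergence to a common Gaussian.

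By contrast, your junta-style construction runs into exactly the obstacle you flag but do not resolve. If the no-function depends on only $k$ relevant coordinates and is $\Omega(1)$-far from monotone as an $n$-variable function, then a constant fraction of the \emph{edges inside the $k$-cube} must be violated, and these are easy to stumble upon once $k$ is polynomial --- which is what you would need to beat the trivial $2^{\binom{q}{2}}$-type trace-counting bounds for polynomially many queries. The trace/coupling argument you sketch is the right tool for $k=\Theta(\log n)$ (where it reproduces the Fischer et al.\ $\Omega(\log n)$ bound, because $q=o(\log n)$ queries restricted to a random $\Theta(\log n)$-subset are likely to be mutually comparable), but there is no clear way to scale it up to $k=n^{\Theta(1)}$, and your proposal offers no replacement for the moment-matching plus multidimensional Berry--Ess\'een engine that actually produces the $n^{1/5}(\log n)^{-2/5}$ rate. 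Your adaptive $\Rightarrow$ non-adaptive reduction at the end is correct and matches the paper.
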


While the aforementioned results of Fischer {\it et al.}~represent the previous best lower bounds on the general testing problem as defined above, additional lower
bounds are known for several restricted versions of the problem.\ignore{there has been significant work on several of its relaxations and variants.}  In the same paper Fischer {\it et al.}~gave an $\Omega(\sqrt{n})$ lower bound on the query complexity of any non-adaptive \emph{one-sided} tester, i.e.~one that always outputs {\sf Yes} when $f$ is monotone (again, this directly implies an $\Omega(\log n)$ lower bound for adaptive one-sided testers).  Restricting further, a \emph{pair tester} is a non-adaptive one-sided tester that independently draws pairs of comparable points $x \prec y$ from some distribution and
rejects if and only if  some pair that is drawn violates monotonicity. Bri\"et {\it et al.}~\cite{BCGM12} proved an $\Omega(n/(\eps\log n))$ lower bound on the query complexity of pair testers whose query complexity can be written as $q(n)/\eps$ for some function $q.$

In addition to Theorem \ref{main-theorem-lb},
we show that essentially the same lower bound holds for monotonicity testing of
Boolean-valued functions
over hypergrid domains $\{1,\dots,m\}^n$ for $m \geq 2.$
(Below and throughout this paper we write $[m]$ to denote $\{1,2,\dots,m\}$.)
Our most general  lower bound is the following:

\begin{theorem}
\label{main-hypergrid}
There exists a universal constant $\eps_0 > 0$ such that for all $m\ge 2$, any non-adaptive algorithm for testing whether an unknown function $f:[m]^n\to \{-1,1\}$ is monotone versus $\eps_0$-far from monotone must make
$\tilde{\Omega}(n^{1/5})$ queries.
\end{theorem}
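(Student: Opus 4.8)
\medskip
\noindent\textbf{Proof proposal.}
The plan is to reduce the hypergrid problem to the Boolean hypercube case of Theorem~\ref{main-theorem-lb} by a ``folding'' reduction. Fix $m\ge 2$. Given a coordinatewise, order-preserving, surjective map $\pi:[m]^n\to\{-1,1\}^n$ (to be chosen below, possibly randomized) and a Boolean function $g:\{-1,1\}^n\to\{-1,1\}$, let $\hat g:=g\circ\pi:[m]^n\to\{-1,1\}$. I would establish: (i) $g$ monotone $\Rightarrow$ $\hat g$ monotone (immediate, since $\pi$ is order preserving); (ii) $g$ is $\eps_0$-far from monotone over $\{-1,1\}^n$ $\Rightarrow$ $\hat g$ is $\Omega(\eps_0)$-far from monotone over $[m]^n$; and (iii) any non-adaptive $q$-query algorithm for monotonicity over $[m]^n$, run on $\hat g$, is simulated by a non-adaptive $q$-query algorithm querying $g$ directly (when the former queries $x$, query $g$ at $\pi(x)$; if $\pi$ is random, the simulator draws $\pi$ itself). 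Given (i)--(iii), a non-adaptive tester for $[m]^n$-monotonicity with the constant $\eps_0'=\Omega(\eps_0)$ from part~(ii) yields a non-adaptive tester for $\{-1,1\}^n$-monotonicity with the constant $\eps_0$ of Theorem~\ref{main-theorem-lb}, and so must make $\tilde\Omega(n^{1/5})$ queries. (Equivalently, push the hard Yes/No distributions of Theorem~\ref{main-theorem-lb} through $\pi$ and apply Yao's principle.)

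The heart of the matter is (ii). Here I would use the standard fact that a Boolean function that is $\eps$-far from monotone on a poset $P$ has at least $\tfrac{\eps}{2}|P|$ pairwise-disjoint \emph{violated} comparable pairs: for a maximal matching $M$ of violated pairs, $P\setminus V(M)$ is an independent set in the violation graph, so $h$ restricted to it is monotone and has no ``conflicts'', hence extends to a monotone function on $P$ differing from $h$ only on $V(M)$, giving $\eps|P|\le|V(M)|=2|M|$. So let $M$ be such a matching for $g$, with $|M|=\Omega(\eps_0 2^n)$. When $m$ is \emph{even}, take $\pi$ to be the balanced threshold $\pi(x)_j=-1\iff x_j\le m/2$; then every fiber $\pi^{-1}(y)$ is a box of size $(m/2)^n$, distinct fibers are disjoint, and for each matched pair $(y\prec y')\in M$ there is a bijection $\beta:\pi^{-1}(y)\to\pi^{-1}(y')$ with $x\prec\beta(x)$ always (the identity on the coordinates where $y=y'$, add $m/2$ on the others, using that the low interval lies entirely below the high one), yielding $(m/2)^n$ disjoint violated pairs of $\hat g$ per matched pair. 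Hence $\hat g$ has $\ge|M|\cdot(m/2)^n=\Omega(\eps_0 m^n)$ pairwise-disjoint violated pairs, so $\dist(\hat g,\mathrm{mono})=\Omega(\eps_0)$ with \emph{no} asymptotic loss. When $m$ is \emph{odd} no coordinatewise order-preserving surjection onto $\{-1,1\}$ is balanced, so I would instead let $\pi$ use an independent uniformly random threshold in each coordinate and lower bound $\mathbb E_\pi[\dist(\hat g,\mathrm{mono})]$ by the same box-matching argument; a matched pair of Hamming length $s$ then contributes, in expectation, $\Omega\!\big((m/2)^n(1-m^{-2})^{s/2}\big)$, which is $\Omega_m((m/2)^n)$ once $s=O(m^2)$. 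So the odd-$m$ reduction goes through provided the no-instances of Theorem~\ref{main-theorem-lb} can be certified far from monotone by a matching of \emph{bounded-length} violated pairs; if not, the alternative is to re-run the construction underlying Theorem~\ref{main-theorem-lb} directly over $[m]^n$, replacing the hypercube gadgets by hypergrid gadgets, which should require only cosmetic changes to its analysis. Finally, (iii) gives indistinguishability for free: conditioned on $\pi$, the query points $\pi(x^{(1)}),\dots,\pi(x^{(q)})$ are a fixed non-adaptive set of $\le q$ points of $\{-1,1\}^n$, so any distinguishing advantage over the lifted distributions is a distinguishing advantage of a $q$-query non-adaptive algorithm over the original ones.

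I expect the main obstacle to be the odd-$m$ case of (ii): when the fibers of $\pi$ cannot be made balanced, controlling the multiplicative loss in $\dist(\hat g,\mathrm{mono})$ requires pinning down the ``length scale'' of the violations in the hard instances of Theorem~\ref{main-theorem-lb} (or else carrying out its construction natively over $[m]^n$). A related technical point, also stemming from the randomness of $\pi$ in the odd case, is ensuring that $\hat g$ is far from monotone with probability close to $1$ rather than merely in expectation; I would address this with a concentration bound on the per-pair contributions, together if needed with a mild amplification of the Yes/No gap, or by a direct distributional argument.
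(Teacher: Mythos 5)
Your even-$m$ reduction is essentially the paper's: both reduce to the $m=2$ lower bound by composing with the coordinatewise balanced threshold, show monotonicity is preserved, and use the disjoint-violated-pairs characterization of distance (the paper cites \cite{FLN+02} Lemma~4 for this) together with a fiber-bijection argument to show the $\eps$-far property is preserved with no asymptotic loss. That part is correct and matches Proposition~\ref{distance-preserving} in the paper.

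The odd-$m$ case is where you and the paper diverge, and here your proposal has a genuine gap. Your random-per-coordinate threshold introduces unbalanced fibers, so a matched pair $(y\prec y')$ at Hamming distance $s$ contributes only $\prod_{j:y_j\neq y'_j}\min(\theta_j,m-\theta_j)\cdot\prod_{j:y_j=y'_j}|\pi^{-1}(y_j)_j|$ disjoint violated pairs, and controlling the expectation of this product requires bounding $s$. But nothing in Theorem~\ref{main-theorem-lb} or its proof guarantees that the hard no-instances (random LTFs $\sign(\bnu\cdot x)$ with $\bnu_i\in\{-1,7/3\}$) admit a $\Theta(2^n)$-sized matching of \emph{bounded-length} violated pairs; Proposition~\ref{far-from-monotone} establishes constant distance via Parseval on degree-$1$ Fourier coefficients, which certifies large influences (hence many length-$1$ violated edges) but a max-degree bound on the hypercube only yields a matching of violated \emph{edges} of size $\Omega(2^n/\sqrt n)$, giving distance $\Omega(1/\sqrt n)$, not $\Omega(1)$. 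So your first route would require a new structural claim about the hard instances, and your fallback (``re-run the construction over $[m]^n$'') is not spelled out.

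The paper's odd-$m$ fix is a different idea that sidesteps all this: rather than randomizing the threshold in each coordinate, it groups $k=\Theta(\log n)$ hypergrid coordinates into a single Boolean coordinate via a \emph{deterministic}, near-balanced monotone gadget $h:[m]^k\to\{-1,1\}$ with $|h^{-1}(1)|=|h^{-1}(-1)|+1$, together with an order-increasing bijection $\Psi:h^{-1}(-1)\to h^{-1}(1)$ (Lemma~\ref{odd-m-lemma}). Because the balance defect is only $1$ out of $m^k\approx n^{\log m}$, the fibers over $\pm 1$ are within a $(1-n^{-\log m})$ factor of each other, and the fiber-bijection argument goes through with only a constant-factor loss, independent of the Hamming length of the matched pair. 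The price is that $N:=n\lceil\log n\rceil$ hypergrid coordinates encode $n$ Boolean ones, so the $\Omega(n^{1/5}/\mathrm{polylog}\,n)$ bound becomes $\tilde\Omega(N^{1/5})$ — the polylog loss the statement of Theorem~\ref{main-hypergrid} already absorbs. This trick is what makes the odd case clean; you'd want to find it (or prove the bounded-length-matching structural fact) before your reduction closes.
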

To the best of our knowledge
Theorem \ref{main-hypergrid} is the first lower bound for testing monotonicity of \emph{Boolean-valued} functions over hypergrid domains. Recent papers of Chakrabarty and Sesha\-dhri \cite{CS13b,CS13c} and Blais {\it et al.}~\cite{BRY13} essentially close the problem of testing monotonicity~of functions $f:[m]^n \to\N$, showing that $\Theta(n\log m)$ queries are both necessary and sufficient; however, their lower bounds crucially depend on the functions considered having range $\N$ rather than $\{-1,1\}$.  \ignore{Finally, we mention that the work of Dodis {\it et al.}~\cite{DGL+99} gave an algorithm for $\eps$-testing the monotonicity of functions $f:[m]^n\to\bits$ with query complexity $O((n/\eps)\log^2(n/\eps))$ (independent of $m$), thus ruling out any dependence on $m$ in a lower bound.}

\medskip

{\bf Our algorithm.}  We present a new algorithm for monotonicity testing and prove the following
result about its performance:

\begin{theorem} \label{main-theorem-alg}
There is a $\tilde{O}(n^{5/6}\eps^{-4})$-query
one-sided non-adaptive algorithm for testing whether an unknown $n$-variable Boolean
function is monotone versus $\eps$-far from monotone.
\end{theorem}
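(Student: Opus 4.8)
\bigskip
\noindent\textbf{Proof approach.} The plan is to keep the architecture of the Chakrabarty--Seshadhri tester and sharpen its analysis. The algorithm is a \emph{multi-scale comparable-pair tester}: it repeats $\tilde O(n^{5/6}\poly(1/\eps))$ times the trial ``sample a scale $k$ from a fixed distribution on $\{1,\dots,n\}$ (essentially log-uniform, but re-weighted as indicated below), then sample $(\bx,\by)\sim\mathcal D_k$, where $\mathcal D_k$ picks a uniform $k$-subset $S$ of coordinates, sets $\bx$ to $-1$ and $\by$ to $1$ on $S$ and puts a common uniform assignment off $S$ (so $\bx\prec\by$ at Hamming distance $k$); query $f(\bx)$ and $f(\by)$ and reject iff $f(\bx)=1>-1=f(\by)$.'' Non-adaptivity and one-sidedness are immediate: all queries are fixed in advance, and the tester rejects only upon witnessing an actual violating pair, so a monotone $f$ always passes. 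Thus correctness reduces to the structural statement that if $f$ is $\eps$-far from monotone then, for some $k$ in the support of the scale distribution,
\[
\Pr_{(\bx,\by)\sim\mathcal D_k}\bigl[f(\bx)=1 \text{ and } f(\by)=-1\bigr]\ \geq\ \frac{\poly(\eps)}{n^{5/6}}.
\]
Given this, a single trial rejects with probability at least $\poly(\eps)/(n^{5/6}\log n)$ (the $\log n$ charging the choice of the good $k$), so $\tilde O(n^{5/6}\poly(1/\eps))$ trials suffice by a Chernoff bound; the whole point is to improve the exponent in this inequality from the $7/8$ of \cite{CS13a} to $5/6$, at the price of a worse $\poly(\eps)$ factor.

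To establish the structural statement I would begin from the classical fact that an $\eps$-far $f$ admits a family of $\Omega(\eps 2^n)$ pairwise-disjoint violating pairs $\bx\prec\by$, and run a \emph{persistence/robustness dichotomy} in the spirit of \cite{CS13a}. Call a $1$-valued point $\tau$-persistent if a uniformly random ``lift'' that turns $\tau$ of its $-1$-coordinates into $1$'s keeps the value $1$ with probability $\geq 9/10$, and symmetrically for $-1$-valued points and random ``drops''. Case (i): an $\Omega(\poly(\eps))$ fraction of points fail $\tau$-persistence for a suitable $\tau$; then the scale-$\tau$ trial already rejects with probability $\Omega(\poly(\eps)/\tau)$, since lifting a non-persistent $1$-point produces, with constant probability, a $-1$-point above it, i.e.\ a violating pair at distance $\leq\tau$ (and symmetrically). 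Case (ii): all but a $\poly(\eps)$ fraction of points are $\tau$-persistent; then discard the non-persistent endpoints of the disjoint violating pairs, losing only a constant fraction, and work with $\Omega(\eps 2^n)$ disjoint violating pairs $\bx\prec\by$ both of whose endpoints are persistent. For such a pair at distance $d$, walking $\bx$ up and $\by$ down toward each other by $\approx k$ steps should, by persistence, typically \emph{not} change $f$ unless $d$ is large relative to $k$, so a value change --- hence a violated $\mathcal D_k$-type pair --- is forced at scale $k$, and one aggregates over the disjoint pairs and over the random intermediate points.

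The main obstacle is squeezing the exponent down to $5/6$ in case (ii). A crude aggregation recovers only the $7/8$ bound of \cite{CS13a}: a long-range violation pushed down to scale $k$ certifies essentially one short violated pair, these certificates collide across different long-range violations, and violations sitting at ``bad'' scales --- those $k$ for which the pair count $\binom{n}{k}2^{n-k}$ is enormous --- contribute almost nothing. To do better I would (a) bucket coordinates by a local ``violation-influence'' parameter and treat high- and low-influence coordinates separately, so the walks connecting $\bx$ to $\by$ are biased toward coordinates actually carrying violations; (b) interpose a Cauchy--Schwarz / second-moment step to upgrade an average statement over the disjoint persistent pairs into the stronger assertion that a $\poly(\eps)/n^{5/6}$ fraction of \emph{all} $\mathcal D_k$-pairs are violated for a single well-chosen $k$; (c) where convenient, pass to a uniformly random restriction onto a subcube of dimension $\Theta(k)$ to localize the counting, bounding the resulting loss in distance-to-monotonicity; and (d) optimize the two free parameters, the persistence length $\tau$ and the scale $k$, balancing the $1/\tau$ loss of case (i) against the scale-dependent loss of case (ii) --- this re-balancing both pins the exponent at $5/6$ and is essentially what the ``modification'' of the \cite{CS13a} algorithm amounts to (a re-weighted scale distribution), and it is the source of the $\eps^{-4}$ dependence. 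A further technical point is that the walk connecting $\bx$ to $\by$ lives on the structured coordinate set on which they differ, not on a uniformly random set, so it must be reconciled with the (symmetric) persistence guarantee; I expect this and the optimization in (d) to be the crux, while the rest --- the Chernoff bound for the repetition count, handling the rare points with atypically many or few $1$-coordinates so that the lifts/drops are well defined, and the $O(\log n)$ union bound over scales --- is routine.
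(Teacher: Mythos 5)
There is a genuine gap, and you name it yourself: ``A crude aggregation recovers only the $7/8$ bound.'' Your algorithm (a multi-scale correlated-pair tester with $\bx$ obtained from $\by$ by flipping $k$ random $1$-coordinates, with $k$ drawn from a log-uniform-type scale distribution) is essentially the same algorithm the paper uses, so the problem is entirely in the analysis, and your analysis as sketched does not close the gap. The paper does \emph{not} use a persistence dichotomy. It keeps the CS13a dichotomy in its original matching form (every $\eps$-far $f$ satisfies $v\cdot\sigma = \Omega(\eps^2)$, where $v2^n$ is the number of violated edges and $\sigma2^n$ the size of the largest matching of violated edges in the middle layers) and combines the new path tester with the ordinary edge tester. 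The entire improvement from exponent $7/8$ to $5/6$ comes from upgrading CS13a's rejection-probability bound for the path tester from $\tilde\Omega(\sigma^3/\sqrt n)$ to $\tilde\Omega(\sigma^2\,\poly(\eps)/\sqrt n)$ when there is a matching of $\sigma 2^n$ violated edges; balancing $\tilde\Omega(\sigma^2/\sqrt n)$ against the edge-tester bound $\Omega(v/n)$ under $v\sigma=\Omega(\eps^2)$ is exactly what yields $n^{5/6}$.

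The missing idea is how to actually prove that $\sigma^2$ bound, and this is where your items (a)--(d) stay at the level of heuristics. The paper's mechanism is concrete and does not involve influence-bucketing, random restrictions, or rebalancing a persistence length $\tau$. Instead: (i) it shows via Jensen's inequality that if both endpoints of a uniformly random path are drawn independently from its middle layers, the probability that both land in a fixed $\sigma$-fraction set $A$ is $\tilde\Omega(\sigma^2)$ (Lemma~\ref{BBL}); (ii) it converts this into a lower bound on $\sum_{x\in A}\score^\down(x,A)$, where $\score^\down(x,A)=\sum_k \Pr_{\by\preceq x,\,\|\by-x\|_1=k}[\by\in A]$, which is the quantitative ``on average an element of $A$ has $\tilde\Omega(\sigma)$ downward density of $A$'' statement (Lemma~\ref{downward-score}); (iii) it pigeonholes over $O(\log n)$ dyadic buckets of distance scales to find one scale range $B_\ell$ contributing $\tilde\Omega(\sigma^2\cdot 2^\ell)$ to that sum (Corollary~\ref{bucket}), and transfers the bound from lower endpoints $A$ to upper endpoints $A_u$ of the matching (Corollary~\ref{bucket2}); and (iv) it observes that picking $\by$ uniformly from the middle layers, then the bucket $\ell$ uniformly, then $\bk\in B_\ell$ uniformly, then $\bx$ at downward distance $\bk$, costs exactly a $|B_\ell|\cdot\log n\cdot|L_{\mathrm{mid}}|$ normalization which the $2^\ell$ gain cancels, leaving $\tilde\Omega(\sigma^2\eps^8/\sqrt n)$. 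The reason this beats CS13a is that after fixing $\by\in A_u$ (probability $\approx\sigma$), the weighted choice of scale means the conditional probability that $\bx$ lands in $A$ is $\tilde\Omega(\sigma/\sqrt n)\cdot\poly(\eps)$ rather than $\tilde\Omega(\sigma^2/\sqrt n)$; the square loss in CS13a's version comes from drawing $\bx$ independently rather than correlated with $\by$. Your proposal gestures at correlation (``re-weighted scale distribution'') but never carries out the score-counting / bucketing argument that actually realizes the $\sigma^3\to\sigma^2$ improvement, so as written the proof does not go through.
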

Recall that the one-sided, non-adaptive tester of Chakrabarty and Seshadhri \cite{CS13a}
makes
$\tilde{O}(n^{7/8}\eps^{-3/2})$ queries.  Thus, while the query complexity of our tester
is worse as a function of $1/\eps$ (though still polynomial), its query complexity
is polynomially better as a function of $n$.\footnote{Recall that in property testing the dependence
on the size parameter ``$n$'' is typically viewed as more important than the dependence
on the ``closeness'' parameter $\eps$.  Indeed, $\eps$ is often viewed as a constant, so testers
with query complexities that are exponential (or worse) as a function of $1/\eps$ but independent
of $n$ are commonly referred to as ``constant-query testers.''}
Like the \cite{CS13a} algorithm, our algorithm is a pair tester, but it evades
the $\Omega(n/(\eps \log n))$ lower bound of
\cite{BCGM12} because its query complexity is not of the form $q(n)/\eps$.
Our algorithm builds on the tools developed in \cite{CS13a}; its high-level
structure is similar to that of the \cite{CS13a}
algorithm, but with an important difference that enables an improved
analysis. See Section \ref{sec:ub-approach} for more discussion on this point.

\subsection{The lower bound approach} \label{sec:lb-approach}


Our lower bound for testing monotonicity
builds on previous lower bounds for testing restricted classes of \emph{linear threshold functions} (LTFs).  Recall that $f: \{-1,1\}^n \to \{-1,1\}$ is a
linear threshold function if there exist $w_1,\dots,w_n, \theta \in \R^n$ such that $f(x) = \sign(w \cdot x - \theta)$ for all $x \in \{-1,1\}^n.$

\medskip
{\bf Background.} A \emph{signed majority function} is a linear threshold function of the special form $f(x) = \sign(w \cdot x)$ where $w \in \{-1,1\}^n$.
While \cite{MORS10} showed that the class of all LTFs~is $\eps$-testable using $\poly(1/\eps)$ queries (independent of $n$),
in \cite{MORS09} Matulef \emph{et al.\!} gave an $\Omega(\log n)$ lower bound for non-adaptive algorithms that $\eps_0$-test whether
$f: \{-1,1\}^n \to \{-1,1\}$ is a signed majority function, where $\eps_0>0$ is a universal constant.
{Like many lower bound arguments in property testing,} the proof of \cite{MORS09}
{employs Yao's minimax principle \cite{Yao77}}, and works~by exhibiting two distributions $\calD_{yes}$ and $\calD_{no}$ over LTFs --- more precisely, $\calD_{yes}$ is the uniform
distribution over all $2^n$ signed majority
functions, and $\calD_{no}$ is the uniform distribution over a set of LTFs almost all of which are constant-far from every signed majority function ---
and arguing that for $q=o(\log n)$, any deterministic $q$-query algorithm cannot distinguish between the two distributions with non-negligible success
probability.   (We note that a typical function from $\calD_{yes}$ is far from being monotone, and that the same
holds for a typical LTF drawn from  the $\calD_{no}$ distribution of \cite{MORS09}.)  A key tool in the \cite{MORS09} proof is
the Berry--Ess\'een ``central limit theorem (CLT) with error bounds'' for sums of independent real-valued random variables.

An \emph{embedded majority function of size $k$} is an LTF $f: \{-1,1\}^n \to \{-1,1\}$ of the form $f(x)=\sign(w \cdot x)$ where $w \in \{0,1\}^n$
is a vector with exactly $k$ ones.  In \cite{BO10} Blais and O'Donnell showed that for $k=n/2$, any non-adaptive testing algorithm for the class
of all embedded majority functions of size exactly $n/2$ must make $\Omega(n^{1/12})$ queries.  Their proof employed a $\calD_{yes}$ distribution which is the
uniform distribution over all embedded majority functions of size $n/2$, and a $\calD_{no}$ distribution which is supported on certain monotone
LTFs (which are far from embedded majority functions of size $n/2$).  A key technical ingredient in the proofs of \cite{BO10} is a multidimensional
extension of the Berry--Ess\'een theorem (to independent sums of $\R^q$-valued random variables) which was essentially
established in the work of \cite{GOWZ10}, building on ingredients from \cite{Mos08}.  Subsequently Ron and Servedio \cite{RS13} adapted the arguments of \cite{BO10} to give an improved analysis
of the same $\calD_{yes}$ and $\calD_{no}$ distributions from \cite{MORS09} and establish an
$\Omega(n^{1/12})$-query lower bound for non-adaptive algorithms that $\eps_0$-test whether
$f: \{-1,1\}^n \to \{-1,1\}$ is a signed majority function, thus exponentially improving over the \cite{MORS09} lower bounds for this problem.

\medskip
{\bf This work.}  Neither the \cite{BO10} construction nor the \cite{MORS09,RS13} construction can be used directly to establish a lower bound for
monotonicity testing {of functions $f: \{-1,1\}^n \to \{-1,1\}$}; as described above, in the \cite{BO10} construction both the $\calD_{yes}$ and $\calD_{no}$ functions
are monotone, and in the \cite{MORS09,RS13} construction a typical function from either distribution is far from monotone.  Nevertheless, in this work we
show that ingredients from \cite{BO10,RS13} can be leveraged to obtain a polynomial lower bound for testing
monotonicity {of functions $f: \{-1,1\}^n \to \{-1,1\}$}.  {Like these earlier works we
employ Yao's principle:}  we define a $\calD_{yes}$ distribution that is supported on monotone LTFs, and a $\calD_{no}$ distribution over LTFs that is almost entirely supported
on LTFs that are constant-far from every monotone function, and use an analysis which is fairly similar to that of
\cite{BO10,RS13}, to prove Theorem \ref{main-theorem-lb}.  Using the multidimensional Berry--Ess\'een theorem of
\cite{GOWZ10} to analyze our $\calD_{yes}$ and $\calD_{no}$ distributions would result in an $\Omega(n^{1/12})$ lower bound.  To obtain our improved
$\Omega(n^{1/5} \log^{-2/5} n)$ lower bound, we instead adapt a multidimensional CLT of Valiant and Valiant~\cite{VV11} (for Wasserstein distance) to our context.


\subsection{The approach of our algorithm} \label{sec:ub-approach}

Our algorithm builds on ingredients from \cite{CS13a}, so to
explain our approach we first recall the necessary ingredients from that work.
Fix a Boolean function\footnote{For our algorithmic result
it will be more convenient to view Boolean functions as mapping $\{0,1\}^n$ to $\{0,1\}$.}$f: \{0,1\}^n \to \{0,1\}$, and
let us say that a pair of inputs $(x,y)$ with $x \prec y$ is
a \emph{violated edge} if $f(x)=1,f(y)=0$ and $(x,y)$ is an edge
in $\{0,1\}^n$ (i.e. the Hamming distance between them is 1).
\cite{CS13a} establishes a very useful ``dichotomy theorem'' about Boolean
functions $f: \{0,1\}^n \to \{0,1\}$ that are $\eps$-far from monotone:
for any $s > 0$, any such function either must have $\Omega(\eps s 2^n)$
violated edges, or must have a \emph{matching} (i.e.~a vertex-disjoint set) of $\Omega(\eps 2^n/s)$
violated edges.

To use this dichotomy theorem,
Chakrabarty and Seshadhri \cite{CS13a} define a ``path tester''
which works essentially as follows:  it selects a random
directed path $\bp$ of $n$ edges from $0^n$ up to $1^n$, draws two uniform
random points $\bx \prec \by$ from the ``middle layers'' of $\bp$,
and rejects if $\bx$ and $\by$ violate monotonicity, i.e.
$f(\bx)=1$ and $f(\by)=0$.\footnote{Here the ``middle layers'' of $\bp$
are the points on the path that have $n/2 \pm O_\eps(\sqrt{n})$ many
coordinates which are $1$; intuitively, at most an $\eps$-fraction
of all points in $\{0,1\}^n$ lie outside these ``middle layers''
of the hypercube.  We note that the above description is
 a slight simplification of the actual \cite{CS13a}
path tester, omitting some
details which are not necessary at this stage of our description.}  They prove
that if $f$ has a matching of $\Omega(\sigma 2^n)$ violated
edges, then their path tester will uncover a violation and reject
with probability $\tilde{\Omega}(\sigma^3/\sqrt{n})$.  (Roughly speaking, they show that about an $\Omega(\sigma)$ fraction of possible outcomes of $\by$, corresponding to the $\sigma 2^n$ upper endpoints of the edges in the matching, are such that with probability $\tilde{\Omega}(\sigma^2/\sqrt{n})$ over the random draw of $\bx$, the pair
$\by$ and $\bx$ together constitute a violation.) On
the other hand, if $f$ does not have a matching of this size then
(by the dichotomy theorem) it must have $\Omega((\eps^2/\sigma)2^n)$ violated
edges, so the edge tester of \cite{GGL+98} (querying the endpoints of a
uniform random edge) will hit a violated edge with probability
$\Omega(\eps^2/(\sigma n))$.  Their final algorithm runs their path tester
with probability $1/2$ and queries a random edge with probability $1/2$.
Choosing $\sigma$ suitably to equalize the two rejection probabilities,
this is a two-query algorithm which succeeds in uncovering a violation
for any $\eps$-far-from-monotone function $f$ with probability
$\tilde{\Omega}(\eps^{3/2}/n^{7/8})$, giving them a one-sided non-adaptive
tester which makes $\tilde{O}(n^{7/8}/\eps^{3/2})$ queries overall.

Our algorithm follows the same high-level framework described above,
but differs from \cite{CS13a} by employing a different path tester.
After selecting a random path $\bp$, instead of (essentially) drawing two
independent uniform points from the middle layers of the path as is
done in \cite{CS13a},
our path tester draws a \emph{correlated} pair of points from $\bp$.
More precisely, it selects the first point $\by$ independently
from the middle layers of $\bp$,
and preferentially selects the second point $\bx$ from
$\bp$ in a way which favors points which are closer to $\by$.  Via a careful
analysis we are able to show that if $f$ has a matching of
$\Omega(\sigma 2^n)$ violated edges, then our
path tester will uncover a violation and reject
with probability $\tilde{\Omega}(\sigma^2/\sqrt{n})\cdot \poly(\eps)$.
Roughly speaking, we show that if $\by$ is a uniform random upper endpoint of the $\sigma 2^n$
edges in the matching (which occurs with probability about $\sigma$), then the probability that our tester selects a  string $\bx$ which gives a violation with $\by$ is $\tilde{\Omega}(\sigma/\sqrt{n})\cdot\poly(\eps).$ Trading this off against the success probability of the edge tester
using the dichotomy theorem, we obtain our improved query bound.

\medskip
{\bf Organization of this paper.}
Our lower bound results are established  Sections \ref{two-dist} through
\ref{sec:hypergrid}.
The two distributions $\calD_{yes}$ and $\calD_{no}$ are defined at the beginning of Section~\ref{two-dist}. In Section~\ref{sec:far-from-monotone} we show that with high probability an LTF drawn from $\calD_{no}$ is constant-far from monotone, and in Section~\ref{sec:cant-distinguish} we show that unless
$q=\Omega(n^{1/5} (\log n)^{-2/5})$, any deterministic $q$-query algorithm cannot distinguish between the two distributions with non-negligible success probability. The key technical ingredient in our proof of the latter is a lemma that adapts the Valiant--Valiant multidimensional CLT for Wasserstein distance to our context; we prove this lemma in Section~\ref{sec:vv}.
Finally in Section~\ref{sec:hypergrid} we prove Theorem~\ref{main-hypergrid}, showing that the same lower bound of $\tilde{\Omega}(n^{1/5})$ also applies to the query complexity of testers for monotonicity of functions $f:[m]^n\to\zo$ over general hypergrid domains; we do so via a reduction to the $m=2$ case (Theorem~\ref{main-theorem-lb}).

Our algorithmic result is established in Section \ref{sec-alg}.
In Section~\ref{sec-two-useful-dist} we describe two useful distributions over comparable pairs $(\bx,\by)$
  from the middle layers of $\{0,1\}^n$ and bound the probability of having both points landing
  in a fixed set $A$ of size $\sigma 2^n$.
Then in Section~\ref{sec-alg-score} we define the \emph{score} of a point $x$ with respect to a set
  $A$ of points, and use the result of Section~\ref{sec-two-useful-dist} to lower bound the sum of
  $\score(x,A)$ over all points $x\in A$.
We present our modified path tester as well as the analysis of its success probability
  in Section~\ref{our-path-tester}.
Finally in Section~\ref{sec-alg-proof} we combine this tester and the dichotomy theorem of \cite{CS13a}
  to obtain our improved upper bound.

\subsection{Preliminaries}

All probabilities and expectations are with respect to the uniform distribution unless otherwise stated; we will use boldface letters (e.g. $\bx$ and $\bX$) to denote random variables.    For a $q\times n$ matrix $Q \in \R^{q\times n}$, we write $Q_{i*} \in \R^n$ to denote its $i$-th row, $Q_{*j} \in \R^q$ its $j$-th column, and $Q_{i,j} \in \R$ its entry in the $i$-th column and $j$-th row.  We write $\prec$ to denote the coordinate-wise partial order on $\bn$, where $x \prec y$ iff $x_i \le y_i$ for all $i\in [n]$ and $x\ne y$.  We say that $x$ and $y$ are $\emph{comparable}$ if $x \prec y$, $y \prec x$, or $x=y$.  Given two functions $f,g:\bn\to\bits$ we will use $\dist(f,g)$ to denote the (normalized Hamming) distance $\Prx_{\bx\in\bn}[f(\bx)\ne g(\bx)]$ between $f$ and $g$.\vspace{0.035cm}

Recall that $f:\bn\to\bits$ is monotone if $f(x) \le f(y)$ for all $x,y \in \bn$ such that $x\prec y$. We say that $f$ is \emph{$\eps$-close to monotone} if $\dist(f,g) \le \eps$ for some monotone $g:\bn\to \bits$, and \emph{$\eps$-far from monotone} otherwise. A linear threshold function (LTF) over $\bn$~is a function $f:\bn\to{\bits}$ that can be expressed as $f(x) = \sign(w\cdot x - \theta)$ for some $w_1,\ldots,w_n$, $\theta\in \R$.  Here $\sign : \R \to\bits$ is the sign function $\sign(t) = 1$ if $t\ge 0$ and $\sign(t) = -1$ if $t < 0$.  {For $f(x) = \sign(w \cdot x - \theta)$,
an LTF over $\bn$, it is straightforward to verify that if $w_i \ge 0$ for all $i \in [n]$ then
$f$ is monotone.}\vspace{0.02cm}

We will need a few standard facts from probability theory:

\begin{fact}[Gaussian anti-concentration]
\label{gaussian-anti-concentration}
Let $\calG$ be a Gaussian with variance $\sigma^2$. Then for all $\eps > 0$
it holds that
$\sup_{\theta \in \R}\big\{ \Pr\big[|\calG-\theta| \le \eps\sigma \big]\big\} \le \eps.$
\end{fact}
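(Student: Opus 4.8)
The plan is to reduce everything to the elementary observation that the density of a Gaussian is uniformly bounded in terms of its standard deviation. Write $\mu$ for the mean of $\calG$, so that its density is $\phi(t) = \tfrac{1}{\sigma\sqrt{2\pi}}\exp\!\big(-(t-\mu)^2/(2\sigma^2)\big)$. The only structural fact I need is that $\phi(t) \le \tfrac{1}{\sigma\sqrt{2\pi}}$ for every $t$, since the exponential factor is at most $1$ (it attains its maximum at $t=\mu$).

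Given this, fix an arbitrary $\theta \in \R$. The event $|\calG - \theta| \le \eps\sigma$ is exactly the event that $\calG$ lands in the interval $[\theta - \eps\sigma,\ \theta + \eps\sigma]$, which has length $2\eps\sigma$. Integrating the density over this interval and using the pointwise bound above gives
\[
\Pr\big[|\calG - \theta| \le \eps\sigma\big] \;=\; \int_{\theta-\eps\sigma}^{\theta+\eps\sigma}\phi(t)\,dt \;\le\; 2\eps\sigma \cdot \frac{1}{\sigma\sqrt{2\pi}} \;=\; \eps\sqrt{\tfrac{2}{\pi}}.
\]
Since $\sqrt{2/\pi} = 0.7978\ldots < 1$, the right-hand side is at most $\eps$, and since $\theta$ was arbitrary we may take the supremum over $\theta$ to obtain the claimed inequality.

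There is no real obstacle here: the entire content is the uniform upper bound on the Gaussian density together with the trivial estimate ``probability mass in an interval $\le$ (length of interval) $\times$ (max density).'' The only thing to be slightly careful about is bookkeeping the constant $\sqrt{2/\pi}$ and confirming it is strictly below $1$ so that the clean bound $\eps$ (rather than, say, $\eps\sqrt{2/\pi}$) holds; alternatively one could simply note $2/\sqrt{2\pi} = \sqrt{2/\pi} < 1$ without computing its decimal value.
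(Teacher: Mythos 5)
Your proof is correct and is the canonical argument for this fact: bound the Gaussian density uniformly by $\frac{1}{\sigma\sqrt{2\pi}}$, multiply by the interval length $2\eps\sigma$, and observe $\sqrt{2/\pi}<1$. The paper states this as a standard fact without giving a proof, so there is no alternative argument to compare against; yours is exactly what one would write.
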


\begin{fact}[Gaussian concentration]
\label{gaussian-concentration}
Let $\calG$ be a Gaussian with mean {0} and variance $\sigma^2$. Then for all {$0<a<1$
it holds that $\Pr\big[\calG \in [0,a\sigma] \big]  = \Omega(a).$}
\end{fact}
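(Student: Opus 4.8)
The plan is to reduce to the standard normal and then bound the Gaussian density from below by an absolute constant on the relevant interval. First I would write $\calG = \sigma\cdot\bZ$ where $\bZ$ is a standard Gaussian (assuming $\sigma>0$; if $\sigma=0$ there is nothing to prove). By the scaling invariance of the normal distribution, $\Pr[\calG\in[0,a\sigma]] = \Pr[\sigma\bZ\in[0,a\sigma]] = \Pr[\bZ\in[0,a]] = \int_0^a \varphi(t)\,dt$, where $\varphi(t) = \frac{1}{\sqrt{2\pi}}e^{-t^2/2}$ is the standard Gaussian density.

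Next, since $\varphi$ is monotonically decreasing on $[0,\infty)$ and $a<1$, for every $t\in[0,a]$ we have $\varphi(t)\ge \varphi(a) \ge \varphi(1) = \frac{1}{\sqrt{2\pi}}\,e^{-1/2}$. Integrating this pointwise bound over $[0,a]$ gives
\[
\Pr[\calG\in[0,a\sigma]] \;=\; \int_0^a \varphi(t)\,dt \;\ge\; \frac{a}{\sqrt{2\pi e}} \;=\; \Omega(a),
\]
which is the claimed inequality, with the implicit constant $\frac{1}{\sqrt{2\pi e}}$ being universal (in particular independent of $\sigma$ and $a$).

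There is no real obstacle here: the only point requiring the slightest care is ensuring that the constant hidden in the $\Omega(\cdot)$ does not depend on $\sigma$ or on $a$, which is precisely why we pass to the standard normal (removing the $\sigma$ dependence) and lower-bound the density by its value at the fixed endpoint $t=1$ rather than at $t=a$ (removing the $a$ dependence), using the hypothesis $a<1$ to guarantee $[0,a]\subseteq[0,1]$.
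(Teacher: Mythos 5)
Your proof is correct. The paper states Fact~\ref{gaussian-concentration} without proof, treating it as a standard fact, and your argument (rescale to the standard normal and lower-bound the density on $[0,a]\subseteq[0,1]$ by $\varphi(1)=1/\sqrt{2\pi e}$) is exactly the routine verification one would give; the explicit constant is universal as required.
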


\begin{theorem}[Berry--Ess\'een]
\label{be}
Let $\bS = \bX_1 + \cdots + \bX_n$ where $\bX_1,\ldots,\bX_n$ are independent real-valued random variables with $\E[\bX_j] = \mu_j$ and $\Var[\bX_j] = \sigma_j^2$, and suppose that $| \bX_j - \E[\bX_j]|\le \tau$ with pro\-bability $1$ for all $j\in [n]$. Let $\calG$ be a Gaussian with mean $\sum_{j=1}^n \mu_j$ and variance $\sum_{j=1}^n\sigma_j^2$, matching those of $\bS$. Then for all $\theta\in \R$, we have
\[ \big| \Pr[\bS \le \theta] - \Pr[\calG \le \theta] \big| \le \frac{O(\tau)}{\big(\sum_{j=1}^n \sigma_j^2\big)^{1/2}}.\]
\end{theorem}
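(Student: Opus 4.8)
The plan is to prove this by the classical characteristic-function (Fourier) argument of Ess\'een; the stated bound is just the Berry--Ess\'een inequality, with the boundedness hypothesis used to control third moments by second moments.

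\textbf{Step 1 (normalization).} Since the claim is invariant under translating each $\bX_j$, assume $\mu_j = 0$ for all $j$, set $s := \big(\sum_{j=1}^n\sigma_j^2\big)^{1/2}$ and $\bX_j' := \bX_j/s$, so that $\bS/s = \sum_{j=1}^n\bX_j'$ has mean $0$ and $\sum_{j=1}^n\Var[\bX_j'] = 1$. Because $|\bX_j|\le\tau$ almost surely we have $\E[|\bX_j|^3]\le\tau\,\E[\bX_j^2] = \tau\sigma_j^2$, hence
\[ \rho\ :=\ \sum_{j=1}^n\E\big[\,|\bX_j'|^3\,\big]\ =\ \frac{1}{s^3}\sum_{j=1}^n\E\big[\,|\bX_j|^3\,\big]\ \le\ \frac{\tau}{s^3}\sum_{j=1}^n\sigma_j^2\ =\ \frac{\tau}{s}, \]
and also $\Var[\bX_j'] = \sigma_j^2/s^2 \le \tau^2/s^2$. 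It suffices to prove $\sup_\theta\big|\Pr[\bS/s\le\theta] - \Pr[\calG/s\le\theta]\big| = O(\tau/s)$, and we may assume $\tau/s$ is below a small absolute constant, since otherwise $O(\tau/s)$ already exceeds $1$ and there is nothing to prove.

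\textbf{Step 2 (Ess\'een smoothing).} Recall Ess\'een's smoothing inequality: for any CDF $F$ and any CDF $G$ with density bounded by $M$, and any $T>0$,
\[ \sup_{x\in\R}\big|F(x) - G(x)\big|\ \le\ \frac{1}{\pi}\int_{-T}^{T}\left|\frac{\widehat F(t) - \widehat G(t)}{t}\right|dt\ +\ \frac{c_1 M}{T}, \]
$c_1$ a universal constant, $\widehat F,\widehat G$ the characteristic functions. Apply it with $F$ the CDF of $\bS/s$, $G$ the standard Gaussian CDF (so $M=(2\pi)^{-1/2}$), $\widehat F(t) = \phi(t):=\prod_{j=1}^n\E[e^{it\bX_j'}]$, $\widehat G(t) = e^{-t^2/2}$, and $T := s/\tau$. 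Then $c_1 M/T = O(\tau/s)$, so everything reduces to bounding $|\phi(t) - e^{-t^2/2}|$ for $|t|\le T$.

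\textbf{Step 3 (the Fourier estimate).} Write $v_j(t):=\E[e^{it\bX_j'}]$. From $\E[\bX_j'] = 0$ and $|e^{iu}-1-iu+u^2/2|\le|u|^3/6$ we get $v_j(t) = 1 - \tfrac12(\sigma_j^2/s^2)t^2 + r_j(t)$ with $|r_j(t)|\le\tfrac16|t|^3\E[|\bX_j'|^3]$. On $|t|\le T = s/\tau$ the boundedness hypothesis (giving $\E[|\bX_j'|^3]\le(\tau/s)(\sigma_j^2/s^2)$ and $\sigma_j^2/s^2\le\tau^2/s^2$) makes $r_j(t)$ small compared with the quadratic term, so $|v_j(t)|\le e^{-(\sigma_j^2/s^2)t^2/3}$, whence $|\phi(t)|\le e^{-t^2/3}$, and moreover $\big|v_j(t) - e^{-(\sigma_j^2/s^2)t^2/2}\big|\le\tfrac16|t|^3\E[|\bX_j'|^3] + \tfrac18(\sigma_j^2/s^2)^2 t^4$. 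Comparing $\phi(t)=\prod_j v_j(t)$ with $e^{-t^2/2}=\prod_j e^{-(\sigma_j^2/s^2)t^2/2}$ through the telescoping bound $\big|\prod_j a_j - \prod_j b_j\big|\le\sum_j|a_j-b_j|\prod_{k<j}|a_k|\prod_{k>j}|b_k|$ (and using that every factor in sight has modulus at most $e^{-(\sigma_k^2/s^2)t^2/3}$) gives, for $|t|\le T$,
\[ \big|\phi(t) - e^{-t^2/2}\big|\ \le\ O(1)\cdot e^{-t^2/3}\sum_{j=1}^n\Big(|t|^3\,\E[|\bX_j'|^3] + (\sigma_j^2/s^2)^2\,t^4\Big)\ \le\ O(1)\cdot e^{-t^2/3}\Big(|t|^3\rho + \frac{\tau^2}{s^2}\,t^4\Big), \]
the last step using $\sum_j(\sigma_j^2/s^2)^2\le(\tau^2/s^2)\sum_j(\sigma_j^2/s^2) = \tau^2/s^2$. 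Feeding this into Ess\'een's inequality and integrating, $\int_{-T}^{T}|\phi(t)-e^{-t^2/2}|/|t|\,dt = O(\rho) + O(\tau^2/s^2) = O(\tau/s)$ (using $\rho\le\tau/s\le1$ and $\int_\R t^2 e^{-t^2/3}\,dt<\infty$); combined with Step 2 and rescaling by $s$, this proves the theorem.

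\textbf{Main obstacle.} This is a classical estimate, so there is no deep difficulty; the one delicate point is the bookkeeping in Step 3. One must choose the cutoff $T$ and the numerical constants so that, \emph{simultaneously} for all $j$ and all $|t|\le T$, each factor $v_j(t)$ is dominated in modulus by a genuine Gaussian factor $e^{-\Omega((\sigma_j^2/s^2)t^2)}$ (so that, after telescoping, the product of the remaining factors contributes only a harmless $e^{-\Omega(t^2)}$), while at the same time the accumulated errors $\sum_j\big(|t|^3\E[|\bX_j'|^3] + (\sigma_j^2/s^2)^2 t^4\big)$ stay integrable against $1/|t|$ and sum to $O(\tau/s)$. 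The hypothesis $|\bX_j-\mu_j|\le\tau$ is precisely what makes this balance work: it forces both $\E[|\bX_j'|^3]\le(\tau/s)\Var[\bX_j']$ and $\Var[\bX_j']\le(\tau/s)^2$, keeping every error term under control exactly on the range $|t|\le s/\tau$. The remaining ingredients --- reducing to $\tau/s$ small, invoking Ess\'een's smoothing lemma, and the convergence of $\int t^2 e^{-t^2/3}\,dt$ --- are entirely routine.
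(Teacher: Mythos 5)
The paper states Theorem~\ref{be} as a standard, well-known fact in the Preliminaries and gives no proof at all --- it is the classical Berry--Ess\'een inequality, cited as background and used as a black box (e.g., in bounding the term $\Gamma$ in the proof of Theorem~\ref{vv-corollary}). So there is no paper proof to compare against. Your proposal is a correct rendition of the classical Ess\'een characteristic-function argument: the normalization and the bound $\E[|\bX_j'|^3]\le(\tau/s)\Var[\bX_j']$ are right; the reduction to $\tau/s$ small is legitimate; the smoothing lemma with $T=s/\tau$ is the standard choice; and the key estimates in Step 3 check out --- in particular, on $|t|\le T$ one has $(\sigma_j^2/s^2)t^2\le(\tau^2/s^2)(s^2/\tau^2)=1$, which is exactly what makes both $|v_j(t)|\le e^{-(\sigma_j^2/s^2)t^2/3}$ and the telescoping bound go through, and the final integration gives $O(\rho)+O(\tau^2/s^2)=O(\tau/s)$ as claimed. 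Modulo the routine bookkeeping you already flag, the argument is complete and is the expected textbook proof.
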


\begin{fact}
\label{binomial-weight}
For all $c > 0$ there exists an $\eps = \eps(c) \in (0,1]$ such that the following holds. For all even (resp.\ odd) $n$,
\[ \Prx_{\bx \in\bn}\Bigg[\sumi\bx_i = k\Bigg] \ge \frac{\eps}{\sqrt{n}}\ \text{ for all even (resp.\ odd) integers $k \in [-c\sqrt{n},c\sqrt{n}]$}. \]
\end{fact}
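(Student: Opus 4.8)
The plan is to reduce the statement to a one-line estimate on a single binomial coefficient. Write $\bx\in\bn$ as having exactly $m$ coordinates equal to $+1$, so that $\sum_{i=1}^n\bx_i=2m-n$; then for any integer $k$ of the same parity as $n$ with $|k|\le c\sqrt n$ the event $\{\sum_i\bx_i=k\}$ is precisely the event that $m=(n+k)/2$ coordinates equal $+1$, and hence $\Prx_{\bx\in\bn}[\sum_i\bx_i=k]=\binom{n}{(n+k)/2}\big/2^n$. So it suffices to prove $\binom{n}{m}/2^n\ge\eps(c)/\sqrt n$ uniformly over all $m$ with $|m-n/2|\le c\sqrt n/2$.

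For $n$ above an appropriate constant multiple of $c^2$ I would split $\binom{n}{m}/2^n$ as $\big(\binom{n}{m}/\binom{n}{\lfloor n/2\rfloor}\big)\cdot\big(\binom{n}{\lfloor n/2\rfloor}/2^n\big)$ and bound the two factors separately. The second factor is the standard ``central binomial'' estimate $\binom{n}{\lfloor n/2\rfloor}/2^n\ge c_0/\sqrt n$ for an absolute constant $c_0>0$ and every $n\ge 1$; for even $n=2\ell$ this follows from the telescoping identity $\binom{2\ell}{\ell}/4^\ell=\prod_{j=1}^\ell(1-\tfrac1{2j})$ together with the elementary inequality $(1-\tfrac1{2j})^2\ge\tfrac{j-1}{j}$, which gives $\big(\binom{2\ell}{\ell}/4^\ell\big)^2\ge\tfrac14\prod_{j=2}^\ell\tfrac{j-1}{j}=\tfrac{1}{4\ell}$ (the odd case is handled by passing to $n-1$). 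For the first factor, assume without loss of generality $m=\lfloor n/2\rfloor+t$ with $0\le t\le c\sqrt n/2$ (the $t<0$ case is symmetric); the ratio of consecutive binomial coefficients gives $\binom{n}{m}/\binom{n}{\lfloor n/2\rfloor}=\prod_{j=1}^t\frac{\lceil n/2\rceil-j+1}{\lfloor n/2\rfloor+j}$, and since each factor is decreasing in $j$ it is at least $\frac{n/2-t}{n/2+t}\ge 1-\frac{4t}{n}$ (the floors cost only a harmless additive $O(1/n)$). Hence this factor is at least $(1-4t/n)^t$, and once $n$ is large enough that $4t/n\le\tfrac12$, the bound $\ln(1-u)\ge-2u$ on $[0,\tfrac12]$ gives $(1-4t/n)^t\ge e^{-8t^2/n}\ge e^{-2c^2}$, using $t^2\le c^2 n/4$. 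Multiplying the two estimates yields $\binom{n}{m}/2^n\ge c_0 e^{-2c^2}/\sqrt n$ for all such $n$ and all admissible $m$.

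It remains to absorb the finitely many small values of $n$: whenever $|k|\le c\sqrt n$ and $k$ has the right parity, $\Prx_{\bx\in\bn}[\sum_i\bx_i=k]\ge\binom{n}{0}/2^n=2^{-n}$, which for $n$ bounded by a constant (depending only on $c$) is itself a constant times $1/\sqrt n$. Taking $\eps(c)$ to be the minimum of $c_0 e^{-2c^2}$ and the bound coming from the small cases, clamped into $(0,1]$, then proves the fact for all $n$. I do not expect a genuine obstacle: the only things to track are the parity bookkeeping of $n$ in the ratio step and the minor point that $|k|\le c\sqrt n$ keeps $m$ inside $\{0,\dots,n\}$ only once $n\ge c^2$ (again a small-$n$ issue). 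One remark worth making is that Theorem~\ref{be} cannot be used here: Berry--Ess\'en controls $\Pr[\bS\le\theta]$ only up to an additive $O(1/\sqrt n)$ error, which is the same order as the atom $\Pr[\bS=k]$ we must lower bound, so a genuinely local argument of the above kind is needed.
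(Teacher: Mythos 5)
The paper states Fact~\ref{binomial-weight} without proof, treating it as a standard estimate, so there is no paper argument to compare against. Your proof is correct and is the natural ``local'' argument: reduce to a lower bound on a single binomial coefficient and split it into the central-binomial estimate times the ratio $\binom{n}{m}/\binom{n}{\lfloor n/2\rfloor}$. I checked the individual steps: the telescoping identity $\binom{2\ell}{\ell}/4^\ell=\prod_{j=1}^\ell(1-\tfrac1{2j})$ and the inequality $(1-\tfrac1{2j})^2\ge\tfrac{j-1}{j}$ give $\binom{2\ell}{\ell}/4^\ell\ge 1/(2\sqrt\ell)$; the ratio product $\prod_{j=1}^t\frac{\lceil n/2\rceil-j+1}{\lfloor n/2\rfloor+j}$ is as you wrote and each factor is indeed $\ge\frac{n/2-t}{n/2+t}\ge 1-4t/n$; the inequality $\ln(1-u)\ge -2u$ holds on $[0,\tfrac12]$ (the function $\ln(1-u)+2u$ vanishes at $0$ and is increasing there); and your remark that Berry--Ess\'een only gives the CDF up to $O(1/\sqrt n)$, the same order as a single atom, is a correct reason to argue locally.

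One small point worth flagging, which is really a defect of the statement rather than of your proof: as written, ``for all even (resp.\ odd) $n$'' cannot literally hold when $n<c^2$, since then $[-c\sqrt n,c\sqrt n]$ contains integers $k$ with $|k|>n$ and the probability is exactly $0$. Your patch $\Pr[\cdot]\ge 2^{-n}$ for small $n$ only applies once $|k|\le n$, which you correctly observe requires $n\ge c^2$. The fact should be read (and is only used in the paper) for $n$ above a constant depending on $c$, in which case your argument, together with clamping $\eps(c)$ to account for the remaining finitely many $n\ge c^2$, gives the claimed bound. As a tiny bookkeeping note: for odd $n$ the map $m\mapsto n-m$ sends $\lfloor n/2\rfloor+t$ to $\lfloor n/2\rfloor-t+1$ rather than $\lfloor n/2\rfloor-t$, so the ``symmetric'' negative-$t$ case needs a shift by one, but this costs nothing.
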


Finally we recall a few basic facts from the Fourier analysis over the hypercube which we require (for a comprehensive treatment of this topic see~\cite{odonnell}).  Every function $f:\bn\to\R$ can be uniquely expressed as a multilinear polynomial
\[ f(x) = \sumS \hatf(S) \prod_{i\in S}\orange{x_i}\quad \text{where $\hatf(S) := \Ex_{\bx\in\bn}\bigg[f(\bx)\prod_{i\in S} \bx_i\bigg]$,} \]
known as the \emph{Fourier transform} of $f$.  The numbers $\hatf(S) \in \R$ are the \emph{Fourier coefficients} of $f$; with a slight abuse of notation we will write $\hat{f}(i)$ instead of $\hatf(\{i\})$ for the degree-$1$ Fourier coefficients.
\begin{fact}[Parseval's identity]
Let $f:\bn\to\R$. Then
\[ \Ex_{\bx\in\bn}[f(\bx)] = \sumS \hatf(S)^2. \]
\end{fact}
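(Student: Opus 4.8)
The plan is to prove Parseval's identity by substituting the Fourier expansion into $f(x)^2$ and invoking the orthonormality of the monomials $\chi_S(x) := \prod_{i \in S} x_i$ with respect to the uniform distribution on $\bn$. (Note that the intended statement is $\Ex_{\bx\in\bn}[f(\bx)^2] = \sum_S \hatf(S)^2$, which is the form used in later Fourier-analytic arguments; this is what I would establish.)

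First I would write $f(x) = \sum_{S \subseteq [n]} \hatf(S)\,\chi_S(x)$ and square, obtaining the \emph{finite} double sum
\[ f(x)^2 = \sum_{S \subseteq [n]}\sum_{T \subseteq [n]} \hatf(S)\hatf(T)\,\chi_S(x)\chi_T(x). \]
Since $x_i \in \{-1,1\}$ implies $x_i^2 = 1$, any index lying in both $S$ and $T$ contributes a factor $x_i^2 = 1$, so $\chi_S(x)\chi_T(x) = \chi_{S \triangle T}(x)$, where $\triangle$ denotes symmetric difference.

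Next I would take the expectation over uniform $\bx \in \bn$ and interchange it with the finite sum. The only probabilistic fact needed is that $\Ex_{\bx\in\bn}[\chi_U(\bx)] = \prod_{i \in U} \Ex_{\bx_i}[\bx_i]$ by independence of the coordinates, which equals $1$ when $U = \emptyset$ and $0$ otherwise. Applying this with $U = S \triangle T$, every off-diagonal term (those with $S \neq T$, equivalently $S \triangle T \neq \emptyset$) vanishes, while each diagonal term $S = T$ contributes exactly $\hatf(S)^2$. Summing over $S$ yields $\sum_S \hatf(S)^2$, as desired.

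There is no substantial obstacle in this argument: the Fourier expansion is a finite sum (at most $2^n$ terms), so there are no convergence issues, and the whole proof reduces to the elementary fact that $\{\chi_S\}_{S\subseteq[n]}$ is an orthonormal family in $L^2(\bn)$ under the uniform measure. The only points that warrant an explicit sentence are the simplification $\chi_S\chi_T = \chi_{S\triangle T}$ via $x_i^2 = 1$ and the (trivial, since finite) interchange of sum and expectation.
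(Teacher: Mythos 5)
Your proof is correct, and you were right to flag the typo: as printed, the paper's statement reads $\Ex_{\bx\in\bn}[f(\bx)]$, but Parseval's identity is of course $\Ex_{\bx\in\bn}[f(\bx)^2] = \sum_{S}\hatf(S)^2$, which is what the paper uses (e.g.\ in the proof of Proposition~\ref{far-from-monotone}, where $\Ex[(f-g)^2]$ is expanded). The paper states this as a standard Fact without proof, deferring to~\cite{odonnell}, so there is no in-paper argument to compare against; your argument is the standard one, expanding $f^2$ via the Fourier decomposition, using $\chi_S\chi_T=\chi_{S\triangle T}$ (since $x_i^2=1$), and invoking orthonormality $\Ex[\chi_U(\bx)]=\ind[U=\emptyset]$ by coordinate independence. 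Nothing is missing; the finiteness of the sum (over $2^n$ subsets) makes the interchange of sum and expectation automatic, as you note.
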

For $i \in [n]$, the \emph{influence of coordinate $i$ on $f$}, denoted $\Inf_i[f]$, is the probability
\[ \Inf_i[f] := \Prx_{\bx \in\bn}\big[f(\bx)\ne f(\bx^{\oplus i})\big], \]
where $\bx^{\oplus i}$ denotes the string $\bx$ with its $i$-th coordinate flipped.   The following fact relates the influences of an LTF to its degree-$1$ Fourier coefficients:

 \begin{fact}
\label{linear-coeff-influence}
Let $f(x) = \sign(w_1x_1 + \cdots + w_n x_n-\theta)$ be an LTF over $\bn$.  Then for all $i\in [n]$, $\Inf_i[f] = \hat{f}(i)$ if $w_i \ge 0$ and $\Inf_i[f] = -\hat{f}(i)$ if $w_i < 0$.
\end{fact}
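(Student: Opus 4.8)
The plan is to relate both $\Inf_i[f]$ and $\hat{f}(i)$ to the discrete derivative of $f$ along coordinate $i$, and then to exploit the fact that an LTF is nondecreasing (resp.\ nonincreasing) in a coordinate according to the sign of the corresponding weight.

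First I would introduce, for $x\in\bn$ and $b\in\bits$, the restriction $x^{i\to b}$ obtained from $x$ by fixing its $i$-th coordinate to $b$, together with the discrete derivative $D_i f(x):=\tfrac12\big(f(x^{i\to 1})-f(x^{i\to-1})\big)$, a function that does not depend on $x_i$. Then I would record two standard identities. Since $f$ is Boolean-valued, $D_i f(x)\in\{-1,0,1\}$ with $|D_i f(x)|=1$ exactly when $f(x^{i\to 1})\ne f(x^{i\to-1})$, i.e.\ exactly when flipping the $i$-th coordinate of $x$ changes the value of $f$; as $\{\bx,\bx^{\oplus i}\}=\{\bx^{i\to1},\bx^{i\to-1}\}$ for uniform $\bx$, this gives $\Inf_i[f]=\Ex_{\bx\in\bn}\big[|D_i f(\bx)|\big]$. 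On the other hand, reading off the Fourier expansion of $f$ yields $D_i f(x)=\sum_{S\ni i}\hat{f}(S)\prod_{j\in S\setminus\{i\}}x_j$, so that $\Ex_{\bx\in\bn}[D_i f(\bx)]=\hat{f}(i)$.

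The key step is to pin down the sign of $D_i f$ pointwise. Fixing every coordinate of $x$ except the $i$-th and writing $c:=\sum_{j\ne i}w_j x_j-\theta$, we have $f(x^{i\to b})=\sign(w_i b+c)$. Because $t\mapsto\sign(t)$ is nondecreasing, when $w_i\ge 0$ we get $f(x^{i\to1})\ge f(x^{i\to-1})$ and hence $D_i f(x)\ge 0$, while when $w_i<0$ the inequality reverses and $D_i f(x)\le 0$; since the remaining coordinates were arbitrary, $D_i f$ is pointwise nonnegative if $w_i\ge 0$ and pointwise nonpositive if $w_i<0$. In the first case $|D_i f(\bx)|=D_i f(\bx)$ identically, so $\Inf_i[f]=\Ex_{\bx\in\bn}[D_i f(\bx)]=\hat{f}(i)$; in the second case $|D_i f(\bx)|=-D_i f(\bx)$ identically, so $\Inf_i[f]=-\Ex_{\bx\in\bn}[D_i f(\bx)]=-\hat{f}(i)$, which is exactly the claim.

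I do not expect any real obstacle here: the only points needing a moment's care are the two identities above (both immediate once one observes that $D_i f$ is $\{-1,0,1\}$-valued and has the stated degree-shifted Fourier expansion) and the $\sign$ convention that breaks ties upward, which is precisely what makes $t\mapsto\sign(t)$ genuinely nondecreasing and lets the monotonicity argument go through even in the boundary case $w_i b+c=0$.
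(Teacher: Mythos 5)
Your proof is correct. The paper states this as a standard preliminary ``Fact'' without giving a proof, so there is nothing to compare against; your argument (expressing $\Inf_i[f]=\Ex[|D_if|]$ and $\hat f(i)=\Ex[D_if]$ via the discrete derivative, then observing that the sign of $w_i$ forces $D_if$ to have a fixed pointwise sign because $\sign$ is nondecreasing under the paper's tie-breaking convention) is exactly the standard argument that the paper implicitly relies on.
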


\section{The lower bound:  Proof of Theorem~\ref{main-hypergrid}}
\label{two-dist}
Let $\calD_{yes}$ be the following distribution over monotone LTFs {on $\bn$}: a draw $\boldf_{yes} \sim\calD_{yes}$ is $\boldf_{yes}(x) = \sign(\bsigma_1 x_1 + \cdots + \bsigma_n x_n)$, where each $\bsigma_i$ is independently and uniformly chosen from $\{1,3\}$.  The distribution $\calD_{no}$ is similarly a distribution over LTFs $\boldf_{no}(x) = \sign(\bnu_1x_1 + \cdots + \bnu_nx_n)$, but each $\bnu_i$ is independently chosen to be $-1$ with probability $1/10$, and $7/3$ with probability $9/10$.  The following two propositions along with a standard application of Yao's minimax principle~\cite{Yao77} yield Theorem~\ref{main-hypergrid}:

\begin{proposition}
\label{far-from-monotone}
There exists a universal positive constant $\eps_0 > 0$ such that with probability $1-o_n(1)$, a random LTF $\boldf_{no} \sim \calD_{no}$ satisfies $\dist(\boldf_{no},g) > \eps_0$ for all monotone Boolean functions $g:\bn\to\bits$.
\end{proposition}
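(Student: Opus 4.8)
The plan is to reduce the claim to two ingredients: a Fourier-analytic lower bound on $\dist(\boldf_{no},g)$ for monotone $g$ in terms of the ``negative-direction'' influences of $\boldf_{no}$, and a lower bound showing that each such influence is $\Omega(1/\sqrt n)$ with probability essentially $1$ over $\calD_{no}$. Write $\bnu=(\bnu_1,\dots,\bnu_n)$ for the random weight vector, put $N:=\{i:\bnu_i=-1\}$ and $P:=\{i:\bnu_i=7/3\}$, and let $\mathcal E$ be the event that $|N|\in[0.09\,n,\,0.11\,n]$ (so in particular $|P|/|N|\le 11$), which holds with probability $1-o_n(1)$ by a Chernoff bound. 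It suffices to prove the conclusion conditioned on $\mathcal E$.

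\emph{Fourier lower bound.} First I would record the standard fact that every monotone $g:\bn\to\bits$ has $\hat g(i)\ge 0$ for all $i$ (fix all coordinates but $i$ and note $\tfrac12(g(x^{i\to 1})-g(x^{i\to -1}))\ge 0$). Fix a monotone $g$ and set $\eps:=\dist(\boldf_{no},g)$; since $\boldf_{no},g$ are $\bits$-valued, $\Ex_{\bx}[(\boldf_{no}(\bx)-g(\bx))^2]=4\eps$, so Parseval's identity gives $\sum_{i\in N}(\hat\boldf_{no}(i)-\hat g(i))^2\le 4\eps$, and Cauchy--Schwarz gives $\sum_{i\in N}|\hat\boldf_{no}(i)-\hat g(i)|\le 2\sqrt{|N|\eps}$. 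For $i\in N$ we have $\bnu_i=-1<0$, so Fact~\ref{linear-coeff-influence} yields $\hat\boldf_{no}(i)=-\Inf_i[\boldf_{no}]$; together with $\hat g(i)\ge 0$ this gives
\[ \sum_{i\in N}\Inf_i[\boldf_{no}] \;=\; \sum_{i\in N}\bigl(-\hat\boldf_{no}(i)\bigr) \;\le\; \sum_{i\in N}\bigl(\hat g(i)-\hat\boldf_{no}(i)\bigr) \;\le\; 2\sqrt{|N|\,\eps}. \]
Hence once we show $\sum_{i\in N}\Inf_i[\boldf_{no}]=\Omega(\sqrt{|N|})$ on $\mathcal E$, we conclude $\eps=\Omega(1)$; since $g$ was an arbitrary monotone function this proves the proposition, with $\eps_0$ the resulting absolute constant.

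\emph{Influence lower bound (the crux).} It remains to show, on $\mathcal E$, that $\Inf_i[\boldf_{no}]=\Omega(1/\sqrt n)$ for \emph{every} $i\in N$. For uniform $\bx\in\bn$ write $A:=\sum_{j\in P}\bx_j$ and $B':=\sum_{j\in N\setminus\{i\}}\bx_j$; a short computation shows that the edge in direction $i$ through $\bx$ is bichromatic exactly when $7A-3B'\in[-3,3)$, so $\Inf_i[\boldf_{no}]=\Pr_{\bx}[\,7A-3B'\in[-3,3)\,]$. The key observation is that for each \emph{fixed} value of $A$, the condition $7A-3B'\in[-3,3)$ forces $3B'$ into a window of six consecutive integers, which contains exactly two multiples of $3$; hence there is exactly one value $B^\ast(A)$ of $B'$ of the correct parity satisfying it, and moreover $|B^\ast(A)|\le \tfrac73|A|+1$. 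Thus $\Inf_i[\boldf_{no}]=\E_A[\Pr[B'=B^\ast(A)]]$. Now take the constant $c$ in Fact~\ref{binomial-weight} large enough (depending only on the bound $|P|/|N|\le 11$): since $A$ has mean $0$ and variance $|P|$, Chebyshev gives $\Pr_A[\,|A|\le\tfrac37(c\sqrt{|N|-1}-1)\,]\ge\tfrac12$, and for every such $A$ the value $B^\ast(A)$ has the right parity and satisfies $|B^\ast(A)|\le c\sqrt{|N|-1}$, so Fact~\ref{binomial-weight} gives $\Pr[B'=B^\ast(A)]\ge\eps(c)/\sqrt{|N|-1}$. Hence $\Inf_i[\boldf_{no}]\ge \tfrac12\,\eps(c)/\sqrt{|N|}=\Omega(1/\sqrt n)$, uniformly over $i\in N$, and summing over $N$ gives $\sum_{i\in N}\Inf_i[\boldf_{no}]=\Omega(|N|/\sqrt n)=\Omega(\sqrt{|N|})$, as required.

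\emph{Main obstacle.} The hard step is this influence lower bound. A direct appeal to Berry--Ess\'een (Theorem~\ref{be}) only controls the cdf of $7A-3B'$ up to an additive $O(1/\sqrt n)$, which is exactly the order of the quantity being bounded from below, so it is useless here; the parity-and-counting step isolating a single ``good'' value $B^\ast(A)$ is precisely what turns the estimate into a statement about a single atom of the distribution of $B'$, which can then be handled by the local lower bound of Fact~\ref{binomial-weight}. The remaining points are routine bookkeeping: the Chernoff bound defining $\mathcal E$, the arithmetic identifying the bichromatic condition with $7A-3B'\in[-3,3)$, and checking that $B^\ast(A)$ has the parity forced by $|N\setminus\{i\}|$.
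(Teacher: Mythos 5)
Your proposal is correct and takes essentially the same route as the paper: a Chernoff bound to restrict to typical weight vectors, an $\Omega(1/\sqrt{n})$ lower bound on $\Inf_i[\boldf_{no}]$ for each negatively-weighted coordinate via the local anti-concentration estimate of Fact~\ref{binomial-weight}, and the observation $\hat g(i)\ge 0$ for monotone $g$ combined with Parseval to turn this into a distance lower bound. The only cosmetic difference is in the Fourier step, where you go through Cauchy--Schwarz ($\sum_{i\in N}\Inf_i \le 2\sqrt{|N|\eps}$) while the paper bounds $\sum_{i\in N}(\hat f(i)-\hat g(i))^2 \ge \sum_{i\in N}\hat f(i)^2$ directly; both yield $\eps=\Omega(1)$.
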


\begin{proposition}
\label{cant-distinguish}
Let $\calT$ be any deterministic non-adaptive two-sided $q$-query algorithm for testing whether a black-box Boolean function $f:\bn\to\bits$ is monotone. Then
\begin{equation} \bigg| \Prx_{\boldf_{yes}\sim\calD_{yes}}\Big[\text{$\calT$ {outputs {\sf Yes} on} $\boldf_{yes}$}\Big] -
\Prx_{\boldf_{no}\sim\calD_{no}}\Big[\text{$\calT$ {outputs {\sf Yes} on} $\boldf_{no}$}\Big]\bigg|  =
O\bigg(\frac{q^{5/4}(\log n)^{1/2}}{n^{1/4}}\bigg).
\label{eq:cant-distinguish} \end{equation}
\end{proposition}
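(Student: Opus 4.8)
The plan is to upper bound the left-hand side of~\eqref{eq:cant-distinguish} by the total variation distance between the distribution of the length-$q$ vector of query answers under $\calD_{yes}$ and the corresponding distribution under $\calD_{no}$, and then to control this total variation distance by combining a multidimensional central limit theorem with Gaussian anti-concentration. Since $\calT$ is deterministic and non-adaptive it is specified by a fixed query matrix $Q\in\{-1,1\}^{q\times n}$ (whose $i$-th row $Q_{i*}$ is the $i$-th query point) together with a decision function $g\colon\{-1,1\}^q\to\{\text{\sf Yes},\text{\sf No}\}$; on input $f$ it outputs $g\big(f(Q_{1*}),\dots,f(Q_{q*})\big)$. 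Hence the left-hand side of~\eqref{eq:cant-distinguish} is at most $d_{\mathrm{TV}}(\mathbf{a}^{yes},\mathbf{a}^{no})$, where $\mathbf{a}^{yes}:=(\boldf_{yes}(Q_{1*}),\dots,\boldf_{yes}(Q_{q*}))$ and $\mathbf{a}^{no}$ is defined analogously. Setting $\bS^{yes}:=\sum_{j=1}^n\bsigma_j Q_{*j}\in\R^q$ and $\bS^{no}:=\sum_{j=1}^n\bnu_j Q_{*j}\in\R^q$, we have $\mathbf{a}^{yes}=\sign(\bS^{yes})$ and $\mathbf{a}^{no}=\sign(\bS^{no})$ (applying $\sign$ coordinate-wise), and each of $\bS^{yes},\bS^{no}$ is a sum of $n$ independent $\R^q$-valued random vectors, namely the $\bsigma_j Q_{*j}$ (resp.\ the $\bnu_j Q_{*j}$).

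The key point is that the weight distributions have been engineered so that $\E[\bsigma_j]=\E[\bnu_j]=2$ \emph{and} $\Var[\bsigma_j]=\Var[\bnu_j]=1$. Since the $j$-th summand of each sum is a scalar multiple of the \emph{same} fixed vector $Q_{*j}$, it follows that $\bS^{yes}$ and $\bS^{no}$ have the same mean vector $\boldsymbol\mu:=2\sum_{j=1}^n Q_{*j}$ and the same covariance matrix $\Sigma:=\sum_{j=1}^n Q_{*j}Q_{*j}^{\top}=QQ^{\top}$; in particular each coordinate $i\in[q]$ has mean $\mu_i=2\sum_{j=1}^n Q_{ij}$ and variance $\Sigma_{ii}=\sum_{j=1}^n Q_{ij}^2=n$. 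Let $\calG\sim\mathcal N(\boldsymbol\mu,\Sigma)$ be the Gaussian matching these moments. The plan is now twofold: (i) use the multidimensional Wasserstein CLT proved in Section~\ref{sec:vv} to show that both $\bS^{yes}$ and $\bS^{no}$ are close to $\calG$ in $(\ell_1\text{-})$Wasserstein distance $W_1$; and (ii) convert this Wasserstein closeness into total variation closeness of the sign patterns, using that $\calG$ places little mass near the coordinate hyperplanes.

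For step (ii), fix a coupling of $\bS^{yes}$ and $\calG$ achieving $\E\big[\sum_{i=1}^q|\bS^{yes}_i-\calG_i|\big]=W_1(\bS^{yes},\calG)$, and set $\delta_i:=\E[|\bS^{yes}_i-\calG_i|]$, so that $\sum_{i=1}^q\delta_i=W_1(\bS^{yes},\calG)$. For any $t>0$, the signs of coordinate $i$ can disagree between $\bS^{yes}$ and $\calG$ only if $|\calG_i|\le t$ or $|\bS^{yes}_i-\calG_i|>t$; by Fact~\ref{gaussian-anti-concentration} (with $\sigma=\sqrt n$) the former has probability at most $t/\sqrt n$, and by Markov's inequality the latter has probability at most $\delta_i/t$. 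Taking $t:=n^{1/4}\delta_i^{1/2}$, summing over $i\in[q]$, and applying Cauchy--Schwarz gives
\[
d_{\mathrm{TV}}\big(\sign(\bS^{yes}),\sign(\calG)\big)\ \le\ \Pr\big[\sign(\bS^{yes})\ne\sign(\calG)\big]\ \le\ \frac{2}{n^{1/4}}\sum_{i=1}^q\sqrt{\delta_i}\ \le\ \frac{2\sqrt q}{n^{1/4}}\Big(\sum_{i=1}^q\delta_i\Big)^{1/2}\ =\ \frac{2\sqrt q}{n^{1/4}}\,W_1(\bS^{yes},\calG)^{1/2}.
\]
The identical bound holds with $\bS^{no}$ in place of $\bS^{yes}$, so the triangle inequality for $d_{\mathrm{TV}}$ yields $d_{\mathrm{TV}}(\mathbf{a}^{yes},\mathbf{a}^{no})=O\big(\sqrt q\cdot\max\{W_1(\bS^{yes},\calG),W_1(\bS^{no},\calG)\}^{1/2}/n^{1/4}\big)$. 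Thus, to land exactly on the right-hand side of~\eqref{eq:cant-distinguish} it suffices for the lemma of Section~\ref{sec:vv} to give a bound of order $q^{3/2}\log n$ on these two Wasserstein distances.

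This last point --- establishing a strong enough Wasserstein CLT for sums of bounded $\R^q$-valued random vectors --- is the main obstacle. Using the multidimensional Berry--Ess\'een theorem of~\cite{GOWZ10} here (as in~\cite{BO10,RS13}) is too lossy and would yield only an $\Omega(n^{1/12})$ lower bound; the improvement to $\Omega(n^{1/5}(\log n)^{-2/5})$ comes from adapting the Valiant--Valiant Wasserstein CLT~\cite{VV11} to the present setting in Section~\ref{sec:vv}, where care is needed to control the relevant boundedness/moment parameters of the summands $\bsigma_j Q_{*j}$ and $\bnu_j Q_{*j}$ and to pin down the dependence on $q$ together with the $\log n$ factor.
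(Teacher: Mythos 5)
Your setup is exactly the paper's: reduce the left-hand side of~\eqref{eq:cant-distinguish} to the total variation distance of the length-$q$ response vectors, write $\bS^{yes}=Q\bsigma$ and $\bS^{no}=Q\bnu$ as sums of $n$ independent rank-one $\R^q$-valued summands, and observe that the weight distributions $\{1,3\}$ and $\{-1,7/3\}$ were engineered so that both sums share the same mean $2\sum_j Q_{*j}$ and covariance $QQ^\top$, hence the same Gaussian proxy $\calG$. Where you diverge from the paper is in converting Wasserstein closeness to $\bS\mapsto\calG$ into closeness of the sign patterns, and here there is a genuine quantitative gap.

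The issue is in step (ii). You pick a coupling and for each coordinate $i$ apply Markov separately to $|\bS_i-\calG_i|$ with a per-coordinate threshold $t_i=n^{1/4}\delta_i^{1/2}$, landing on $d_{\mathrm{TV}}\le (2/n^{1/4})\sum_i\sqrt{\delta_i}\le (2\sqrt q/n^{1/4})(\sum_i\delta_i)^{1/2}$. But $\sum_i\delta_i=\E\big[\|\bS-\calG\|_1\big]$ is the \emph{$\ell_1$}-transport cost of the coupling, whereas the Valiant--Valiant CLT (Theorem~\ref{vv}) controls the \emph{$\ell_2$}-Wasserstein distance $d_W(\bS,\calG)=\inf\E[\|\bU-\bV\|_2]\le O(\beta q\log n)$ with $\beta=\tau\sqrt q=O(\sqrt q)$, i.e.\ $d_W=O(q^{3/2}\log n)$. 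In general $\E[\|\cdot\|_1]$ can exceed $\E[\|\cdot\|_2]$ by a factor of $\sqrt q$, and that is the best one can say here, so your $\sum_i\delta_i$ is only $O(q^2\log n)$. Plugging this in yields $d_{\mathrm{TV}}=O(q^{3/2}(\log n)^{1/2}/n^{1/4})$, which is a factor $q^{1/4}$ weaker than~\eqref{eq:cant-distinguish} and would downgrade the final lower bound from $\tilde\Omega(n^{1/5})$ to $\tilde\Omega(n^{1/6})$. Your closing remark that ``it suffices for the lemma of Section~\ref{sec:vv} to give a bound of order $q^{3/2}\log n$ on these two Wasserstein distances'' is true but begs the question: the paper does not (and a priori cannot, without more work) establish this for the $\ell_1$-Wasserstein cost, only for the $\ell_2$ one.

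The fix is to apply Markov \emph{once, globally}, to the scalar $\|\bS-\calG\|_2$ rather than coordinatewise: for a single threshold $t$, $\Pr[\exists i:\,\mathrm{sign}(\bS_i)\ne\mathrm{sign}(\calG_i)]\le\sum_i\Pr[|\calG_i|\le t]+\Pr[\|\bS-\calG\|_2>t]\le qt/\sqrt n+d_W(\bS,\calG)/t$, since $|\bS_i-\calG_i|>t$ for some $i$ already forces $\|\bS-\calG\|_2>t$. Optimizing $t=(qn)^{1/4}(\log n)^{1/2}$ then gives $O(q^{5/4}(\log n)^{1/2}/n^{1/4})$, matching the paper. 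This is precisely the role played in the paper by the inequality $r\,\Delta_{\mathit{far}}(\calD)\le d_W(\bS,\calG)$ in the proof of Theorem~\ref{vv-corollary}: the far-moving transport mass is charged once against the $\ell_2$-Wasserstein budget, not coordinate by coordinate. Once you make that single change, your route --- working with couplings and sign patterns directly rather than through the orthant-probability formulation of Theorem~\ref{vv-corollary} --- is an equivalent and arguably more streamlined presentation of the same argument.
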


We prove Proposition~\ref{far-from-monotone} in Section~\ref{sec:far-from-monotone}, followed by Proposition~\ref{cant-distinguish} in Section~\ref{sec:cant-distinguish}.

\subsection{Proof of Proposition~\ref{far-from-monotone}}
\label{sec:far-from-monotone}
 By the Chernoff bound, with probability $1-o_n(1)$ a draw $\boldf_{no} = \sign(\bnu_1x_1 + \cdots + \bnu_nx_n)$ from $\calD_{no}$ satisfies
\[ | \{i\in [n] \colon \bnu_i = -1\} | \in \Big[0.1n -\sqrt{n\log n}, 0.1n+\sqrt{n\log n}\Big].\]
We call any such LTF \emph{nice}, and we will argue that all nice LTFs are constant-far from monotonicity. For the remainder of this proof let $f$ be a nice LTF, which we may without loss of generality express as $f(x) = \sign(\ell(x))$ where
\[ \ell(x) := -(x_1 + \cdots + x_m) + \lfrac{7}{3} \cdot (x_{m+1} + \cdots + x_n)\]
and $m\in \big[0.1n -\sqrt{n\log n}, 0.1n+\sqrt{n\log n}\big]$.  We assume that $m$ is odd, noting that the case when $m$ is even follows via an identical argument.  We first claim that $\Inf_i[f] = \Omega(1/\sqrt{n})$ for all $i\in [m]$; by symmetry it suffices to show this for $i=1$.  Define $\ell'(x) := - (x_2 + \ldots + x_m) + \frac{7}{3}  (x_{m+1} + \cdots + x_n)$ and note that $f(x) \ne f(x^{\oplus 1})$ if and only if $\ell'(x) \in [-1,1)$.  Applying Fact \ref{binomial-weight} twice, we have
\[ \Prx_{\bx\in\bn}\Big[\lfrac{7}{3}(\bx_{m+1} + \cdots + \bx_n) \in [-\sqrt{n},\sqrt{n}]\Big] = \Omega(1) \]
and
\[ \Prx_{\bx\in\bn}[\bx_2+\cdots + \bx_m = k] = \Omega(1/\sqrt{n}) \text{ for all even integers $k \in [-\sqrt{n}-1,\sqrt{n}+1]$}, \]
and therefore indeed,
\[ \Inf_1[f] = \Prx_{\bx \in\bn}[\ell'(\bx) \in [-1,1)] = \Omega(1/\sqrt{n}).\]
Since $\Inf_i[f] = \Omega(1/\sqrt{n})$ for all $i\in [m]$, by Fact \ref{linear-coeff-influence} we have that $\hat{f}(i) = -\Omega(1/\sqrt{n})$ for all $i\in [m]$. Hence for all monotone Boolean functions $g$, we have
\begin{eqnarray*} 4\cdot \dist(f,g)\ =\ \Ex_{\bx\in\bn}\big[(f(\bx)-g(\bx))^2\big] &=& \sum_{S\sse [n]} \big(\hatf(S)-\hat{g}(S)\big)^2 \\
&\ge& \sum_{i=1}^m \big(\hatf(i)-\hat{g}(i)\big)^2 \ = \ m\cdot \Omega(1/n)\ =\ \Omega(1).
\end{eqnarray*}
Here the second equality is by Parvseval's identity; the penultimate equality uses the fact that $\hat{g}(i) \ge 0$ for all $i\in [n]$, which in turn holds since $g$ is a monotone Boolean function.  This completes the proof of Proposition~\ref{far-from-monotone}.

 \subsection{Proof of Proposition \ref{cant-distinguish}}
\label{sec:cant-distinguish}
Let $\calT$ be any deterministic non-adaptive $q$-query tester, and view its $q$ queries as a $q\times n$ matrix $Q \in \bits^{q\times n}$.  Following the terminology of~\cite{BO10}, we define a ``Response Vector'' random variable $\bR_{yes} \in \bits^q$ which is obtained by drawing $\boldf_{yes}= \sign(\bsigma_1 x_1 + \cdots + \bsigma_n x_n)$ from $\calD_{yes}$ and setting the $i$-th coordinate of $\bR_{yes}$ to be \[ \boldf_{yes}(Q_{i*}) = \sign(\bsigma_1 Q_{i,1} + \cdots + \bsigma_n Q_{i,n}),\]
and similarly $\bR_{no} \in \bits^q$ which is obtained by drawing $\boldf_{no}\sim \calD_{no}$ and setting the $i$-th coordinate of $\bR_{no}$ to be $\boldf_{no}(Q_{i*})$.  By the definition of total variation distance, the left-hand side of (\ref{eq:cant-distinguish}) is upper bounded by $\dtv(\bR_{yes},\bR_{no})$, and hence we can prove Proposition \ref{cant-distinguish} by showing that $\dtv(\bR_{yes},\bR_{no}) = O(q^{5/4}(\log n)^{1/2}/n^{1/4})$.

Let $\bS \in \R^q$ be the random column vector $Q\bsigma$ where $\bsigma$ is uniform over $\{1,3\}^n$, and $\bT\in\R^q$ be the random column vector $Q\bnu$ where $\bnu$ is drawn from the product distribution over $\{-1,7/3\}^n$ where $\Pr[\bnu_i = -1] =1/10$ for all $i\in [n]$.  The Response Vector $\bR_{yes}$ is determined by the orthant of $\R^q$ in which $\bS$ lies (as each coordinate of $\bR_{yes}$ is simply the sign of the respective coordinate of $\bS$), and likewise $\bR_{no}$ by the orthant of $\R^q$ in which $\bT$ lies.  Therefore it suffices for us to prove the following lemma:

\begin{lemma}
\label{S-T}
Let $\bS, \bT\in \R^q$ be defined as above. Then for any union $\calO$ of orthants in $\R^q$,
\[ \big|\Pr[\bS \in \calO] - \Pr[\bT\in\calO] \big| = O\bigg(\frac{q^{5/4}(\log n)^{1/2}}{n^{1/4}}\bigg).\]
\end{lemma}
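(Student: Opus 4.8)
The plan is to compare $\bS = Q\bsigma$ and $\bT = Q\bnu$ by passing through a common Gaussian approximation using a multidimensional CLT in Wasserstein ($\ell_1$-transport) distance, and then exploiting the fact that a union of orthants $\calO$ has boundary a union of $q$ coordinate hyperplanes, against which the limiting Gaussians anti-concentrate. First I would center and rescale: write $\bsigma_i = 2 + \bsigma_i'$ with $\bsigma_i' \in \{-1,1\}$ uniform, so $\bS = 2\,Q\mathbf{1} + Q\bsigma'$, and similarly decompose $\bnu_i$ into its mean plus a centered $\{-a,b\}$-valued variable. The point is that $\E[\bS] = \E[\bT]$ is arranged by the choice of the values $\{1,3\}$ and $\{-1,7/3\}$ with probabilities $1/10, 9/10$ (one checks $\tfrac{1}{10}(-1) + \tfrac{9}{10}\cdot\tfrac{7}{3} = 2$), and moreover the per-coordinate \emph{variances} of $\bsigma_i$ and $\bnu_i$ are also arranged to be equal — so the two random vectors $\bS$ and $\bT$ are matched in mean and in covariance: $\E[\bS]=\E[\bT]$ and $\mathrm{Cov}(\bS) = \mathrm{Cov}(\bT) = QQ^{\!\top}\cdot(\text{const})$, call this common $q\times q$ matrix $\Sigma$. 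This is the structural reason the construction works, and it is the content I would verify first.

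Next I would invoke the Valiant--Valiant multidimensional CLT (the lemma proved in Section~\ref{sec:vv}, which I may assume) to say that each of $\bS$ and $\bT$ is close in Wasserstein distance to the same Gaussian $\calG \sim N(\E[\bS], \Sigma)$ on $\R^q$. Since rows of $Q$ are $\pm1$ vectors the summands $Q_{*j}\sigma_j$ and $Q_{*j}\nu_j$ have coordinates bounded by $O(1)$, so the relevant third-moment/Wasserstein bound takes the form $W_1(\bS,\calG), W_1(\bT,\calG) = O(\text{poly}(q)/\sqrt{n})$ up to logarithmic factors — this is where the $n^{1/4}$ in the denominator and the $q^{5/4}\sqrt{\log n}$ in the numerator will come from, after the final optimization. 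I would keep the exponents symbolic through this step and only pin them down at the end.

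The remaining ingredient is to convert Wasserstein closeness to $\calG$ into closeness of the orthant probabilities $\Pr[\bS\in\calO]$ vs.\ $\Pr[\calG\in\calO]$. For this I would use a smoothing / anti-concentration argument: a union of orthants is $\{x : \mathrm{sign}(x_1)=\epsilon_1,\dots\}$ unioned over a set of sign patterns, so its indicator changes only when some coordinate $x_i$ crosses $0$; coupling $\bS$ to $\calG$ optimally in $\ell_1$, the probability that they lie in different elements of $\calO$ is at most $\sum_{i=1}^q \Pr[|\calG_i| \le |\,\bS_i - \calG_i\,|]$, and by Gaussian anti-concentration (Fact~\ref{gaussian-anti-concentration}) applied coordinatewise this is controlled by $\sum_i \E|\bS_i - \calG_i| / \mathrm{std}(\calG_i) \le W_1(\bS,\calG)/\min_i \mathrm{std}(\calG_i)$. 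The catch is that $\min_i \mathrm{std}(\calG_i)$ could be small or zero if a row of $Q$ were short; I would handle degenerate/near-degenerate rows separately (a row with few $\pm1$ entries behaves like a bounded discrete variable and can be conditioned on exactly, or absorbed into the error), so that the effective rows have variance $\Omega(n)$ and hence $\mathrm{std}(\calG_i) = \Omega(\sqrt n)$ — wait, more carefully, one has to be careful that the issue is really the smallness of the \emph{marginal} standard deviations relative to the Wasserstein error, and a clean way around it is to first reduce to the case where all query rows have weight $\Theta(n)$ by a standard preprocessing. Combining, $|\Pr[\bS\in\calO]-\Pr[\calG\in\calO]| = O(W_1(\bS,\calG)/\sqrt n)$ and likewise for $\bT$, and the triangle inequality finishes it.

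The main obstacle I anticipate is precisely this last conversion: getting the \emph{right} exponent on $q$ when turning a $q$-dimensional Wasserstein bound into an orthant-probability bound. A naive union bound over the $q$ coordinate hyperplanes loses a factor of $q$, and the Valiant--Valiant CLT itself carries dimension-dependent factors, so the two have to be balanced carefully (and degenerate rows handled) to land at $q^{5/4}\sqrt{\log n}/n^{1/4}$ rather than something weaker. I expect the clean execution to mirror the structure of the corresponding step in \cite{RS13}, with the VV CLT substituted for the Berry--Ess\'een-type bound used there.
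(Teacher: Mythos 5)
Your approach is essentially the paper's: match means and covariance matrices of $\bS$ and $\bT$ coordinate-by-coordinate (the $\{1,3\}$ and $\{-1,7/3\}$ choices are exactly engineered so that $\E[\bsigma_j]=\E[\bnu_j]=2$ and $\Var[\bsigma_j]=\Var[\bnu_j]=1$), invoke the Valiant--Valiant CLT machinery against a common Gaussian $\calG$, and then optimize a radius parameter. Two small notes on execution. First, your worry about degenerate or short query rows is unnecessary here: since $Q_{i,j}\in\{-1,1\}$, one has $\Var[\bX_i^{(j)}] = \Var[\bsigma_j]\cdot Q_{i,j}^2 = 1$ \emph{exactly}, so each marginal $\calG_i$ has variance $n$ with no preprocessing or case split required. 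Second, your proposed one-line bound $\Pr[|\calG_i| \le |\bS_i - \calG_i|] \le \E|\bS_i-\calG_i|/\mathrm{std}(\calG_i)$ is not justified as stated, since $\calG_i$ and $\bS_i-\calG_i$ are dependent under the coupling; the clean fix is to introduce a radius threshold $r$, union-bound near-boundary mass by Gaussian anti-concentration (together with one-dimensional Berry--Ess\'een for $\bS_i$) and far-boundary mass by the Wasserstein distance, and then optimize $r$. This is exactly what Theorem~\ref{vv-corollary} packages up — you were told you may assume it — so the cleanest version of your argument is simply to apply that theorem with $\tau=O(1)$ and $\sum_j \Var[\bX_i^{(j)}]=n$, yielding $O(q^{3/2}\log n/r + qr/\sqrt{n})$ and giving the claimed bound at $r=(qn)^{1/4}(\log n)^{1/2}$.
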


We will need the following multidimensional Berry--Ess\'een theorem, the proof of which we defer to Section~\ref{sec:vv}.
\begin{theorem}
\label{vv-corollary}
Let $\bS = \bX^{(1)} + \cdots + \bX^{(n)}$ where $\bX^{(1)},\ldots,\bX^{(n)}$ are independent $\R^q$-valued random variables, and suppose that $\big| \bX^{(j)}_i - \E[\bX^{(j)}_i]\big|\le \tau$ with probability $1$ for all $i\in [q]$ and $j\in [n]$. Let $\calG$ be the $q$-dimensional Gaussian with the same mean and covariance matrix as $\bS$.  Let $\calO$ be a union of orthants in $\R^q$. Then for all $r > 0$,
\[ \big|\Pr[\bS \in \calO] - \Pr[\calG\in\calO] \big| = O\Bigg( \frac{\tau q^{3/2} \log n}{r} + \sum_{i=1}^q \frac{r+\tau}{\big(\sum_{j=1}^n \Var\big[\bX^{(j)}_i\big]\big)^{1/2}}\Bigg).\]
\end{theorem}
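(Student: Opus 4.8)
The plan is to prove Theorem~\ref{vv-corollary} in two stages. First, I would bound the Wasserstein distance between $\bS$ and $\calG$ by adapting the Valiant--Valiant multidimensional CLT of~\cite{VV11}; the goal of this stage is to show
\[ d_W(\bS,\calG) = O\big(\tau q^{3/2}\log n\big), \]
where $d_W$ denotes the Wasserstein ($L^1$-transport) distance with respect to the Euclidean metric on $\R^q$. Second, I would convert this metric estimate into a bound on the discrepancy of orthant probabilities by a ``boundary smoothing'' argument, which is where the parameter $r$ and the Gaussian anti-concentration term $\sum_i (r+\tau)/\sigma_i$ (with $\sigma_i := \big(\sum_{j=1}^n \Var[\bX^{(j)}_i]\big)^{1/2}$ the standard deviation of $\bS_i$, equivalently of $\calG_i$) enter. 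Stage two is essentially routine; stage one is where the real content lies.

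\textbf{Stage 2 (Wasserstein $\Rightarrow$ orthants).} Fix $r>0$ and let $\calO$ be a union of orthants of $\R^q$. For $\delta>0$ write $\calO^{+\delta}$ for the closed $\ell_\infty$-fattening of $\calO$ by $\delta$ --- equivalently, the union of the same orthants with each defining inequality ``$x_i\ge 0$'' (or ``$x_i\le 0$'') relaxed by $\delta$ --- and $\calO^{-\delta}$ for its $\ell_\infty$-erosion, i.e.\ the set of $x$ whose closed $\ell_\infty$-ball of radius $\delta$ is contained in $\calO$. The key geometric observation is that $\partial\calO \subseteq \bigcup_{i=1}^q\{x : x_i = 0\}$, so both $\calO^{+r}\setminus\calO$ and $\calO\setminus\calO^{-r}$ are contained in $\bigcup_{i=1}^q\{x : |x_i|\le r\}$. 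Take a coupling of $(\bS,\calG)$ that (up to a negligible additive error) achieves the Wasserstein distance, so that $\E[\|\bS-\calG\|_\infty]\le \E[\|\bS-\calG\|_2]\le d_W(\bS,\calG)$, and let $\calE := \{\|\bS-\calG\|_\infty > r\}$, so that $\Pr[\calE]\le d_W(\bS,\calG)/r$ by Markov's inequality. On the event $\calE^c$ one has $\bS\in\calO\Rightarrow\calG\in\calO^{+r}$ and $\calG\in\calO^{-r}\Rightarrow\bS\in\calO$; combining these with the geometric observation and with Gaussian anti-concentration (Fact~\ref{gaussian-anti-concentration}, which gives $\Pr[|\calG_i|\le r]\le r/\sigma_i$) yields
\[ \big|\Pr[\bS\in\calO]-\Pr[\calG\in\calO]\big| \;\le\; \Pr[\calE] + \sum_{i=1}^q \Pr\big[|\calG_i|\le r\big] \;\le\; \frac{d_W(\bS,\calG)}{r} + \sum_{i=1}^q \frac{r}{\sigma_i}. \]
Plugging in the Stage-1 bound then gives the theorem; the harmless extra ``$+\tau$'' inside $\sum_i(r+\tau)/\sigma_i$ is exactly what appears if the Wasserstein comparison is meaningful only down to the support scale $\tau$ of the summands, so that one runs the smoothing with effective radius $r+\tau$ in place of $r$.

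\textbf{Stage 1 (the main obstacle).} The substantive step is establishing $d_W(\bS,\calG) = O(\tau q^{3/2}\log n)$. The CLT of~\cite{VV11} bounds the Wasserstein distance between a sum of independent $\R^q$-valued random variables and the matching Gaussian, but it cannot simply be quoted as a black box here: our centered summands $\bX^{(j)}-\E[\bX^{(j)}]$ are independent but not identically distributed, their sum has an arbitrary (possibly ill-conditioned) covariance matrix rather than a scalar multiple of the identity, and a crude invocation would cost a worse power of $q$ than we can afford. The plan is therefore to rework the Valiant--Valiant argument in our context: center the summands, use the hypothesis $|\bX^{(j)}_i - \E[\bX^{(j)}_i]|\le\tau$ to control both their coordinatewise range and their Euclidean radius ($\le \tau\sqrt q$), and track the dependence on $q$, $n$, and $\tau$ carefully through the estimate --- with the $\log n$ factor entering as an artifact of the adaptation (for instance, from first truncating $\bS$ and $\calG$ to a ball of radius $O(\sqrt{n\log n})$ on which the summands behave well, the complementary tail contributing $n^{-\omega(1)}$). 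Getting the powers of $q$ (and of $n$) right, rather than the precise constants, is where I expect the real work to be; this is what Section~\ref{sec:vv} carries out.
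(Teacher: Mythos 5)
Your Stage 2 is correct and is in fact a cleaner route than the one the paper takes. The paper partitions $\calO$ into a boundary slab $\calO_{bd}=\calO\cap W_r$ and an interior part $\calO_{in}=\calO\setminus W_r$, bounds the boundary contribution by combining Gaussian anti-concentration with the one-dimensional Berry--Ess\'een theorem (this is where the extra $\tau$ in the $(r+\tau)/\sigma_i$ term actually comes from --- not from any ``support-scale'' consideration in the Wasserstein step, as you speculate), and controls the interior discrepancy by a case analysis on whether a near-optimal coupling moves more mass from $\calO_{in}$ into $\calO_{bd}$ or out of $\calO$ entirely. Your take-an-(almost-)optimal-coupling-and-Markov argument collapses all of that, never needs to anti-concentrate $\bS_i$ itself (only $\calG_i$), and yields the even stronger bound $d_W(\bS,\calG)/r+\sum_i r/\sigma_i$, which of course implies what is claimed.

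Your Stage 1, however, has the emphasis exactly backwards, and this is where your write-up has a real (if benign) hole: you declare it ``the main obstacle'' and then hand-wave it. In fact Theorem~\ref{vv} as stated in the paper already applies to independent but \emph{not} identically distributed $\R^q$-valued summands with an arbitrary covariance matrix, under only the hypothesis $\|\bX^{(j)}-\E[\bX^{(j)}]\|_2\le\beta$; no reworking of its proof is needed. The coordinatewise bound $|\bX^{(j)}_i-\E[\bX^{(j)}_i]|\le\tau$ gives $\|\bX^{(j)}-\E[\bX^{(j)}]\|_2\le\tau\sqrt{q}$, so invoking Theorem~\ref{vv} with $\beta=\tau\sqrt{q}$ yields $d_W(\bS,\calG)=O(\tau q^{3/2}\log n)$ in one line --- exactly what the paper does. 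Your worry that ``a crude invocation would cost a worse power of $q$'' is unfounded: $q^{3/2}$ is precisely $\beta q=\tau\sqrt{q}\cdot q$. So the step you flag as the substantive one is the trivial one, and the step you label routine is where the paper's proof actually has content (and where your version improves on it).
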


\begin{proof}[Proof of Lemma \ref{S-T} assuming Theorem~\ref{vv-corollary}]
We begin by writing $\bS = \bX^{(1)} + \cdots + \bX^{(n)}$, where $\bX^{(j)} = \bsigma_j \cdot Q_{*j} $ and $\bsigma_j$ is uniform over $\{1,3\}$; {\it i.e.}~each $\bX^{(j)}$ is independently $Q_{*j}$ with probability $1/2$ and $3\cdot Q_{*j}$ with probability $1/2$. Likewise we may express $\bT = \bY^{(1)} + \cdots + \bY^{(n)}$, where $\bY^{(j)} = \bnu_{j}\cdot Q_{*j}$ and $\bnu_j$ is $-1$ with probability $1/10$ and $7/3$ with probability $9/10$.  We claim that the $\bX^{(j)}$'s and $\bY^{(j)}$'s have matching means and covariance matrices; it suffices to check this for $\bX^{(1)}$ and $\bY^{(1)}$.  For means, we see that indeed
\begin{align*}
\E\big[\bX^{(1)}\big] &= \E[\bsigma_1] \cdot Q_{*1} = (\lfrac{1}{2} + \lfrac{3}{2}) \cdot Q_{*1} = 2\cdot  Q_{*1} \\
\E\big[\bY^{(1)}\big] &= \E[\bnu_1] \cdot Q_{*1} = (-\lfrac{1}{10} + \lfrac{9}{10}\lfrac{7}{3}) \cdot Q_{*1} = 2\cdot Q_{*1}.
\end{align*}
As for the covariance matrices, we let $i_1,i_2 \in [q]$ and calculate
\begin{eqnarray*} \Cov[\bX^{(1)}]_{i_1,i_2} &=& \E\big[\big(\bX^{(1)}_{i_1} - 2\cdot Q_{i_1,1}\big)\big(\bX^{(1)}_{i_2} - 2Q_{i_2,1}\big)\big] \\
&=& \E\big[\bX^{(1)}_{i_1} \cdot \bX^{(1)}_{i_2}\big] - 2\cdot Q_{i_2,1}\E\big[\bX^{(1)}_{i_1}\big] - 2\cdot Q_{i_1,1}\E\big[\bX^{(1)}_{i_2}\big] + 4\cdot Q_{i_1,1}Q_{i_2,1} \\
&=&  \E\big[\bX^{(1)}_{i_1} \cdot \bX^{(1)}_{i_2}\big] -4\cdot Q_{i_1,1}Q_{i_2,1}\\
&=& \left(\E\big[\bsigma_1^2\big] - 4\right)\cdot Q_{i_1,1}Q_{i_2,1}  \ = \ (\lfrac{1}{2} + \lfrac{9}{2} -4) \cdot Q_{i_1,1}Q_{i_2,1} \ =  Q_{i_1,1}Q_{i_2,1}.
\end{eqnarray*}
Similarly, the corresponding entry of $\Cov[\bY^{(1)}]$ is:
\begin{eqnarray*} \Cov[\bY^{(1)}]_{i_1,i_2} &=& \E\big[\big(\bY^{(1)}_{i_1} - 2\cdot Q_{i_1,1}\big)\big(\bY^{(1)}_{i_2} - 2Q_{i_2,1}\big)\big] \\
&=& \E\big[\bY^{(1)}_{i_1} \cdot \bY^{(1)}_{i_2}\big] - 2\cdot Q_{i_2,1}\E\big[\bY^{(1)}_{i_1}\big] - 2\cdot Q_{i_1,1}\E\big[\bY^{(1)}_{i_2}\big] + 4\cdot Q_{i_1,1}Q_{i_2,1} \\
&=&  \E\big[\bY^{(1)}_{i_1} \cdot \bY^{(1)}_{i_2}\big] -4\cdot Q_{i_1,1}Q_{i_2,1}\\
&=& \left(\E\big[\bnu_1^2\big] - 4\right)\cdot  Q_{i_1,1}Q_{i_2,1}\ = \ (\lfrac{1}{10}+\lfrac{9}{10}\lfrac{49}{9} -4) \cdot Q_{i_1,1}Q_{i_2,1} \ =
Q_{i_1,1}Q_{i_2,1}.
\end{eqnarray*}
Since the $\bX^{(j)}$'s and $\bY^{(j)}$'s have matching means and covariance matrices, so do their sums $\bS$ and $\bT$, and so Theorem~\ref{vv-corollary} gives a bound on the differences $|\Pr[\bS\in\calO]-\Pr[\calG\in\calO]|$ and $|\Pr[\bT\in\calO]-\Pr[\calG\in\calO]|$ for the same $q$-dimensional Gaussian $\calG$.  Recalling that
$\bX^{(j)}_i = \bsigma_j \cdot Q_{i,j}$ where $Q_{i,j}\in\bn$, we have that $\Var[\bX_i^{(j)}] = 1$, and likewise $\Var[\bY^{(j)}_i] = 1$.
Therefore, two applications of Theorem~\ref{vv-corollary} with $\tau :=  O(1)$
along with the triangle inequality yields the bound
\[ \big| \Pr[\calS\in\calO]-\Pr[\calG\in\calO]\big| = O\Bigg( \frac{q^{3/2} \log n}{r} + \frac{q(r+\tau)}{\sqrt{n}}\Bigg)    \]
for all $r  > 0$.
Choosing $r := (qn)^{1/4}(\log n)^{1/2}$ completes the proof.
\end{proof}

\section{Multidimensional Berry--Ess\'een via the Valiant--Valiant CLT}
\label{sec:vv}
In this section we prove Theorem \ref{vv-corollary} by adapting a recent multidimensional CLT of Valiant and Valiant~\cite{VV11}
which bounds the \emph{Wasserstein distance} between a sum of independent
vector-valued random variables and a multidimensional Gaussian.

\begin{definition}[Wasserstein distance]
The Wasserstein distance between two $\R^q$-valued random variables $\bS$ and $\bT$, denoted $d_W(\bS,\bT)$, is defined to be:
\[ d_W(\bS,\bT) = \inf_{\calD}\Big\{ \Ex_{\calD}\big[\|\bU-\bV\|_2\big] \Big\}, \]
where the infimum is taken over {all} couplings $\calD$ of $\bS$ and $\bT$, {i.e.,}  all joint distributions $\calD$ of pairs of
$\R^q$-valued random variables ($\bU,\bV)$ with marginals distributed according to $\bS$ and $\bT$ respectively.
\end{definition}

Valiant and Valiant \cite{VV11} recently used Stein's method to prove the following central limit theorem for Wasserstein distance:

\begin{theorem}[Valiant--Valiant CLT]
\label{vv}
Let $\bS = \bX^{(1)} + \cdots + \bX^{(n)}$ where $\bX^{(1)},\ldots,\bX^{(n)}$ are independent $\R^q$-valued random variables, and suppose $\big\| \bX^{(j)} - \E\big[\bX^{(j)}\big] \big\|_2 \le \beta$ with probability $1$ for any $j\in [n]$. Then
\[ d_W(\bS , \calG) \le O(\beta q \log n),  \]
where $\calG$ is the $q$-dimensional Gaussian with the same mean and covariance matrix as $\bS$.
\end{theorem}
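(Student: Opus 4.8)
The plan is to prove this bound by Stein's method for the multivariate Gaussian, via the Ornstein--Uhlenbeck generator. First I would pass to the dual (Kantorovich--Rubinstein) formulation of Wasserstein distance,
\[ d_W(\bS,\calG) = \sup\big\{\, \big|\E[h(\bS)]-\E[h(\calG)]\big| \;:\; h\colon\R^q\to\R \text{ is } 1\text{-Lipschitz} \,\big\}, \]
so that it suffices to fix an arbitrary $1$-Lipschitz $h$ and bound $|\E[h(\bS)]-\E[h(\calG)]|$. Subtracting means, we may assume every $\bX^{(j)}$ (hence $\bS$) is centered, so $\calG = N(0,\Sigma)$ with $\Sigma = \sum_{j=1}^n\Cov[\bX^{(j)}]$; a technical wrinkle to dispose of first is that $\Sigma$ may be singular, which one handles by restricting to the subspace spanned by the support of $\bS$ (on which $\calG$ is likewise supported), so that $\Sigma$ is positive definite there and only the effective dimension $q'\le q$ enters.

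Next I would set up the Stein equation for $N(0,\Sigma)$: let $f=f_h\colon\R^q\to\R$ solve
\[ \mathcal{L}f(x) := \big\langle \Sigma,\nabla^2 f(x)\big\rangle - \big\langle x,\nabla f(x)\big\rangle = h(x)-\E[h(\calG)], \]
which has the explicit representation $f_h = -\int_0^\infty\big(U_t h - \E[h(\calG)]\big)\,dt$ through the OU semigroup $U_t g(x) = \E\big[g(e^{-t}x + \sqrt{1-e^{-2t}}\,\calG')\big]$, $\calG'\sim N(0,\Sigma)$. The crucial feature is that $U_t$ smooths $h$ instantaneously: $U_t h$ is $C^\infty$ for each $t>0$ even though $h$ is merely Lipschitz, and Gaussian integration by parts in $\calG'$ gives (after whitening by $\Sigma^{-1/2}$) quantitative bounds of the form $\|\nabla^2 U_t h\|_{\mathrm{op}}\lesssim e^{-t}(1-e^{-2t})^{-1/2}$ and $\|\nabla^3 U_t h\|_{\mathrm{op}}\lesssim e^{-t}(1-e^{-2t})^{-1}$. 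Integrating in $t$, $\|\nabla^2 f_h\|$ is bounded by an absolute constant, while $\int_0^\infty\|\nabla^3 U_t h\|\,dt$ diverges logarithmically as $t\to0$; truncating that part of the integral at a scale polynomially small in $1/n$ leaves $\|\nabla^3 f_h\| = O(\log n)$ --- the source of the logarithmic factor.

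The third step is a Lindeberg-type replacement. Since $\E[\mathcal{L}f(\calG)]=0$,
\[ \E[h(\bS)]-\E[h(\calG)] = \E[\mathcal{L}f(\bS)] = \E\big[\big\langle\Sigma,\nabla^2 f(\bS)\big\rangle\big] - \sum_{j=1}^n\E\big[\big\langle\bX^{(j)},\nabla f(\bS)\big\rangle\big]. \]
Writing $\bS^{(j)} := \bS-\bX^{(j)}$, independence and $\E[\bX^{(j)}]=0$ give $\E[\langle\bX^{(j)},\nabla f(\bS^{(j)})\rangle]=0$, hence $\E[\langle\bX^{(j)},\nabla f(\bS)\rangle] = \E[\langle\bX^{(j)},\nabla f(\bS)-\nabla f(\bS^{(j)})\rangle]$; a first-order Taylor expansion of $\nabla f$ about $\bS^{(j)}$ yields the main term $\E[\langle\bX^{(j)},\nabla^2 f(\bS^{(j)})\bX^{(j)}\rangle] = \E[\langle\Cov[\bX^{(j)}],\nabla^2 f(\bS^{(j)})\rangle]$ plus a remainder of order $\|\nabla^3 f\|\cdot\E\|\bX^{(j)}\|^3$. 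Summing over $j$ and using $\sum_j\Cov[\bX^{(j)}]=\Sigma$, the main terms cancel $\E[\langle\Sigma,\nabla^2 f(\bS)\rangle]$ up to the further error of replacing $\nabla^2 f(\bS^{(j)})$ by $\nabla^2 f(\bS)$, of order $\|\nabla^3 f\|\cdot\|\Cov[\bX^{(j)}]\|\cdot\E\|\bX^{(j)}\|$. Using $\|\bX^{(j)}\|\le\beta$ almost surely, and that the trace terms count only the effective dimension $q'\le q$, every error term is $O(\beta\,\|\nabla^3 f\|\,q)$; with $\|\nabla^3 f\| = O(\log n)$, and after balancing the small-$t$ truncation error against these terms (cutoff polynomially small in $1/n$), this gives $d_W(\bS,\calG) = O(\beta q\log n)$.

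The step I expect to be the main obstacle is the regularity theory for the Stein solution $f_h$ when $h$ is only Lipschitz: establishing the uniform second- and third-derivative bounds above, tracking precisely how the small-time truncation of the OU integral feeds into both $\|\nabla^3 f_h\|$ and the replacement error so as to produce exactly a $\log n$, and --- most delicately --- handling the possible degeneracy of $\Sigma$, since one cannot simply standardize to the identity and the whitening must be carried out on the effective subspace without reintroducing a dependence on the conditioning of $\Sigma$ in the almost-sure bound $\beta$. This bookkeeping is the technical heart of the Valiant--Valiant argument; the duality reduction and the Lindeberg swap are comparatively routine.
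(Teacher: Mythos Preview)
The paper does not actually prove Theorem~\ref{vv}: it is quoted verbatim from Valiant and Valiant~\cite{VV11}, with only the one-line remark that they ``used Stein's method to prove the following central limit theorem for Wasserstein distance.'' So there is no in-paper argument to compare your proposal against.

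That said, your outline is exactly the Stein-method-via-OU-semigroup argument that \cite{VV11} carries out, and the high-level structure you describe --- Kantorovich--Rubinstein duality, the Stein equation $\langle\Sigma,\nabla^2 f\rangle - \langle x,\nabla f\rangle = h - \E[h(\calG)]$, the OU representation $f_h = -\int_0^\infty(U_t h - \E[h(\calG)])\,dt$, a small-$t$ truncation producing the $\log n$, and a Lindeberg replacement with third-derivative error --- is faithful to their proof. Your identification of the main technical issues (regularity of $f_h$ for merely Lipschitz $h$, balancing the truncation, and handling singular $\Sigma$) is also accurate; \cite{VV11} indeed spends most of its effort there. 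One small point: in your error accounting you write that each remainder is $O(\beta\,\|\nabla^3 f\|\,q)$ per summand, but of course it is the \emph{sum} over $j$ that must be $O(\beta q\log n)$, so the per-term bound should involve $\|\Cov[\bX^{(j)}]\|$ or $\E\|\bX^{(j)}\|^2\le\beta^2$ in a way that telescopes against $\mathrm{tr}(\Sigma)$ --- just make sure that bookkeeping closes when you write it out.
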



We recall Theorem \ref{vv-corollary}:

\begin{reptheorem}{vv-corollary}
Let $\bS = \bX^{(1)} + \cdots + \bX^{(n)}$ where $\bX^{(1)},\ldots,\bX^{(n)}$ are independent $\R^q$-valued random variables, and suppose that $\big| \bX^{(j)}_i - \E\big[\bX^{(j)}_i\big]\big|\le \tau$ with probability $1$ for all $i\in [q]$ and $j\in [n]$. Let $\calG$ be the $q$-dimensional Gaussian with the same mean and covariance matrix as $\bS$.  Let $\calO$ be a union of orthants in $\R^q$. Then for all $r > 0$,
\[ \big|\Pr[\bS \in \calO] - \Pr[\calG\in\calO] \big| = O\Bigg( \frac{\tau q^{3/2} \log n}{r} + \sum_{i=1}^q \frac{r+\tau}{\big(\sum_{j=1}^n \Var\big[\bX^{(j)}_i\big]\big)^{1/2}}\Bigg).\]
\end{reptheorem}

\begin{proof}
 We define \[ W_r := \big\{ x \in \R^q \colon |x_i| \le r \text{ for some $i\in [q]$}\big\} \]
to be the radius-$r$ region around the orthant boundaries, and  partition $\calO$ into $\calO_{bd} := \calO \cap W_r$ (the points in $\calO$ that lie close to the orthant boundaries) and $\calO_{in} := \calO \setminus W_r$ (the points that lie far away from the orthant boundaries). We have
\begin{eqnarray}
 \big|\Pr[\bS \in \calO] - \Pr[\calG\in\calO] \big| &=&  \big|(\Pr[\bS \in \calO_{in}] + \Pr[\bS \in\calO_{bd}]) - (\Pr[\calG\in\calO_{in}]  + \Pr[\calG\in\calO_{bd}]) \big| \nonumber \\
 & \le& \underbrace{\big|\Pr[\bS \in \calO_{in}] - \Pr[\calG \in\calO_{in}]\big|}_{\Delta} + \underbrace{\Pr[\bS\in\calO_{bd}]  + \Pr[\calG\in\calO_{bd}]}_{\Gamma}. \label{eq:S-N1}
\end{eqnarray}
We bound the quantities $\Delta$ and $\Gamma$ separately. For $\Gamma$, we have that
\begin{eqnarray} \Gamma &\le& \sum_{i=1}^q \Pr\big[\bS_i \in [-r,r]\big] + \Pr\big[\calG_i \in [-r,r]\big] \label{eq:S-N2} \\
&\le  & \sum_{i=1}^q 2\Pr\big[\calG_i \in [-r,r]\big]  + \big| \Pr\big[\bS_i \in [-r,r]\big] - \Pr\big[\calG_i \in [-r,r]\big]\big| \nonumber \\
&\le &\sum_{i=1}^q  \frac{O(r)}{\big(\sum_{j=1}^n \Var\big[\bX^{(j)}_i\big]\big)^{1/2}} + \frac{O(\tau)}{\big(\sum_{j=1}^n \Var\big[\bX^{(j)}_i\big]\big)^{1/2}} \ = \ \sum_{i=1}^q  \frac{O(r+\tau)}{\big(\sum_{j=1}^n \Var\big[\bX^{(j)}_i\big]\big)^{1/2}} \label{eq:S-N3}
\end{eqnarray}
where (\ref{eq:S-N2}) is a union bound over all $q$ dimensions, and (\ref{eq:S-N3}) uses  Fact~\ref{gaussian-anti-concentration} (Gaussian anti-concentration),
the fact that $\calG_i$ is a Gaussian with variance $\sum_{j=1}^n \Var\big[\bX^{(j)}_i\big]$,  and Theorem~\ref{be} (Berry--Ess\'een).

For $\Delta$, let us assume without loss of generality (a symmetrical argument works in the other case) that $\Pr[\bS \in \calO_{in}] \ge \Pr[\calG \in\calO_{in}]$, so $\Delta =
\Pr[\bS \in \calO_{in}] - \Pr[\calG \in\calO_{in}].$
 Let $\calD$ be any coupling of $\bS$ and {$\calG$}, so $\calD$ is the
joint distribution of a pair $(\bU,\bV)$ of $\R^q$-valued random variables with marginals distributed according to $\bS$ and {$\calG$} respectively.  Since
\[\int_{\calO_{in}} \int_{\R^q} \calD(u,v)\blue{\,dv\,du} = \Pr[\bS \in \calO_{in}]  \]
and
\[ \int_{\calO_{in}} \int_{\calO_{in}} \calD(u,v)\blue{\,dv\,du} \le \int_{\R^q} \int_{\calO_{in}} \calD(u,v)\blue{\,dv\,du} = \Pr[\calG \in \calO_{in}],\]
it follows that
\begin{equation} \int_{\calO_{in}} \int_{\R^q \setminus \calO_{in}} \calD(u,v)\blue{\,dv\,du} = \int_{\calO_{in}} \int_{\R^q} \calD(u,v)\blue{\,dv\,du} - \int_{\calO_{in}} \int_{\calO_{in}} \calD(u,v)\blue{\,dv\,du} \ge \Delta. \label{eq:emd1} \end{equation}
Next we define the quantities
\newcommand{\Deltanear}{\Delta_{\text{\it near}}}
\newcommand{\Deltafar}{\Delta_{\text{\it far}}}
\begin{eqnarray*}
\Delta_{\text{\it near}}(\calD) &:=&   \int_{\calO_{in}} \int_{\calO_{bd}} \calD(u,v)\blue{\,dv\,du} \\
\Delta_{\text{\it far}}(\calD) &:=&  \int_{\calO_{in}} \int_{\R^q \setminus \calO} \calD(u,v)\blue{\,dv\,du}. \\
\end{eqnarray*}
Note that $\Deltanear(\calD)$ and $\Deltafar(\calD)$ sum to the quantity on the left-hand side of (\ref{eq:emd1}), and so $\Deltanear(\calD)+\Deltafar(\calD) \ge \Delta$.  (In words, since $\bS$ places $\Delta$ more mass on $\calO_{in}$ than $\calG$ does, any scheme $\calD$ of moving the mass of $\bS$ to obtain $\calG$ must move at least $\Delta$ amount from within $\calO_{in}$ to outside it.  $\Deltanear(\calD)$ {is} the amount moved from within $\calO_{in}$ to $\calO$'s boundary $\calO_{bd}$, and $\Deltafar(\calD)$ {is} the rest, moved from within $\calO_{in}$ to locations entirely out of $\calO$.)  Since $\| u - v \|_2 \ge r$ for any pair of points $u\in \calO_{in}$ and $y \notin \calO$, it follows that
\[ d_W(\bS,\calG) \ge r \cdot \Deltafar(\calD). \]
We consider two cases, depending on the relative magnitudes of $\Deltanear(\calD)$ and $\Deltafar(\calD)$.  If $\Deltafar(\calD) \ge \Deltanear(\calD)$, we first observe that for all $j\in [n]$ we have $\big\| \bX^{(j)}-\E\big[\bX^{(j)}\big]\big\|_2 \le \tau \sqrt{q}$ with probability $1$, since each of its $q$ coordinates $i$ satisfies $\big|\bX^{(j)}_i - \E\big[\bX^{(j)}_i\big]\big| \le \tau$ with probability $1$ by the assumption of the theorem. Therefore  we may apply Theorem~\ref{vv} (Valiant--Valiant CLT) with $\beta := \tau\sqrt{q}$ to get
\[ r\cdot \frac{\Delta}{2} \le r\cdot \Deltafar(\calD) \le d_W(\bS,\calG) = O(\tau q^{3/2}\log n) \]
and hence $\Delta = O((\tau q^{3/2}\log n)/r)$, which along with our upper bound on $\Gamma$ completes the proof.  If on the other hand $\Deltanear(\calD) > \Deltafar(\calD)$, then
\[  \frac{\Delta}{2} \le \Deltanear(\calD) \le \int_{\R^q} \int_{\calO_{bd}} \calD(u,v)\blue{\,dv\,du} =  \Pr[\calG \in \calO_{bd}] \le \Gamma, \]
and again our bound on $\Gamma$ completes the proof.
 \end{proof}

\section{A lower bound for general hypergrid domains}
\label{sec:hypergrid}
In this section we prove Theorem~\ref{main-hypergrid}, showing that for all  $m\in \N$ essentially
the same lower bound of
$\tilde{\Omega}(n^{1/5})$ also applies to the query complexity of testers for monotonicity of functions $f:[m]^n\to\bits$, Boolean-valued functions over general hypergrid domains. The notions of monotonicity and distance to monotonicity of functions generalize to functions $f:[m]^n\to\bits$ the natural way: $f$ is monotone if $f(x) \le f(y)$ for all $x \prec y$, where $x \prec y$ iff $x_i \le y_i$ for all $i\in [n]$ and $x\ne y$.  We say that $f$ is $\eps$-close to monotone if $\Pr_{x\in [m]^n}[f(x)\ne g(x)]\le \eps$ for some monotone $g:[m]^n\to\bits$, and $\eps$-far from monotone otherwise.

We prove Theorem~\ref{main-hypergrid} via a reduction to the $m=2$ case (i.e.~Theorem~\ref{main-theorem-lb}).  The reduction is simpler for even $m$ so for ease
of exposition we assume below that $m$ is even.  In this case Theorem~\ref{main-hypergrid} is a direct consequence of Theorem~\ref{main-theorem-lb} and the following proposition:

\begin{proposition}
\label{distance-preserving}
For all even $m\in \N$ the mapping
\[ \Phi : \big\{ \text{all functions $f:\{-1,1\}^n\to\bits$} \big\} \to \big\{  \text{all functions $f : [m]^n\to\bits$}\big\} \]
defined by \emph{(\ref{eq:Phim})} below\ignore{\rnote{Bit of a pedantic point here:  I stuck this in because I think a priori just knowing there exists some mapping might not be enough.  Suppose that the mapping were totally bizarre so that
in order to compute the value of $\Phi[f](z)$ for a given $z$ you needed to, like, query $f$ on many many points $x \neq z$.  Then I don't think the reduction would work.  We're not saying it, but the idea is that if you had an efficient tester for the $[m]^n$ domain, using this $\Phi$ you could convert MQ access to a Boolean $f$ to MQ access to $\Phi[f]$ and thereby test the original Boolean function.  If simulating MQ access to $\Phi[f]$ were very expensive though then this wouldn't work.  Our thing works because given any $x \in [m]^n$ on which we want to query $\Phi[f]$, we can do this (using an MQ oracle for $f$) with just one query.

Let me know if this makes sense?  Even if not it's a small change so
hopefully ok.}} satisfies the following two properties:
\begin{enumerate}
\item If $f\isafunc$ is monotone then $\Phi[f]$ is monotone as well.
\item If $f\isafunc$ is $\eps$-far from monotone then $\Phi[f]$ is $\eps$-far from monotone as well.
\end{enumerate}
\end{proposition}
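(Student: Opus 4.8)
The plan is to take $\Phi$ to be the natural coordinate-wise ``collapse of $[m]$ onto its two halves'' map. Writing $\pi:[m]\to\{-1,1\}$ for the map sending $\{1,\dots,m/2\}$ to $-1$ and $\{m/2+1,\dots,m\}$ to $1$ (here is where we use that $m$ is even, so that the two halves have equal size $m/2$), I would define
\begin{equation}
\Phi[f](z_1,\dots,z_n)\ :=\ f\big(\pi(z_1),\dots,\pi(z_n)\big)\qquad\text{for all }z\in[m]^n.
\label{eq:Phim}
\end{equation}
Note that a single query to $\Phi[f]$ is answered by a single query to $f$, so this $\Phi$ does yield a genuine reduction between the two testing problems.

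Property~1 should be immediate: $\pi$ is nondecreasing, so $z\prec z'$ in $[m]^n$ forces $\pi(z)\preceq\pi(z')$ coordinatewise in $\{-1,1\}^n$, and then either $\pi(z)=\pi(z')$ (whence $\Phi[f](z)=\Phi[f](z')$) or monotonicity of $f$ gives $\Phi[f](z)=f(\pi(z))\le f(\pi(z'))=\Phi[f](z')$; either way $\Phi[f]$ respects $\prec$.

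For Property~2 I would argue the contrapositive: if $\Phi[f]$ is $\eps$-close to monotone then so is $f$. Fix a monotone $g:[m]^n\to\{-1,1\}$ with $\dist(\Phi[f],g)\le\eps$. The structural fact to exploit is that $[m]^n$ decomposes into $(m/2)^n$ order-isomorphic copies of the hypercube: for each ``shift'' $t\in[m/2]^n$ let $\iota_t:\{-1,1\}^n\to[m]^n$ be given by $\iota_t(x)_i=t_i$ when $x_i=-1$ and $\iota_t(x)_i=t_i+m/2$ when $x_i=1$. One checks that each $\iota_t$ is order-preserving (it is the identity on coordinates where $x_i=x_i'$ and sends $t_i$ up to $t_i+m/2$ on coordinates that rise from $-1$ to $1$), that $(t,x)\mapsto\iota_t(x)$ is a bijection from $[m/2]^n\times\{-1,1\}^n$ onto $[m]^n$, and that $\pi(\iota_t(x))=x$, so $\Phi[f]$ restricted to the $t$-th copy equals $f$. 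Reindexing the Hamming count through this bijection then yields
\[
\dist(\Phi[f],g)\ =\ \frac{1}{m^n}\sum_{z\in[m]^n}\mathbf{1}\big[\Phi[f](z)\ne g(z)\big]\ =\ \frac{2^n}{m^n}\sum_{t\in[m/2]^n}\dist\big(f,\,g\circ\iota_t\big)\ =\ \frac{1}{(m/2)^n}\sum_{t\in[m/2]^n}\dist\big(f,\,g\circ\iota_t\big),
\]
i.e.\ $\dist(\Phi[f],g)$ is exactly the average of $\dist(f,g\circ\iota_t)$ over the $(m/2)^n$ shifts. Since this average is at most $\eps$, some $t^\ast$ satisfies $\dist(f,g\circ\iota_{t^\ast})\le\eps$, and $g\circ\iota_{t^\ast}$ is monotone because $\iota_{t^\ast}$ is order-preserving and $g$ is monotone; hence $f$ is $\eps$-close to monotone.

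I expect the only delicate point to be this ``distance-preserving'' direction: a single fixed embedding of the cube into $[m]^n$ can be far from $g$ even when $\Phi[f]$ is close to $g$ (if $g$ happens to disagree with $\Phi[f]$ precisely on that copy), so one cannot simply restrict $g$ to one copy. Averaging over the $(m/2)^n$ parallel copies---equivalently, over a uniformly random representative of each $\pi$-fiber---is what makes the argument work, and because each copy carries an equal share of the mass of $[m]^n$ (this is exactly where evenness of $m$ is needed), the parameter $\eps$ is preserved with no loss rather than, say, a factor of two. For odd $m$ one would use a slightly different $\Phi$ with an extra ``middle'' value, but that case is handled separately and is not part of this proposition.
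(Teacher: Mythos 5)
Your construction of $\Phi$ agrees with the paper's (the coordinate-wise collapse $\ind[x_i > m/2]$), and your proof of Property~1 is the same routine check. Where you diverge is Property~2: the paper argues the forward direction directly, invoking the characterization of distance to monotonicity via large matchings of disjoint violating pairs (Theorem~\ref{violating-pairs}, i.e.\ Lemma~4 of~\cite{FLN+02}), then taking the $\eps 2^n$ disjoint violations of $f$ and "lifting" each one to $(m/2)^n$ disjoint violations of $\Phi[f]$ by pairing the fibers $S(x)$ and $S(y)$ coordinate-wise modulo $m/2$. You instead argue the contrapositive: using the order-preserving embeddings $\iota_t$ you decompose $[m]^n$ into $(m/2)^n$ parallel copies of the cube, observe that $\Phi[f]$ restricts to $f$ on each copy, write $\dist(\Phi[f],g)$ as the exact average of $\dist(f,g\circ\iota_t)$ over $t$, and select a good $t^\ast$ by averaging. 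The two proofs exploit the same product structure $[m]^n\cong\{-1,1\}^n\times[m/2]^n$ (fibers of $\pi$ versus sections $\iota_t$ are dual slicings of this product), and both preserve $\eps$ with no loss; but your route is more self-contained, since it needs no external characterization of distance to monotonicity and works directly from the definition, while the paper's route reuses a lemma it already cites and has the merit of constructing an explicit violating matching for $\Phi[f]$, which fits the matching-centric viewpoint used throughout the paper. Your proof is correct and is a genuinely different, arguably more elementary, derivation.
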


We will need the following characterization of distance to monotonicity.

\begin{theorem}[\cite{FLN+02} Lemma 4]
\label{violating-pairs}
\ignore{ \rnote{Is this characterization really in \cite{DGL+99}, where is it,
Lemma 7?  That seems to be for general ranges and has a factor-of-2 loss
if I'm reading it correctly.  The \cite{FLN+02} paper seems to say exactly
what we want, they give the hard direction in their Lemma 4.  Should
we just cite \cite{FLN+02} Lemma 4 for this?}}
For all $f:[m]^n\to\bits$ and $\eps >0$, we have that $f$ is $\eps$-far from monotone if and only if there exists $\eps m^n$ many pairwise disjoint ordered pairs of vertices $(x^i,y^i) \in [m]^n\times [m]^n$ such that $x^i  \prec y^i$ and $f(x^i) > f(y^i)$.  We will call each such pair a \emph{violation with respect to $f$}.
\end{theorem}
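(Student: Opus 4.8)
The plan is to show that the maximum number of pairwise disjoint violations equals $\dist(f,\text{monotone})\cdot m^n$, from which the theorem follows by unwinding the definition of ``$\eps$-far'' (up to the routine care with strict versus non-strict inequalities, since that maximum is an integer). I would do this by introducing the \emph{violation graph} $G_f$: the undirected graph on vertex set $[m]^n$ whose edges are exactly the unordered pairs $\{x,y\}$ such that either $(x,y)$ or $(y,x)$ is a violation with respect to $f$. A set of pairwise disjoint violations is precisely a matching in $G_f$. The one structural observation to record up front is that $G_f$ is \emph{bipartite}: a violation $(x,y)$ requires $x\prec y$ and $f(x)>f(y)$, i.e.\ $f(x)=1$ and $f(y)=-1$, so every edge of $G_f$ joins a point of $f^{-1}(1)$ to a point of $f^{-1}(-1)$. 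This is what will let me invoke K\"onig's theorem at the end.

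Next I would handle the easy direction (many disjoint violations $\Rightarrow$ far from monotone), which does not even need bipartiteness. If $g$ is monotone and $M$ is any matching in $G_f$, then for each edge $\{x,y\}\in M$ with $x\prec y$, $f(x)=1$, $f(y)=-1$, monotonicity forces $g(x)\le g(y)$, so $g$ must disagree with $f$ on at least one of $x,y$; since the $2|M|$ endpoints are distinct, $\dist(f,g)\ge |M|/m^n$. Taking $M$ to realize the maximum gives one of the two inequalities. More precisely, the same argument shows that for any monotone $g$ the disagreement set $D_g:=\{x:f(x)\ne g(x)\}$ is a \emph{vertex cover} of $G_f$, so $\dist(f,\text{monotone})\ge \tau(G_f)/m^n$, where $\tau(G_f)$ is the minimum vertex cover number.

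The substantive direction is the converse, and its crux is that \emph{every} vertex cover $C$ of $G_f$ can be repaired into a monotone function. I would define $g_C:[m]^n\to\bits$ by setting $g_C(x)=1$ if and only if there exists $z\preceq x$ with $z\notin C$ and $f(z)=1$. Monotonicity of $g_C$ is immediate from transitivity of $\preceq$. Agreement with $f$ off $C$ is where the vertex-cover property enters: if $x\notin C$ and $f(x)=1$, then $z=x$ witnesses $g_C(x)=1$; and if $x\notin C$ and $f(x)=-1$, then any witness $z\preceq x$ with $z\notin C$, $f(z)=1$ must have $z\ne x$, whence $(z,x)$ is a violation with both endpoints outside $C$ — contradicting that $C$ covers $G_f$ — so $g_C(x)=-1$. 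Hence $\dist(f,g_C)\le |C|/m^n$, giving $\dist(f,\text{monotone})\le \tau(G_f)/m^n$, and together with the previous paragraph $\dist(f,\text{monotone})=\tau(G_f)/m^n$. Applying K\"onig's theorem to the bipartite graph $G_f$ yields $\tau(G_f)=\nu(G_f)$ (the maximum matching number), so $\dist(f,\text{monotone})\cdot m^n=\nu(G_f)$, which is exactly the maximum number of pairwise disjoint violations.

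I expect the repair step — exhibiting, for an arbitrary vertex cover $C$, a genuinely total monotone function agreeing with $f$ everywhere outside $C$ — to be the only real content; once bipartiteness of $G_f$ is noted, K\"onig's theorem supplies the matching/cover duality for free, and all remaining verifications (monotonicity of $g_C$, the two cases of agreement, the easy direction) are mechanical.
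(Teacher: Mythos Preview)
The paper does not supply its own proof of this statement; it is quoted as a black box from \cite{FLN+02} (their Lemma~4) and used only as a tool in the reduction of Section~\ref{sec:hypergrid}. Your proposal is correct and is essentially the standard argument: define the bipartite violation graph, show that the disagreement set of any monotone $g$ is a vertex cover, show conversely that any vertex cover $C$ can be ``repaired'' into a monotone $g_C$ agreeing with $f$ off $C$, and then apply K\"onig's theorem. This is the same route taken in \cite{FLN+02}, so there is nothing to contrast.
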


\begin{proof}[Proof of Proposition~\ref{distance-preserving}]
For every $f:\bn\to\bits$, we define $\Phi[f] : [m]^n\to\bits$  to be
the function
\begin{equation} \label{eq:Phim}
\Phi[f](x_1,\ldots,x_n) := f\big(\ind[x_1 > m/2],\ldots,\ind[x_n > m/2]\big),
\end{equation}
where we use $\ind[\cdot]$ to denote the $\{\pm 1\}$-valued indicator where $\ind[P] = 1$ if $P$ is true, and $-1$ otherwise. (Note that $m/2$ is an integer by our assumption that $m$ is even.)

It is straightforward to verify that $\Phi[f]$ is monotone if $f$ is monotone, and so it remains to show that $\Phi[f]$ is $\eps$-far from monotone if $f$ is $\eps$-far from monotone. Since $f$ is $\eps$-far from monotone, we have by Theorem~\ref{violating-pairs} that there exist $\eps 2^n$ many pairwise disjoint pairs $(x^i,y^i) \in \bn\times\bn$ that are violations with respect to $f$; we will exhibit $\eps m^n$ many pairwise disjoint pairs in $[m]^n$ that are violations with respect to $\Phi[f]$, which along with another application of Theorem~\ref{violating-pairs} completes the proof.  Let $S:\bits \to \big\{[m/2], \{(m/2)+1,\ldots,m\}\big\}$ be the set-valued function
\[ S(b) = \left\{
\begin{array}{cl}
[m/2] & \text{if $b = -1$} \\
\{(m/2)+1,\ldots,m\} & \text{if $b = 1$},
\end{array}
\right.
\]
and by a slight abuse of notation, we also define
\[ S(x) = S(x_{1}) \times \cdots \times S(x_n) \subseteq [m]^n \]
to be a function that maps points $x\in \bn$ to subsets of $[m]^n$.  Note that $|S(x)| = (m/2)^n$ for all $x\in \bn$, and  $S(x) \cap S(y) = \emptyset$ if $x\ne y$. Furthermore, $\Phi[f](x') = f(x)$ for all $x \in \bn$ and $x'\in S(x)$. In words, $S$ maps each $1$-input of $f$ to a set of $(m/2)^n$ many $1$-inputs of $\Phi[f]$, and likewise each $0$-input of $f$ to a set of $(m/2)^n$ many $0$-inputs of $\Phi[f]$.

For any pair $(x,y)\in\bn\times\bn$ that is a violation with respect to $f$, consider pairing the $(m/2)^n$ elements of $S(x)$ with the $(m/2)^n$ elements of $S(y)$ in the obvious way (i.e. each $a=(a_1,\dots,a_n) \in S(x)$ is
is paired with the unique element $b=(b_1,\dots,b_n) \in S(y)$
that has $(a_i \mod m/2)=(b_i \mod m/2)$ for all $i$).
\ignore{\rnote{This was ``in lexicographic order'', I got confused thinking about what that meant exactly.}}
Since $x\prec y$, it follows from the definition of $S$ that every $x'\in S(x)$ is paired with $y'\in S(y)$ where $x'\prec y'$. Furthermore, as noted above $\Phi[f](x') = f(x) = 1$ whereas $\Phi[f](y') = f(y) = 0$, and so every pair $(x',y')\in S(x)\times S(y)$ is a violation with respect to $\Phi[f]$.  Therefore each of the $\eps 2^n$ many pairs $(x,y)\in \bn\times\bn$ that are violations with respect to $f$ gives rise to $(m/2)^n$ many pairwise disjoint pairs $(x',y')\in S(x) \times S(y)$ that are violations with respect to $\Phi[f]$. Finally recalling that $S(x) \cap S(y) = \emptyset$ if $x \ne y$, we conclude that there are indeed $\eps 2^n\cdot (m/2)^n = \eps m^n$ many pairwise disjoint pairs that are violations with respect to $\Phi[f]$. This finishes the proof.
\end{proof}

\begin{remark}
The proof for odd $m$, deferred to Appendix~\ref{app:odd-m}, is via a similar but more involved version of Proposition~\ref{distance-preserving}.  In place of the simple indicator function
$\ind[x_i > m/2]$ (whose domain is simply $[m]$), we now use an ``almost-balanced''
monotone function $h: [m]^{k} \to \{-1,1\}$ where $k=\Theta(\log n)$ and $h$ has
some additional properties.  The fact that $k=\Theta(\log n)$ incurs an additional
logarithmic loss in the parameters but still results in a $\tilde{\Omega}(n^{1/5})$ lower
bound.
\end{remark}

\section{The algorithm}
\label{sec-alg}
\newcommand{\dens}{\mathrm{dens}}

\newcommand{\xx}{\bx}
\newcommand{\yy}{\by}
\newcommand{\zz}{\bz}

Throughout the proof of our upper bound we will assume that $1/n \le \eps \le 1/2.$ Note that this is without loss of generality, since if $\eps < 1/n$ then
the edge tester alone succeeds with probability $\Omega(\eps/n) = \Omega(\eps^2)$, and if $\eps > 1/2$ then every $f$ is $\eps$-close to
one of the two constant functions, both of which are monotone.

For our upper bound
it will be more convenient to view Boolean functions as mapping $\{0,1\}^n$ to $\{0,1\}$.  For $x,y
\in \{0,1\}^n$ we write $\|x\|_1$ to denote $\sum_{i=1}^n x_i$, the number of $1$s in $x$, and $\|x-y \|_1$ to denote $|\hspace{0.02cm}\{ i \in [n] \colon x_i \ne y_i\}\hspace{0.02cm}|$,
  the $\ell_1$-distance between $x$ and $y$. Given $1/n\le \eps \leq 1/2$, we
fix \[ d(n,\eps) := 2 \left\lceil \sqrt{2n\ln(100/\eps)}\right\rceil=
  O\big(\sqrt{n\ln(1/\epsilon)}\big),\]
and will denote $d(n,\eps)$ simply by $d$ when
  the distance parameter $\eps$ is clear from the context.
For each $i \in \{0,1,\ldots,n\}$ we let $L_i := \{x \in \zo^n \colon \|x\|_1 = i\}$ denote the $i$-th layer,
  and refer to
$$L_{\mathrm{mid}} :=  \big\{ x\in  L_i \colon i\in  [ (n-d)/2, (n+d)/2]\hspace{0.03cm}\big\}$$
as the middle layers of the hypercube. A standard Chernoff bound gives
$|\{0,1\}^n \setminus L_{\mathrm{mid}}\hspace{0.02cm}| \leq (\eps/50)\hspace{0.03cm}2^n.$
Finally, by a ``path'' we always mean a directed path of $n+1$ adjacent vertices from $0^n$ up to $1^n$.

\subsection{Two useful distributions over comparable pairs}
\label{sec-two-useful-dist}

Let $\calD = \calD_{n,\eps}$ denote the following distribution over comparable pairs $(\xx,\yy) \in L_{\mathrm{mid}}\times L_{\mathrm{mid}}$: \begin{enumerate}
\item First pick a path $\bp$ uniformly from the collection of all paths going from $0^n$ to $1^n$.\vspace{-0.1cm}
\item Pick $\xx$ and $\yy$ independently and uniformly from $\bp_{\mathrm{mid}} := \{ z \in\bp \colon z \in L_{\mathrm{mid}}\}$.
\end{enumerate}
This distribution is a slight variant of the one induced by the~\cite{CS13a} path tester, which takes a parameter $\sigma$ as input and disallows pairs $(x,y)$ for which $\| x - y\| _1$ is too small relative to $\sigma$. Our new tester will \emph{not} sample from $\calD$ (see Section~\ref{our-path-tester}), but we will use $\calD$ in our analysis. We remark here that $\xx = \yy$ with positive probability under $\calD$.

If $\xx,\yy$ were chosen independently and uniformly from $\{0,1\}^n$, then the probability that they both land in a fixed set $A$ of $\sigma 2^n$ points, for some $\sigma\in (0,1)$, would be $\sigma^2$.  The following lemma states that the probability is not much lower for
a pair drawn from $\calD$:

\begin{lemma}\label{BBL}
Let $A \subseteq L_{\mathrm{mid}}$ be a set of $\sigma2^n$ points.  Then $\Prx_{(\xx,\yy)\leftarrow \calD} [\xx,\yy \in A ] = \Omega\hspace{0.03cm} \big(\sigma^2\ln^{-1}(1/\eps) \big).$
\end{lemma}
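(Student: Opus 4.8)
The plan is to bound the probability that a pair $(\xx,\yy)\leftarrow\calD$ lands entirely in $A$ by reducing it to a statement about a single uniformly random path $\bp$. Fix a path $\bp$; once $\bp$ is fixed, $\xx$ and $\yy$ are independent uniform draws from $\bp_{\mathrm{mid}}$, which has size exactly $d+1 = \Theta(\sqrt{n\ln(1/\eps)})$. So conditioned on $\bp$, the probability that both land in $A$ equals $(|\bp\cap A|/(d+1))^2$, and by convexity (Cauchy--Schwarz over the choice of $\bp$),
\[
\Prx_{(\xx,\yy)\leftarrow\calD}[\xx,\yy\in A]
= \Ex_{\bp}\!\left[\left(\frac{|\bp\cap A|}{d+1}\right)^{\!2}\right]
\ge \left(\Ex_{\bp}\!\left[\frac{|\bp\cap A|}{d+1}\right]\right)^{\!2}
= \left(\frac{\Ex_{\bp}[\,|\bp\cap A|\,]}{d+1}\right)^{\!2}.
\]
Thus it suffices to show $\Ex_{\bp}[\,|\bp\cap A|\,] = \Omega(\sigma\sqrt{n})$ (since then the right-hand side is $\Omega(\sigma^2 n / (n\ln(1/\eps))) = \Omega(\sigma^2\ln^{-1}(1/\eps))$, as desired).

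The quantity $\Ex_{\bp}[\,|\bp\cap A|\,] = \sum_{x\in A}\Pr_{\bp}[x\in\bp]$ is a sum over $x\in A$ of the probability that a uniform random maximal path passes through $x$. For a point $x\in L_i$, the number of maximal paths through $x$ is $i!\,(n-i)!$, and the total number of paths is $n!$, so $\Pr_{\bp}[x\in\bp] = i!\,(n-i)!/n! = 1/\binom{n}{i}$. Since $A\subseteq L_{\mathrm{mid}}$, every $x\in A$ lies in some layer $L_i$ with $|i - n/2|\le d/2 = O(\sqrt{n\ln(1/\eps)})$, and for such $i$ one has $\binom{n}{i} = O(2^n/\sqrt{n})$ by a standard estimate on central binomial coefficients (the binomial distribution has a density of order $1/\sqrt n$ throughout the middle $O(\sqrt n)$ band — cf.\ Fact~\ref{binomial-weight}). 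Hence $\Pr_{\bp}[x\in\bp] = \Omega(\sqrt{n}/2^n)$ for every $x\in A$, and summing over the $\sigma 2^n$ points of $A$ gives $\Ex_{\bp}[\,|\bp\cap A|\,] = \Omega(\sigma 2^n\cdot\sqrt{n}/2^n) = \Omega(\sigma\sqrt{n})$, completing the argument.

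The only mild subtlety — and the step I would be most careful about — is the lower bound $\binom{n}{i}^{-1} = \Omega(\sqrt n / 2^n)$ uniformly over the relevant range of $i$: one needs that $\binom{n}{i}/\binom{n}{n/2}$ stays bounded above by an absolute constant for $|i-n/2| = O(\sqrt{n\ln(1/\eps)})$, which requires being slightly attentive to how the logarithmic factor in $d$ interacts with the Gaussian tail of the binomial. This is exactly the kind of estimate packaged in Fact~\ref{binomial-weight} (with an appropriate choice of the constant $c$ tied to our $d = 2\lceil\sqrt{2n\ln(100/\eps)}\rceil$), so it causes no real difficulty. Everything else — the convexity step and the path-counting identity $\Pr_\bp[x\in\bp]=1/\binom{n}{i}$ — is elementary.
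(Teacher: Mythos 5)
Your proof is correct and follows essentially the same approach as the paper's: Jensen's inequality, then the path-counting identity $\Pr_\bp[x\in\bp]=1/\binom{n}{\|x\|_1}$, then an upper bound on binomial coefficients (the paper sums layer-by-layer as $\sum_i |A\cap L_i|/|L_i|$ rather than point-by-point, but it is the same sum). The ``subtlety'' you flag at the end is not one: what you need is only $\binom{n}{i}\le\binom{n}{n/2}=O(2^n/\sqrt n)$, and the first inequality holds trivially for \emph{all} $i$, so the width $d$ of the middle band never enters; Fact~\ref{binomial-weight}, which gives a \emph{lower} bound on binomial probabilities in a central band, points in the opposite direction and is not the relevant tool here.
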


\begin{proof}
Applying Jensen's inequality, we have
\begin{align*} \Prx_{(\xx,\yy)\leftarrow\calD} [\xx,\yy\in A ]\hspace{0.03cm}
&=\hspace{0.03cm}\Ex_{\bp} \Big[\Prx_{\xx,\yy\in \bp} [\xx,\yy \in A ] \Big] \\
&=\hspace{0.03cm}\Ex_{\bp} \left[\left(\frac{|\bp_{\text{mid}} \cap A|}{|\bp_{\text{mid}}|} \right)^2\right] =\hspace{0.03cm}\Omega\left(\frac1{n\ln(1/\eps)}\right) \cdot\Ex_{\bp} \big[|\bp_{\text{mid}}\cap A|\big]^2,
\end{align*}
and so it suffices to lower bound $\Ex_{\bp}[|\bp_{\text{mid}} \cap A|]$ by $\Omega(\sigma\sqrt{n})$. This is exactly Claim 2.2.1 of~\cite{CS13a}; we repeat the calculation here for the sake of completeness:
\begin{align}
\Ex_{\bp}\big[|\bp_{\text{mid}} \cap A|\big] \hspace{0.03cm}&=\hspace{0.03cm} \Ex_{\bp}\left[\sum_{i=\frac1{2}(n-d)}^{\frac1{2}(n+d)} \ind\big[(\bp_{\text{mid}} \cap L_i) \sse A\big]\right] \nonumber \\
&=\hspace{0.02cm} \sum_{i=\frac1{2}(n-d)}^{\frac1{2}(n+d)} \Ex_{\bp}\Big[\ind\big[(\bp_{\text{mid}} \cap L_i) \sse A\big]\Big] \nonumber \\
&=\hspace{0.02cm} \sum_{i=\frac1{2}(n-d)}^{\frac1{2}(n+d)}  \frac{|A \cap L_i|}{|L_i|} \label{eq:BBL1} \\
&\ge \hspace{0.03cm} \frac{\sqrt{n}}{2^n} \sum_{i=\frac1{2}(n-d)}^{\frac1{2}(n+d)} |A \cap L_i| \hspace{0.03cm}= \hspace{0.03cm} \frac{|A| \sqrt{n}}{2^n}  =\hspace{0.03cm} \sigma\sqrt{n}, \label{eq:BBL2}
\end{align}
where we use $\ind[\cdot]$ to denote the $\{0,1\}$-valued indicator where $\ind[P] = 1$ if $P$ is true, and $0$ otherwise.
Here (\ref{eq:BBL1}) uses the fact that a uniformly random path $\bp$ from $0^n$ to $1^n$ contains a uniformly random point in layer $L_i$, and (\ref{eq:BBL2}) holds
since $|L_i|\le 2^n/\sqrt{n}$ for all $i$.
\end{proof}

We will need a numerical lemma concerning the ratio of binomial coefficients.

\begin{lemma}
\label{binomial-ratio}
Let $\eps \ge 1/n$, and $a,b\in [(n-d)/2,(n+d)/2]$ be integers where $a > b$. Then
\[  {a \choose a-b}\Big/{n-b\choose a-b} = O(1/\eps^4)\quad\ \text{and}\ \quad
{n\choose n/2}\Big/{n \choose a}=O(1/\epsilon^4). \label{eq:ratio}
\] \end{lemma}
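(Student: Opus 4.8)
The plan is to reduce both ratios in Lemma~\ref{binomial-ratio} to a single estimate and then bound that estimate by taking logarithms. The starting point is the algebraic identity
\[ \binom{a}{a-b}\Big/\binom{n-b}{a-b} \;=\; \frac{a!\,(n-a)!}{b!\,(n-b)!} \;=\; \binom{n}{b}\Big/\binom{n}{a}, \]
which follows by writing out the factorials and cancelling the common $(a-b)!$, using $a-(a-b)=b$ and $n-b-(a-b)=n-a$. Since $\binom{n}{b}\le\binom{n}{\lfloor n/2\rfloor}$ for every $b$ (unimodality of binomial coefficients), the first ratio is at most $\binom{n}{\lfloor n/2\rfloor}/\binom{n}{a}$, so it suffices to prove the second bound, namely $\binom{n}{n/2}/\binom{n}{a}=O(1/\eps^4)$ for every integer $a\in[(n-d)/2,(n+d)/2]$. (I will take $n$ even for concreteness; the odd case is identical with floors.)

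To prove this, I would first use $\binom{n}{a}=\binom{n}{n-a}$ together with the fact that $[(n-d)/2,(n+d)/2]$ is symmetric about $n/2$ to assume $a\le n/2$, writing $a=n/2-j$ with $0\le j\le d/2$. Expanding the factorials,
\[ \frac{\binom{n}{n/2}}{\binom{n}{n/2-j}} \;=\; \frac{(n/2-j)!\,(n/2+j)!}{\big((n/2)!\big)^2} \;=\; \prod_{i=1}^{j}\frac{n/2+i}{n/2-i+1}, \]
and since every factor is at least $1$, this is at most $\prod_{i=1}^{d/2}\frac{n/2+i}{n/2-i+1}$. Taking logarithms and using $\ln(1+x)\le x$ together with $\sum_{i=1}^{N}(2i-1)=N^2$, I get
\[ \ln\prod_{i=1}^{d/2}\frac{n/2+i}{n/2-i+1} \;=\; \sum_{i=1}^{d/2}\ln\!\Big(1+\frac{2i-1}{n/2-i+1}\Big) \;\le\; \frac{1}{\,n/2-d/2\,}\sum_{i=1}^{d/2}(2i-1) \;=\; \frac{(d/2)^2}{\,n/2-d/2\,}. \]
Finally I would substitute $d/2=\lceil\sqrt{2n\ln(100/\eps)}\rceil\le\sqrt{2n\ln(100/\eps)}+1$, so that $(d/2)^2\le 2n\ln(100/\eps)+O(\sqrt{n\ln(1/\eps)})$, while the hypothesis $\eps\ge 1/n$ forces $d=O(\sqrt{n\log n})=o(n)$ and hence $n/2-d/2=(n/2)(1-o(1))$. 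Thus the displayed logarithm is at most $4\ln(100/\eps)+o(1)\le 4\ln(1/\eps)+O(1)$ once $n$ exceeds an absolute constant, which yields $\binom{n}{n/2}/\binom{n}{a}=O(1/\eps^4)$; for the finitely many smaller $n$ the ratio is $2^{O(1)}$ and the bound is trivial since $\eps\le 1/2$.

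The one point that requires care is keeping the exponent down to exactly $4$: this is precisely where the constants in $d(n,\eps)=2\lceil\sqrt{2n\ln(100/\eps)}\rceil$ are calibrated, and it is easy to be wasteful here — bounding a product of roughly $d/2$ factors crudely by $\big((n+d)/(n-d)\big)^{d/2}$ (let alone $\big((n+d)/(n-d)\big)^{d}$) would only give $O(1/\eps^{c})$ for a much larger constant $c$. The identities $\binom{a}{a-b}/\binom{n-b}{a-b}=\binom{n}{b}/\binom{n}{a}$ and $\sum_{i\le N}(2i-1)=N^2$ are exactly what make the final exponent come out cleanly, so I would organize the proof around them rather than around a direct product estimate.
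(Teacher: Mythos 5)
Your proof is correct and, at its core, follows the same path as the paper's: write the extremal ratio as a product of roughly $d/2$ factors each slightly larger than $1$, take logarithms, apply $\ln(1+x)\le x$, and bound the resulting sum by $(d/2)^2/\big((n-d)/2\big)$, which by the choice of $d=2\lceil\sqrt{2n\ln(100/\eps)}\rceil$ works out to $4\ln(100/\eps)(1+o(1))=4\ln(1/\eps)+O(1)$. The worthwhile difference is your opening identity $\binom{a}{a-b}\big/\binom{n-b}{a-b}=\binom{n}{b}\big/\binom{n}{a}$: it reduces the two claimed inequalities to a single one and makes the extremal choice $b=n/2$, $a=(n\pm d)/2$ immediate from unimodality of the binomial coefficients, whereas the paper deals with the first ratio directly, dispatches the maximization with "by a routine calculation," and handles the second inequality by saying "the second equation is similar." Your reduction is a cleaner way to organize exactly the same estimate; the particular pairing of numerator and denominator factors you use ($\,(n/2+i)$ with $(n/2-i+1)\,$) differs from the paper's ($\,((n+d)/2-i)$ with $(n/2-i)\,$), but both pairings give the same $(d/2)^2/\big((n-d)/2\big)$ bound after the $\ln(1+x)\le x$ step, so this is cosmetic.

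One small note on rigor: when you write that the exponent is "at most $4\ln(100/\eps)+o(1)$," what the calculation actually gives is $4\ln(100/\eps)\cdot\frac{1+o(1)}{1-o(1)}$, and one needs the observation that the multiplicative error times $\ln(100/\eps)=O(\log n)$ is still $o(1)$ because $d/n=O(\sqrt{\log n/n})$ decays faster than $1/\log n$. You implicitly use this when invoking $\eps\ge 1/n$; spelling it out would make the step airtight, but the conclusion is right.
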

\begin{proof}
We prove the first equation and the second equation is similar.  By a routine calculation we verify that the first ratio is maximized when $a = (n+d)/2$ and $b = n/2$, and so \[  \frac{{a \choose a-b}}{{n-b\choose a-b}} \le  \frac{\frac1{2}(n+d)}{\frac{n}{2}}\cdot \frac{\frac1{2}(n+d)-1}{\frac{n}{2}-1} \cdots \frac{\frac{n}{2}+1}{\frac1{2}(n-d)+1}\\
\le \exp\left(\sum_{i=0}^{(d/2)-1} \frac{d/2}{(n/2)-i}\right),
\]
where we used $(1+t) \le e^t$ for $t \in \R$.
The lemma follows from the definition of $d$ and $\epsilon\ge 1/n$.
\end{proof}

For our analysis, the following distribution $\calD'=\calD'_{n,\epsilon}$ over
  comparable pairs $(\bx,\by)\in L_{\mathrm{mid}}\times L_{\mathrm{mid}}$
  in the middle layers comes in handy:
\begin{enumerate}
\item First pick a point $\bx$ uniformly at random from $L_{\mathrm{mid}}$.\vspace{-0.08cm}
\item Then pick a path $\bp$ uniformly from the collection of all paths going
  through $0^n$, $\bx$, and $1^n$.\vspace{-0.08cm}
\item Pick $\yy$ uniformly from $\bp_{\mathrm{mid}} := \{ z \in\bp \colon z \in L_{\mathrm{mid}}\}$.
\end{enumerate}
We note that $\calD'$ is not the same as $\calD$, since picking a uniformly random $\bx$ from the middle layers of a uniformly random path $\bp$ does not induce a uniform distribution over $L_{\text{mid}}$; however, Lemma~\ref{binomial-ratio} allows us to switch between these essentially-equivalent distributions at the cost of a $O(1/\eps^4)$ factor. (On the other hand the conditional distributions $\calD_{\bx = x}$ and $\calD'_{\bx = x}$ on $\by$ are the same for all possible outcomes $x \in L_{\mathrm{mid}}$ of $\bx$.)

We get the following corollary from Lemmas \ref{BBL} and \ref{binomial-ratio}:
\begin{corollary}\label{hhh}
Let $A \subseteq L_{\mathrm{mid}}$ be a set of $\sigma2^n$ points. Then
$$\Prx_{(\bx,\by)\leftarrow \calD'} [\bx,\by \in A ] = \Omega\hspace{0.03cm} \big(\sigma^2\epsilon^4\ln^{-1}(1/\eps) \big).$$
\end{corollary}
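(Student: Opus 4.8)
The plan is to derive Corollary~\ref{hhh} from Lemma~\ref{BBL} by a pointwise comparison of the distributions $\calD$ and $\calD'$. The key observation, already noted in the text just before the corollary, is that the conditional law of $\by$ given $\bx = x$ is identical under $\calD$ and $\calD'$ for every outcome $x \in L_{\mathrm{mid}}$. Hence it suffices to compare the \emph{marginal} laws of $\bx$ under the two processes, and for this I would show that $\Prx_{\calD'}[\bx = x] = \Omega(\eps^4)\cdot \Prx_{\calD}[\bx = x]$ uniformly over $x \in L_{\mathrm{mid}}$.

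First I would compute both marginals explicitly. Under $\calD$, after picking a uniform path $\bp$ we pick $\bx$ uniformly from $\bp_{\mathrm{mid}}$, which always consists of exactly one point in each of the $d+1$ middle layers; and since a uniformly random $0^n$-to-$1^n$ path contains a uniformly random point of $L_i$ (the fact already used to obtain (\ref{eq:BBL1})), for $x \in L_i$ with $i \in [(n-d)/2,(n+d)/2]$ we get $\Prx_{\calD}[\bx = x] = \tfrac{1}{(d+1)\hspace{0.02cm}|L_i|}$. Under $\calD'$, by definition $\bx$ is uniform on $L_{\mathrm{mid}}$, so $\Prx_{\calD'}[\bx = x] = 1/|L_{\mathrm{mid}}|$; and since $|L_{\mathrm{mid}}|$ is a sum of $d+1$ binomial coefficients each of which is at most $\binom{n}{n/2}$, we have $|L_{\mathrm{mid}}| \le (d+1)\binom{n}{n/2}$ and therefore $\Prx_{\calD'}[\bx = x] \ge \tfrac{1}{(d+1)\binom{n}{n/2}}$. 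Taking the ratio of the two estimates gives $\Prx_{\calD'}[\bx = x]\big/\Prx_{\calD}[\bx = x] \ge |L_i|/\binom{n}{n/2} = \binom{n}{i}/\binom{n}{n/2}$, which is $\Omega(\eps^4)$ by the second inequality of Lemma~\ref{binomial-ratio} (applied in the worst case $i = n/2$).

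Finally I would chain these facts together. Writing $\Prx_{(\bx,\by)\leftarrow\calD'}[\bx,\by\in A] = \sum_{x \in A} \Prx_{\calD'}[\bx = x]\cdot \Prx_{\by\sim\calD'_{\bx=x}}[\by \in A]$, replacing each $\Prx_{\calD'}[\bx = x]$ by $\Omega(\eps^4)\cdot\Prx_{\calD}[\bx = x]$, and using that $\calD'_{\bx=x}$ and $\calD_{\bx=x}$ induce the same law on $\by$, the sum is bounded below by $\Omega(\eps^4)\sum_{x\in A}\Prx_{\calD}[\bx = x]\Prx_{\by\sim\calD_{\bx=x}}[\by\in A] = \Omega(\eps^4)\cdot\Prx_{(\bx,\by)\leftarrow\calD}[\bx,\by\in A]$, which is $\Omega(\sigma^2\eps^4\ln^{-1}(1/\eps))$ by Lemma~\ref{BBL}. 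I do not anticipate a genuine obstacle here; the only mildly delicate points are getting the normalization of the $\bx$-marginal of $\calD$ right (each middle layer is hit with probability exactly $1/(d+1)$, independently of which layer) and applying the binomial-ratio bound at the extremal layer, after which everything is bookkeeping.
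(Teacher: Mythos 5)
Your proposal is correct and follows essentially the same route as the paper: both observe that the conditional law of $\by$ given $\bx = x$ is the same under $\calD$ and $\calD'$, then compare the $\bx$-marginals pointwise via the second inequality of Lemma~\ref{binomial-ratio} to get a factor $\Omega(\eps^4)$, and conclude by chaining with Lemma~\ref{BBL}. The only difference is that you spell out the explicit computation of the two marginals ($\Pr_{\calD}[\bx = x] = 1/((d+1)|L_i|)$ versus $\Pr_{\calD'}[\bx = x] = 1/|L_{\mathrm{mid}}| \ge 1/((d+1)\binom{n}{n/2})$), which the paper leaves implicit when it simply cites Lemma~\ref{binomial-ratio}.
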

\begin{proof}
It is clear from the definition of $\calD,\calD'$ that the conditional distribution of
  $\yy$ induced from $\calD$ by conditioning on a particular outcome of $\bx$
  is the same as that induced from $\calD'$ under the same conditioning.
It follows from the second part of Lemma \ref{binomial-ratio} that for any $x\in L_{\mathrm{mid}}$ we have
$$
\Prx_{(\bx,\by)\leftarrow \calD'}[\bx=x]=\Omega(\epsilon^4)\cdot \Prx_{(\xx,\yy)\leftarrow \calD}[\xx=x].
$$
As a result, we have for every comparable pair $(x,y)$ in the middle layers
$$
\Prx_{(\bx,\by)\leftarrow \calD'}[(\bx,\by)=(x,y)]=\Omega(\epsilon^4)\cdot \Prx_{(\bx,\by)\leftarrow \calD}[(\bx,\by)=(x,y)].
$$
The claim then follows from Lemma \ref{BBL}.
\end{proof}

\subsection{Density and score}
\label{sec-alg-score}
We need the following definition to give a more detailed
  analysis on the consequence of Corollary \ref{hhh}, which is
  key to the analysis of our monotonicity tester described in Section \ref{tester-sec}.

\begin{definition}[density and score]
Let $A\subseteq \{0,1\}^n$. For all $x \in \zo^n$ and $k\in \{0,1,\ldots,n\}$, we define the
  following quantities:
\[ \dens^\down_k(x,A) :=
\Prx_{\substack{\yy\preceq x \\ \|\yy - x\|_1 = k}}[\yy \in A] \text{\ if $k \le \|x\|_1$, and $\dens^\down_k(x,A) := 0$ otherwise},
\]
and similarly
\[ \dens^\up_k(x,A) := \Prx_{\substack{\yy \succeq x\\ \|\yy - x\|_1 = k}}[\yy \in A] \text{\ if $k \le n-\|x\|_1$, and $\dens^\up_k(x,A) := 0$ otherwise}.\]
We also define
\[ \score^\down(x,A) :=  \sum_{k = 0}^n \dens^\down_k(x,A)\quad\ \text{and}\ \quad \score^\up(x,A) :=  \sum_{k = 1}^n \dens^\up_{k}(x,A), \]
and refer to $\score^\down(x,A)$ as the \emph{downward $A$-score of $x$} and $\score^\up(x,A)$ as its \emph{upward $A$-score}.
\end{definition}

We point out the asymmetry between the definitions of $\score^\down(x,A)$ and $\score^\up(x,A)$: the first is summed over $k$ starting at $0$, whereas the second is summed over $k$ starting at $1$. (Note that $\dens^\down_0(x,A) = \dens^\up_0(x,A) = \ind[x \in A]$.) We will need the fact that both the upward and downward $A$-scores of any $x\in \zo^n$ are at most $d=d(n,\eps)$ when $A\sse L_{\text{mid}}$.

The following lemma relates the distribution $\calD'$ (more precisely, the distribution over $\yy$ that is induced by conditioning on a particular outcome of $\xx$) to the notion of score:

\begin{lemma}
\label{score}
Let $A \subseteq L_{\mathrm{mid}}$ be a set of $\sigma2^n$ points and fix $x^* \in L_{\mathrm{mid}}$. Then
\[ \Prx_{(\xx,\yy)\leftarrow\calD'}\big[\yy \in A \mid \xx = x^*\big] = \frac1{\Theta(\sqrt{n\ln(1/\eps)})} \left({\score^\down(x^*,A) + \score^\up(x^*,A)}\right). \]
\end{lemma}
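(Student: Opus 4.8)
The plan is to condition on the outcome $\bx = x^*$ and then simply unwind the definition of $\calD'$. Given $\bx = x^*$, a draw from $\calD'$ picks a uniformly random path $\bp$ through $0^n$, $x^*$ and $1^n$, and then outputs a uniformly random vertex $\yy$ of $\bp_{\mathrm{mid}}$. Since every path meets each layer $L_i$ in exactly one vertex, $|\bp_{\mathrm{mid}}|$ is the same for all these paths and equals the number of middle layers, which is $\Theta(d) = \Theta(\sqrt{n\ln(1/\eps)})$ by the definition of $d$. Hence, writing $\bp$ for a uniformly random path through $x^*$,
\[ \Prx_{(\xx,\yy)\leftarrow\calD'}\big[\,\yy\in A \mid \xx=x^*\,\big] \;=\; \frac{\Ex_{\bp}\big[\,|\bp_{\mathrm{mid}}\cap A|\,\big]}{|\bp_{\mathrm{mid}}|}, \]
so it suffices to show $\Ex_{\bp}\big[\,|\bp_{\mathrm{mid}}\cap A|\,\big] = \score^\down(x^*,A)+\score^\up(x^*,A)$.

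The next step is to identify, for each middle layer $L_i$, the distribution of the unique vertex of $\bp$ lying in $L_i$. A uniformly random path through $x^*$ decomposes into an independent uniformly random maximal chain from $0^n$ up to $x^*$ and a uniformly random maximal chain from $x^*$ up to $1^n$ (the number of paths through $x^*$ factors as $\|x^*\|_1!\,(n-\|x^*\|_1)!$). Identifying a maximal chain from $0^n$ to $x^*$ with a uniformly random ordering of the coordinates on which $x^*$ equals $1$, one sees that for $i < \|x^*\|_1$ the vertex of $\bp$ in $L_i$ is a uniformly random point $\yy \preceq x^*$ with $\|\yy - x^*\|_1 = \|x^*\|_1 - i$; symmetrically, for $i > \|x^*\|_1$ it is a uniformly random $\yy \succeq x^*$ with $\|\yy - x^*\|_1 = i - \|x^*\|_1$; and for $i = \|x^*\|_1$ it is $x^*$ itself. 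Therefore $\Pr[\bp\cap L_i \in A]$ equals $\dens^\down_{\|x^*\|_1 - i}(x^*,A)$, $\ind[x^*\in A]$, or $\dens^\up_{i - \|x^*\|_1}(x^*,A)$ in these three cases, and summing over $i$ in the middle-layer range (by linearity of expectation) expresses $\Ex_{\bp}\big[\,|\bp_{\mathrm{mid}}\cap A|\,\big]$ as $\ind[x^*\in A]$ plus an initial segment of the sum defining $\score^\down(x^*,A)$ plus an initial segment of the sum defining $\score^\up(x^*,A)$.

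Finally I would invoke the hypothesis $A\subseteq L_{\mathrm{mid}}$ to upgrade these initial segments to the full sums: if $\yy\preceq x^*$ with $\|\yy-x^*\|_1 = j$ exceeding the number of middle layers below level $\|x^*\|_1$, then $\|\yy\|_1$ drops below the lowest middle layer, so $\yy\notin A$ and $\dens^\down_j(x^*,A)=0$; the upward side is symmetric. Hence the downward initial segment together with $\ind[x^*\in A] = \dens^\down_0(x^*,A)$ is exactly $\score^\down(x^*,A)$, and the upward initial segment is exactly $\score^\up(x^*,A)$ — here the asymmetry in the definition of $\score$ (downward sum starting at $k=0$, upward at $k=1$) is precisely what lets the single layer $i=\|x^*\|_1$ contribute $\ind[x^*\in A]$ exactly once. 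Combined with the displayed identity, this finishes the proof. I expect the only delicate point to be this last bookkeeping — matching the middle-layer summation ranges with the supports of the $\dens$ terms and counting $\ind[x^*\in A]$ exactly once — but it is a short verification; the one genuinely substantive ingredient is the chain-decomposition step identifying the layer-$L_i$ vertex of a random path through $x^*$ with a uniformly random point at the prescribed distance from $x^*$.
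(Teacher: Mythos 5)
Your proof is correct and takes essentially the same approach as the paper's: both condition on $\bx = x^*$, pull out $|\bp_{\mathrm{mid}}| = \Theta(d)$, and then compute $\Ex_{\bp}\big[|\bp_{\mathrm{mid}}\cap A|\big]$ layer-by-layer using the fact that the vertex of a random path through $x^*$ at distance $k$ above or below $x^*$ is uniform at that distance, which yields exactly $\score^\down(x^*,A)+\score^\up(x^*,A)$. Your writeup simply spells out the chain-decomposition justification and the $A\subseteq L_{\mathrm{mid}}$ bookkeeping (extending the truncated sums to full sums, and counting $\ind[x^*\in A]$ once) that the paper leaves implicit.
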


\begin{proof}
This holds since
\begin{align*}
\Prx_{(\xx,\yy)\leftarrow\calD'}\big[\yy \in A \mid \xx = x^*\big] &= \Ex_{\bp\ni x^*} \Bigg[ \frac{|\bp_{\text{mid}} \cap A|}{|\bp_{\text{mid}}|} \Bigg] \nonumber \\
&=
\frac1{\Theta(d)}\cdot \Ex_{\bp\ni x^*} \big[| \bp_{\text{mid}} \cap A|\big]
\\ &=
\frac1{\Theta(d)} \left( \sum_{k\ge 0} \left(\Ex_{\substack{\yy \preceq x^*\\ \|\yy-x^*\|_1=k}}\big[\ind[\yy\in A]\big]\right)
  +\sum_{k\ge 1} \left(\Ex_{\substack{\yy\succ x^*\\ \|\yy-x^*\|_1=k}}\big[\ind[\yy\in A]\big]\right)\right)
\\&=
\frac1{\Theta(d)} \left(\sum_{k\ge 0}
\left(\Prx_{\substack{\by \preceq x^* \\ \| \by - x^*\|_1 = k}}[\by \in A]\right)+\sum_{k\ge 1}\left(
\Prx_{\substack{\by\succ x^* \\ \| \by - x^*\|_1 = k}}[\by \in A]
\right)\right)\\&=
\frac1{\Theta(d)}\left(\score^\down(x^*,A)+\score^\up(x^*,A)\right).\qedhere
\end{align*} \renewcommand{\qedsymbol}{}
\end{proof}

We use the previous two lemmas to lower bound the expected downward $A$-score of an $\bx$
  drawn uniformly at random from $A$:

\begin{lemma}
\label{downward-score}
Let $\eps \ge 1/n$ and $A \sse L_{\mathrm{mid}}$ be a set of $\sigma2^n$ points. Then \[ \Ex_{\bx\in A}\big[\score^\down(\bx,A)\big] = \Omega\left(\dfrac{\eps^8\sigma\sqrt{n}}{\sqrt{\ln(1/\eps)}}\right).\]
\end{lemma}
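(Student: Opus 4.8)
The plan is to combine Corollary~\ref{hhh} and Lemma~\ref{score} to get a lower bound on the \emph{combined} (downward plus upward) score averaged over $\bx \in A$, and then argue that at least one of the two scores contributes a constant fraction of this bound. First I would observe that by Corollary~\ref{hhh},
\[ \Prx_{(\bx,\by)\leftarrow\calD'}[\bx,\by\in A] = \Omega\big(\sigma^2\eps^4\ln^{-1}(1/\eps)\big). \]
On the other hand, writing this probability by conditioning on the outcome of $\bx$ and using that $\Pr_{(\bx,\by)\leftarrow\calD'}[\bx = x^*] = \Theta\big(\tfrac{1}{|L_{\mathrm{mid}}|}\big)$ for $x^* \in L_{\mathrm{mid}}$ up to the $O(1/\eps^4)$ distortion already accounted for (or more carefully, $\Pr_{\calD'}[\bx=x^*]\ge \Omega(\eps^4)\cdot \tfrac{1}{|L_{\mathrm{mid}}|}\cdot\Theta(1)$ — actually we can just use Corollary~\ref{hhh} directly and avoid re-deriving this), I would like to instead directly write
\[ \Omega\big(\sigma^2\eps^4\ln^{-1}(1/\eps)\big) = \Prx_{(\bx,\by)\leftarrow\calD'}[\bx,\by\in A] \le \Prx_{(\bx,\by)\leftarrow\calD'}[\bx\in A]\cdot \max_{x^*\in A}\Prx_{\calD'}[\by\in A\mid \bx = x^*], \]
but that would only give a bound on the max, not the average. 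So instead the cleaner route: condition on $\bx\in A$. We have $\Pr_{\calD'}[\bx\in A] \le 1$, and more precisely I would compare $\Pr_{\calD'}[\bx = x^*\mid \bx\in A]$ to the uniform distribution on $A$. Since $\Pr_{\calD'}[\bx=x^*] = \Theta(1)\cdot\Omega(\eps^4)\cdot\Pr_{\calD}[\bx=x^*]$ and under $\calD$ the marginal of $\bx$ is within a $\Theta(\sqrt n)$-ratio of uniform over all of $\{0,1\}^n$ by the same argument as in Lemma~\ref{BBL} (each layer $L_i$ has $|L_i|\ge 2^n/\Theta(\sqrt n)$ for $i$ in the middle range), the conditional distribution $\Pr_{\calD'}[\bx=x^*\mid \bx\in A]$ is at most $O(1/\eps^4)\cdot \tfrac{1}{|A|}$ — i.e. it is pointwise comparable to uniform on $A$ up to $\mathrm{poly}(1/\eps)$. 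Hence
\[ \Omega\big(\sigma^2\eps^4\ln^{-1}(1/\eps)\big) = \Ex_{x^*\sim(\bx\mid \bx\in A)}\Big[\Prx_{\calD'}[\by\in A\mid \bx = x^*]\Big]\cdot \Prx_{\calD'}[\bx\in A] \le O(1/\eps^4)\cdot \Ex_{\bx\in A}\Big[\Prx_{\calD'}[\by\in A\mid \bx]\Big]. \]

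Next I would substitute Lemma~\ref{score}: $\Pr_{\calD'}[\by\in A\mid \bx=x^*] = \tfrac{1}{\Theta(\sqrt{n\ln(1/\eps)})}\big(\score^\down(x^*,A)+\score^\up(x^*,A)\big)$. Plugging this in and rearranging yields
\[ \Ex_{\bx\in A}\big[\score^\down(\bx,A) + \score^\up(\bx,A)\big] = \Omega\big(\eps^8\,\sigma\,\sqrt{n}\,\ln^{-1}(1/\eps)\cdot\sqrt{\ln(1/\eps)}\big) = \Omega\!\left(\frac{\eps^8\sigma\sqrt n}{\sqrt{\ln(1/\eps)}}\right), \]
which already matches the target bound, but for the \emph{sum} of the two scores rather than the downward score alone. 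The final step is to pass from the sum to the downward score. Here I would exploit the near-symmetry of $A$ within the middle layers: for $x\in A$ with $\|x\|_1 = i$ in the middle range, a rank-shift/reflection argument (or a direct averaging over which layer $\bx$ lands in) shows $\Ex_{\bx\in A}[\score^\up(\bx,A)]$ and $\Ex_{\bx\in A}[\score^\down(\bx,A)]$ are within a constant factor of each other — intuitively, summing $\dens^\up_k$ over $x$ ranging over $A\cap L_i$ and $k$ counts the same comparable pairs $(x,y)$, $x\prec y$, both in $A$, as summing $\dens^\down_k$ over $A\cap L_j$; partitioning pairs by their two endpoints' layers and using that each layer in the middle range has size $\Theta(2^n/\sqrt n)$ makes the two sums comparable. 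Therefore $\Ex_{\bx\in A}[\score^\down(\bx,A)] = \Omega\!\big(\Ex_{\bx\in A}[\score^\down(\bx,A)+\score^\up(\bx,A)]\big)$, completing the proof.

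The main obstacle I anticipate is the last step — cleanly justifying that the downward score captures a constant fraction of the total. One has to be careful because the split of $A$ across the middle layers $L_{(n-d)/2},\dots,L_{(n+d)/2}$ could a priori be very uneven, so a crude "by symmetry" claim is not quite enough; the right formalization is a double-counting identity: $\sum_{x\in A}\score^\up(x,A) = \sum_{\substack{x\prec y,\ x,y\in A}} \tfrac{1}{|L_{\|y\|_1}|/|L_{\|x\|_1}|}\cdot(\text{something})$ — more precisely, $\dens^\up_k(x,A)$ for $x\in L_i$ equals $|A\cap L_{i+k}|/\binom{n-i}{k}$ restricted to $y\succeq x$, and one relates $\sum_{x\in A\cap L_i}\dens^\up_k(x,A)$ to $\sum_{y\in A\cap L_{i+k}}\dens^\down_k(y,A)$ via the identity $\sum_{x\in A\cap L_i}|\{y\in A\cap L_{i+k}: y\succeq x\}| = \sum_{y\in A\cap L_{i+k}}|\{x\in A\cap L_i: x\preceq y\}|$ together with $\binom{n-i}{k}$ versus $\binom{i+k}{k}$, which are within a $\mathrm{poly}(1/\eps)$ factor (or even $O(1)$) by Lemma~\ref{binomial-ratio} since $i,i+k$ are in the middle range. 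I would be prepared for the possibility that the intended proof simply keeps the sum $\score^\down+\score^\up$ implicit in later lemmas and this lemma statement is really a mild abuse, in which case the symmetry argument can be replaced by just noting $\score^\down(x,A)\le \score^\down(x,A)+\score^\up(x,A)$ is the wrong direction — so the double-counting is genuinely needed, and I would lead with it.
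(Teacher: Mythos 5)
Your high-level route — combine Corollary~\ref{hhh} with Lemma~\ref{score} to lower-bound the average of $\score^\down+\score^\up$ over $A$, then use a double-counting argument via Lemma~\ref{binomial-ratio} to transfer this to $\score^\down$ alone — is exactly the route the paper takes. The double-counting identity you sketch at the end ($\sum_{x\in A\cap L_i}|\{y\in A\cap L_{i+k}:y\succeq x\}|=\sum_{y\in A\cap L_{i+k}}|\{x\in A\cap L_i:x\preceq y\}|$ together with comparing $\binom{n-i}{k}$ to $\binom{i+k}{k}$) is precisely the paper's inequality (\ref{up-vs-down}), just organized by layer rather than by re-indexing the double sum over pairs $(x,y)\in A\times A$. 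So the approach is correct and is not a new route.

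However, your bookkeeping of the $\eps$-factors is off in two compensating places, and this should be fixed rather than left as a happy coincidence. First, the conditioning step is both overcomplicated and wrong: under $\calD'$ the point $\bx$ is drawn \emph{exactly} uniformly from $L_{\mathrm{mid}}$, so the conditional law of $\bx$ given $\bx\in A$ is \emph{exactly} uniform on $A$ — there is no $O(1/\eps^4)$ distortion to introduce, and routing the comparison through $\calD$ is unnecessary. You should instead use $\Pr_{\calD'}[\bx\in A]=|A|/|L_{\mathrm{mid}}|=\Theta(\sigma)$ (since $|L_{\mathrm{mid}}|=\Theta(2^n)$). As written your inequality chain either retains an extra factor of $\sigma$ (if you keep $\Pr_{\calD'}[\bx\in A]\le 1$) or an extra factor of $\eps^{-4}$ that has no source; the displayed conclusion $\Ex_{\bx\in A}[\score^\down+\score^\up]=\Omega(\eps^8\sigma\sqrt{n}/\sqrt{\ln(1/\eps)})$ does not follow cleanly from what precedes it. Done correctly this step yields $\Ex_{\bx\in A}[\score^\down+\score^\up]=\Omega(\eps^4\sigma\sqrt{n}/\sqrt{\ln(1/\eps)})$. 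Second, the parenthetical ``(or even $O(1)$)'' for the binomial-coefficient ratio is incorrect: Lemma~\ref{binomial-ratio} gives $\binom{i+k}{k}/\binom{n-i}{k}=O(1/\eps^4)$, not $O(1)$, so the double-counting step yields $\Ex_{\bx\in A}[\score^\down(\bx,A)]\ge\Omega(\eps^4)\cdot\Ex_{\bx\in A}[\score^\up(\bx,A)]$, which in turn gives $\Ex_{\bx\in A}[\score^\down]\ge\Omega(\eps^4)\cdot\Ex_{\bx\in A}[\score^\down+\score^\up]$ — a polynomial-in-$1/\eps$ loss, not a constant one. The two errors are opposite in direction and of the same size, so your final $\eps^8$ happens to match, but your intermediate bound is too pessimistic by $\eps^4$ and your last step too optimistic by $\eps^4$. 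With both steps tightened — no loss in the conditioning, an $\Omega(\eps^4)$ loss in the double-counting — the accounting lines up with the statement and with the paper's proof.
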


\begin{proof}
We begin with the claim that
\begin{equation} \Ex_{\bx\in A}\big[\score^\down(\bx,A)\big]\ge \Omega(\eps^4) \Ex_{\bx\in A}\big[\score^\up(\bx,A)\big]+1, \label{up-vs-down}
\end{equation}
where the $+1$ is due to $\dens^{\down}_0(\bx,A)=1$.
To see (\ref{up-vs-down}), we rewrite the LHS of the inequality as follows:
\begin{align}
\Ex_{\bx\in A}\big[\score^\down(\bx,A)\big]-1 \hspace{0.03cm}
&=\hspace{0.03cm} \frac1{\sigma 2^n} \sum_{x\in A} \sum_{k\ge 1}
\sum_{\substack{y \prec x \\ \|y-x\|_1 = k}} \frac{\ind[y\in A]}{{\|x\|_1\choose k}}\nonumber \\
&=\hspace{0.03cm} \frac1{\sigma 2^n} \sum_{x\in A} \sum_{\substack{y\in A\\ y \prec x}} \frac1{{\|x\|_1\choose \| x -y\|_1}} \nonumber \\
&=\hspace{0.03cm} \frac1{\sigma 2^n} \sum_{y\in A} \sum_{\substack{x\in A\\ x\succ y}} \frac{{n-\|y\|_1\choose \| x - y\|_1}}{{\|x \|_1 \choose \| x - y \|_1}} \cdot \frac1{{n-\|y\|_1\choose \| x - y\|_1}} \nonumber \\
&\ge\hspace{0.03cm}  \min_{\substack{x \succ y\\ x,y\in L_{\mathrm{mid}}}}\Bigg\{ \frac{{n-\|y\|_1\choose \| x - y\|_1}}{{\|x \|_1 \choose \| x - y \|_1}}\Bigg\} \Ex_{\by\in A} \big[\score^\up(\by,A)\big]
\hspace{0.03cm} = \hspace{0.03cm} \Omega(\eps^4) \Ex_{\by\in A}\big[\score^\up(\by,A)\big], \nonumber
\end{align}
where the final equality holds by the first part of Lemma~\ref{binomial-ratio}. This proves (\ref{up-vs-down}), which together with Lemma~\ref{score} gives
\begin{align} \Prx_{(\bx,\by)\leftarrow\calD'}\big[\by\in A \mid \bx\in A\big]\hspace{0.02cm} &= \hspace{0.02cm}\frac1{\Theta(\sqrt{n\ln(1/\eps)})} \Ex_{\bx\in A} \big[\score^\up(\bx,A) + \score^\down(\bx,A)\big]\nonumber \\[0.5ex]
&=\hspace{0.02cm}  \frac{O(\eps^{-4})}{\Theta(\sqrt{n\ln(1/\eps)})} \left(\Ex_{\bx\in A}\big[\score^\down(\bx,A)\big]\right). \label{up-vs-down3}
\end{align}
On the other hand, by Corollary~\ref{hhh}\ignore{\rnote{This was ``Lemma~\ref{BBL}'' but I think it's the corollary we're using instead.}} we have
\begin{align}
\Prx_{(\bx,\by)\leftarrow\calD'}\big[\by\in A \mid \bx \in A\big] \hspace{0.02cm}
&= \hspace{0.02cm}\frac{\Prx_{(\bx,\by)\leftarrow \calD'} [\bx,\by\in A]}{\Prx_{(\bx,\by)\leftarrow \calD'} [\bx\in A]} \nonumber \\[0.3ex]
&=\hspace{0.02cm}  \frac{\Prx_{(\bx,\by)\leftarrow \calD'} [\bx,\by\in A]}{\sigma}\ =\ \Omega\left(\frac{\epsilon^4\sigma}{\ln(1/\eps)}\right).\label{up-vs-down4}
\end{align}
Combining (\ref{up-vs-down3}) with (\ref{up-vs-down4}) and rearranging completes the proof.
\end{proof}


Lemma \ref{downward-score} lower bounds the average downward $A$-score of points $x \in A$; its conclusion may be equivalently rewritten as the following sum:
\begin{equation} \label{eq:lb}
\sum_{x \in A} \score^\down(x,A) = \Omega\left(\frac{\eps^8 \sigma^2 \sqrt{n} \hspace{0.04cm}2^n}{\sqrt{\ln(1/\epsilon)}}\right).
\end{equation}

We may express the downward $A$-score $\score^{\down}(x,A)$ of a point $x$ as a sum over $m+1$  ``buckets'' of exponentially increasing size:
\begin{equation}\label{eq:haha}
\score^{\down}(x,A) = \left(\sum_{k \in B_0} \dens^\down_{k}(x,A) \right)+ \left(\sum_{k \in B_1} \dens^\down_{k}(x,A) \right)+ \cdots + \left( \sum_{k \in B_m} \dens^\down_{k}(x,A)\right),
\end{equation}
where $B_0=\{0\}$ and $B_i=\{2^{i-1},\dots,2^{i}-1\}$ for each $i\in [m]$ and
  $m = \lceil \log (n+1)\rceil$.
It will be useful for us to focus on a particular bucket $\ell\in \{0,1,\ldots,m\}$
  such that the overall sum of $\score^\down(x,A)$ in (\ref{eq:lb}) has a ``large'' contribution
  from the $\ell$-th bucket.
A straightforward argument, exploiting the fact
that there are only logarithmically many buckets, lets us achieve this without losing too much in the sum: 

\begin{corollary}
\label{bucket}
Let $\eps \ge 1/n$ and $A\sse L_{\mathrm{mid}}$ be a set of {$\sigma 2^n$ points}.
There exists
  $\ell \le m$ 
  such that
\begin{equation}\label{eq:ttt} \sum_{x\in A}\sum_{k \in B_\ell} \dens^\down_k(x,A)  = \Omega\left(\frac{ \eps^8\sigma^2\sqrt{n}\hspace{0.04cm}2^n}{(\log n)\sqrt{\ln(1/\eps)}}\right). \end{equation}
\end{corollary}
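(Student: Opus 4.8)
The plan is to apply a simple averaging (pigeonhole) argument over the $m+1 = \lceil \log(n+1)\rceil + 1$ buckets $B_0, \ldots, B_m$. First I would start from the lower bound \eqref{eq:lb}, namely
\[ \sum_{x \in A} \score^\down(x,A) = \Omega\!\left(\frac{\eps^8 \sigma^2 \sqrt{n}\, 2^n}{\sqrt{\ln(1/\eps)}}\right), \]
and then rewrite the left-hand side using the bucket decomposition \eqref{eq:haha}, interchanging the order of summation to get
\[ \sum_{x \in A} \score^\down(x,A) = \sum_{\ell=0}^{m} \left( \sum_{x \in A} \sum_{k \in B_\ell} \dens^\down_k(x,A) \right). \]
Since this is a sum of $m+1 = O(\log n)$ nonnegative terms, at least one index $\ell \le m$ must have its corresponding inner double-sum be at least a $1/(m+1)$ fraction of the total, i.e.\ at least
\[ \frac{1}{m+1} \cdot \Omega\!\left(\frac{\eps^8 \sigma^2 \sqrt{n}\, 2^n}{\sqrt{\ln(1/\eps)}}\right) = \Omega\!\left(\frac{\eps^8 \sigma^2 \sqrt{n}\, 2^n}{(\log n)\sqrt{\ln(1/\eps)}}\right), \]
which is exactly the claimed bound \eqref{eq:ttt}. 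Here I would use $m+1 = \lceil\log(n+1)\rceil + 1 = \Theta(\log n)$ to absorb the $m+1$ in the denominator into a $\Theta(\log n)$ factor.

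There is essentially no obstacle here: the only thing to be careful about is that the buckets $B_0, \ldots, B_m$ genuinely partition $\{0,1,\ldots,n\}$ (so that summing the per-bucket contributions recovers $\score^\down(x,A)$ exactly, with no double-counting or omission), which follows from $B_0 = \{0\}$, $B_i = \{2^{i-1}, \ldots, 2^i - 1\}$ for $i \in [m]$, and $m = \lceil \log(n+1)\rceil$ ensuring $2^m - 1 \ge n$. One can also note that $\dens^\down_k(x,A) = 0$ whenever $k > \|x\|_1$, so any indices in $B_m$ exceeding $n$ contribute nothing and cause no harm. The interchange of the two finite sums $\sum_{x \in A}$ and $\sum_{\ell=0}^m$ is trivially valid since everything is finite and nonnegative. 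I do not anticipate any subtlety; this corollary is purely a bookkeeping step extracting a single "dominant" scale of distances at which the score mass concentrates, to be used in the subsequent analysis of the tester.
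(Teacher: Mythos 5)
Your proposal is correct and matches the paper's proof exactly: the paper states that Corollary~\ref{bucket} ``follows from (\ref{eq:lb}), (\ref{eq:haha}), and the fact that there are only $m+1$ many buckets,'' which is precisely the averaging argument you spell out. Your additional remarks about the buckets partitioning $\{0,\ldots,n\}$ and the vanishing of $\dens^\down_k$ for large $k$ are accurate and harmless elaborations.
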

\begin{proof}
This follows from (\ref{eq:lb}), (\ref{eq:haha}), and the fact that there are only $m+1$ many buckets.
\end{proof}

Corollary~\ref{bucket} gives us a lower bound on the sum of downward $A$-scores of points $x \in A$ coming from a certain bucket $B_\ell$. Our next corollary uses this to give a lower bound on the sum of downward $A$-scores of points $y \in A_u$ coming from (essentially) the same bucket $B_\ell$, where $A_u$ is an ``upper vertex boundary'' of $A$ in the following sense: there exists an $|A|$-sized matching $M$ of edges $(x,y)$ where $x\prec y$, $x \in A$ and $y \in A_u$.

\begin{corollary}
\label{bucket2}
Let $\eps \ge 1/n$ and $M$ be a matching of $\sigma 2^n$ edges in the middle layers. Let
\begin{align*}
A&:= \big\{ x \in \zo^n\colon \text{$x \prec y$ and $(x,y)\in M$}\big\}\quad \text{and}\\[0.6ex]
A_u &:= \big\{ y \in \zo^n \colon \text{$y \succ x$ and $(x,y)\in M$}\big\}
\end{align*}
be the lower and upper endpoints of edges in $M$,
  respectively.  For each bucket $B_i$, $i\in \{0,1,\ldots,m\}$, we let $B_i':=\{j+1:j\in B_i\}$. Then there exists
an integer $\ell\le m$ such that
\begin{equation}
\sum_{y\in A_u}\sum_{k\in B_\ell'}\dens^\down_k(y,A)  = \Omega\left(\frac{2^{\ell+n}\eps^8\sigma^2}{(\log n) \sqrt{n\ln(1/\eps)}}\right).\label{large-score}
\end{equation}
\end{corollary}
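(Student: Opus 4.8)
The plan is to derive this from Corollary~\ref{bucket} by transporting the downward-score mass of each lower endpoint $x\in A$ up to its matched upper endpoint $y\in A_u$. This costs only a shift of the distance index by one — which is exactly why $B_\ell'$ rather than $B_\ell$ appears in the statement — together with a multiplicative loss that turns out to be $\Omega(2^\ell/n)$ on the bucket $B_\ell$, accounting for the extra factor of $2^\ell/n$ relative to Corollary~\ref{bucket}.

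First I would check that Corollary~\ref{bucket} applies to $A$ itself. Since $M$ is a matching, its $\sigma 2^n$ edges are vertex-disjoint, so the lower endpoints are distinct and $|A|=\sigma 2^n$; moreover all edges of $M$ lie in the middle layers, so $A\sse L_{\mathrm{mid}}$, and the edge relation is a bijection between $A$ and $A_u$. Corollary~\ref{bucket} then supplies an $\ell\le m$ with $\sum_{x\in A}\sum_{k\in B_\ell}\dens^\down_k(x,A)=\Omega\big(\eps^8\sigma^2\sqrt n\,2^n/((\log n)\sqrt{\ln(1/\eps)})\big)$; this is the $\ell$ we use.

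The heart of the argument is the following pointwise inequality: for every edge $(x,y)\in M$ and every $k\ge 0$,
\[
\dens^\down_{k+1}(y,A)\ \ge\ \frac{k+1}{\|x\|_1+1}\cdot\dens^\down_k(x,A).
\]
Indeed, if $z\preceq x$ and $\|z-x\|_1=k$ then $z\preceq y$ and $\|z-y\|_1=k+1$, because $x\prec y$ with $\|y-x\|_1=1$ and the single coordinate on which $x$ and $y$ differ is $0$ in $x$, hence also $0$ in $z$. So the set of $A$-points that are $\preceq y$ at distance $k+1$ contains the set of $A$-points that are $\preceq x$ at distance $k$, which has size $\binom{\|x\|_1}{k}\dens^\down_k(x,A)$; dividing by $\binom{\|y\|_1}{k+1}=\binom{\|x\|_1+1}{k+1}$ and using $\binom{\|x\|_1}{k}\big/\binom{\|x\|_1+1}{k+1}=(k+1)/(\|x\|_1+1)$ gives the claim. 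Restricting now to $k\in B_\ell$: we have $k+1\in B_\ell'$ by definition of $B_\ell'$; we have $k+1=\Omega(2^\ell)$ (since $k\ge 2^{\ell-1}$ for $\ell\ge 1$, and $k+1=1=2^0$ for $\ell=0$); and $\|x\|_1+1\le n+1$ since $x\in L_{\mathrm{mid}}$. Hence the multiplicative factor $(k+1)/(\|x\|_1+1)$ is $\Omega(2^\ell/n)$ uniformly over all $k\in B_\ell$ and all lower endpoints $x$.

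Finally I would sum the pointwise inequality over all $k\in B_\ell$ and over all edges of $M$, using the $A\leftrightarrow A_u$ bijection and the index shift $k\mapsto k+1$, to obtain
\[
\sum_{y\in A_u}\sum_{k'\in B_\ell'}\dens^\down_{k'}(y,A)\ \ge\ \Omega\!\left(\frac{2^\ell}{n}\right)\sum_{x\in A}\sum_{k\in B_\ell}\dens^\down_k(x,A).
\]
Substituting the bound from Corollary~\ref{bucket} and simplifying $\frac{2^\ell}{n}\cdot\frac{\sqrt n}{(\log n)\sqrt{\ln(1/\eps)}}=\frac{2^\ell}{(\log n)\sqrt{n\ln(1/\eps)}}$ yields exactly~(\ref{large-score}). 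The only step requiring genuine care is the pointwise transfer inequality together with the bookkeeping of the binomial ratio (and the routine verification that $A$ really is a size-$\sigma 2^n$ subset of $L_{\mathrm{mid}}$, so that Corollary~\ref{bucket} is legitimately invoked); everything else is a one-line substitution, so I expect no obstacle beyond getting that estimate right.
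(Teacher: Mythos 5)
Your proof is correct and follows essentially the same route as the paper: apply Corollary~\ref{bucket} to $A$ to get the bucket index $\ell$, transport each $\dens^\down_k(x,A)$ to $\dens^\down_{k+1}(y,A)$ for the matched edge $(x,y)\in M$ via the binomial-ratio inequality $\dens^\down_{k+1}(y,A)\ge\frac{k+1}{\|x\|_1+1}\dens^\down_k(x,A)$, and sum over $M$ and $k\in B_\ell$ using $(k+1)/(\|x\|_1+1)=\Omega(2^\ell/n)$. Your explicit justification of the pointwise transfer inequality (via the containment of downsets and the identity $\binom{\|x\|_1}{k}/\binom{\|x\|_1+1}{k+1}=(k+1)/(\|x\|_1+1)$) is exactly what the paper uses, just spelled out in more detail.
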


\begin{proof}
By Corollary~\ref{bucket}, there exists an $\ell\le m$ such that $A$ satisfies (\ref{eq:ttt}).

Next for every edge $(x,y) \in M$ 
  we have that
\[ \dens^\down_{k+1}(y, A) = \Prx_{\substack{\bz \prec y\\ \| \bz-y\|_1 = k+1}}[\bz \in A]  \ge
{\frac{{\|x\|_1\choose k}}{{\|y\|_1\choose k+1}}} \Prx_{\substack{\bz\prec x\\ \| \bz-x\|_1 = k}}[\bz \in A]  = \frac{(k+1)\cdot \dens^\down_{k}(x, A)}{\|x\|_1+1}. \]
Therefore, by (\ref{eq:ttt}) we have
\begin{align*}
\sum_{y\in {A_u}}\sum_{k\in B_\ell'} \dens^\down_k(y,A)
\hspace{0.02cm}&=\hspace{0.02cm}\sum_{y\in {A_u}}\sum_{k\in B_\ell} \dens^\down_{k+1}(y,A)\\
 &\ge\hspace{0.02cm} \sum_{x\in {A_u}}\sum_{k\in B_\ell} \frac{(k+1)\cdot \dens^\down_{k}(x,A)}{{\|x\|_1+1}} \\[0.36ex]
 &=\hspace{0.03cm} \Omega\left(\frac{ \eps^8\sigma^2\sqrt{n}\hspace{0.04cm}2^n}{(\log n)\sqrt{\ln(1/\eps)}}\cdot \frac{2^\ell}{n}\right).
\end{align*}
This completes the proof.
\end{proof}

\subsection{The weighted path tester and its analysis}\label{tester-sec}
\label{our-path-tester}
Given a Boolean function $f$, recall that a pair $(x, y)$ of vertices is a \emph{violated pair with respect to $f$ if $x \prec y$ and $f(x) > f(y)$.} Our algorithm $\texttt{weighted-path-tester}$ for monotonicity
  testing proceeds as follows:

\begin{framed}
$\texttt{weighted-path-tester}$:\vspace{-0.1cm}
\begin{enumerate}
\item Pick a point $\by$ uniformly from $L_{\mathrm{mid}}$.\vspace{-0.1cm}
\item Pick $\boldsymbol{\ell} \in \{0,1,\ldots,m=\lceil \log (n+1) \rceil \}$ uniformly.\vspace{-0.1cm}
\item Pick $\bk\in B_{\boldsymbol{\ell}}'$ uniformly.\vspace{-0.1cm}
\item Pick a path $\bp$ uniformly from the collection of all paths going through $0^n, \by$ and $1^n$, and set $\bx$ to be the
(unique) point on $\bp$ that has $\bx \prec \by$ and $\|\bx - \by\|_1=k.$\vspace{-0.1cm}
\item Reject iff $(\bx,\by)$ is a violated pair.
\end{enumerate}
\vskip -.13in
\end{framed}

We note that
an equivalent formulation of step (4) is that $\bx$ is drawn uniformly from $\big\{z \in \{0,1\}^n \colon \text{$z \prec \by$ and $\| \by-z\|_1 =\bk$} \big\}.$
Below we show that if there is a $(\sigma 2^n)$-sized matching $M$ of violated edges of $f$
  in the middle layers of the hypercube, then the tester above succeeds in finding a violated pair with probability
  roughly $\Omega(\sigma^2/\sqrt{n})$.

\begin{proposition}\label{theo:mainalg}
Let $f:\zo^n\to\zo$ and $\eps \ge 1/n$. Suppose there is a $(\sigma 2^n)$-sized matching $M$ of violated edges of $f$ all lying in the middle
layers of the hypercube. Then {\tt weighted-path-tester} succeeds (i.e.~samples $\bx$ and $\by$ that form a violated pair with respect to $f$) with probability
\begin{equation} \label{tester-prob}
\Omega\left(\frac{\epsilon^8\sigma^2}{(\log^2 n) \sqrt{n\ln(1/\epsilon)}}\right).
\end{equation}
\end{proposition}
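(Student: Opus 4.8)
The plan is to analyze the success probability of \texttt{weighted-path-tester} by decomposing the random choices it makes and connecting them to the machinery of Section~\ref{sec-alg-score}. Let $A$ and $A_u$ denote the sets of lower and upper endpoints of edges in the matching $M$, as in Corollary~\ref{bucket2}, so $|A_u| = \sigma 2^n$. First I would apply Corollary~\ref{bucket2} to obtain an index $\ell \le m$ for which
\[ \sum_{y \in A_u} \sum_{k \in B_\ell'} \dens^\down_k(y,A) = \Omega\!\left(\frac{2^{\ell+n}\eps^8\sigma^2}{(\log n)\sqrt{n\ln(1/\eps)}}\right). \]
The key observation is that on any execution in which the tester happens to pick this particular value $\boldsymbol{\ell} = \ell$ in step (2), picks $\by \in A_u$ in step (1), and then picks $\bx$ in step (4) to be a point of $A$ below $\by$, the pair $(\bx,\by)$ is a violated pair: $\by$ is the upper endpoint of some edge of $M$ so $f(\by) = 0$, $\bx \in A$ so $f(\bx) = 1$, and $\bx \prec \by$ by construction. (Here I am using the convention, stated in the excerpt, that a violated pair need not be an edge.)

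Next I would lower-bound the probability of this event. The probability of picking $\boldsymbol{\ell} = \ell$ is $1/(m+1) = \Omega(1/\log n)$, independently of everything else. Conditioned on a fixed outcome $\by = y$ and on $\boldsymbol{\ell} = \ell$, step (3) picks $\bk$ uniformly from $B_\ell'$, which has size $|B_\ell'| = |B_\ell| \le 2^\ell$, and step (4) picks $\bx$ uniformly from $\{z : z \prec y,\ \|y - z\|_1 = \bk\}$; hence the conditional probability that $\bx \in A$ equals
\[ \frac{1}{|B_\ell'|}\sum_{k \in B_\ell'} \dens^\down_k(y,A) \ge \frac{1}{2^\ell}\sum_{k \in B_\ell'}\dens^\down_k(y,A). \]
Since $\by$ is uniform over $L_{\mathrm{mid}}$ and $|L_{\mathrm{mid}}| \le 2^n$, summing over $y \in A_u$ gives that the overall success probability is at least
\[ \frac{1}{m+1}\cdot\frac{1}{|L_{\mathrm{mid}}|}\cdot\frac{1}{2^\ell}\sum_{y \in A_u}\sum_{k \in B_\ell'}\dens^\down_k(y,A) \ge \Omega\!\left(\frac{1}{\log n}\right)\cdot\frac{1}{2^n}\cdot\frac{1}{2^\ell}\cdot\Omega\!\left(\frac{2^{\ell+n}\eps^8\sigma^2}{(\log n)\sqrt{n\ln(1/\eps)}}\right), \]
and the factors of $2^{\ell+n}$ and $2^n\cdot 2^\ell$ cancel, leaving exactly the claimed bound $\Omega\big(\eps^8\sigma^2/((\log^2 n)\sqrt{n\ln(1/\eps)})\big)$.

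The main subtlety — and the step I would be most careful about — is justifying that the conditional distribution of $\bx$ in step (4), given $\by = y$ and $\bk = k$, is genuinely uniform over $\{z : z \prec y,\ \|y-z\|_1 = k\}$, which the excerpt flags via the remark that step (4) has an equivalent formulation as a direct uniform draw. A uniformly random path through $0^n$, $y$, and $1^n$ restricted to the part below $y$ is a uniformly random maximal chain from $0^n$ to $y$, and the point at distance $k$ below $y$ on such a chain is uniform over layer $\|y\|_1 - k$ of the subcube below $y$; this is the standard fact about random chains already invoked in the proof of Lemma~\ref{BBL}. A secondary point to handle cleanly is that distinct $y \in A_u$ are distinct (they are, since $M$ is a matching), so there is no double-counting when we sum the per-$y$ contributions, and that the events ``$\by = y$ for the various $y \in A_u$'' are disjoint. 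Everything else is bookkeeping: collecting the independent factor $1/(m+1)$ from step (2), the factor $1/|L_{\mathrm{mid}}| \ge 2^{-n}$ from step (1), and plugging in the bound from Corollary~\ref{bucket2}.
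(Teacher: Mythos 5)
Your proof is correct and takes essentially the same route as the paper: identify $A, A_u$, invoke Corollary~\ref{bucket2} to get a good bucket $\ell$, condition on the tester choosing $\boldsymbol{\ell}=\ell$ and $\by\in A_u$, use $\dens^\down_k(y,A)$ to express the conditional probability that $\bx\in A$, and sum. The only cosmetic difference is that you restrict attention to the single good bucket up front, whereas the paper writes the full average over $\ell$ and lower-bounds by the $\ell^*$ term; these are the same calculation.
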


\begin{proof}
Let $A$ be the $1$-endpoints of edges in $M$, and $A_u$ be the $0$-endpoints, and note that every  pair $(x,y) \in A\times A_u$ satisfying $x\prec y$ is a violated pair with respect to $f$. Let $\calD^w$ denote the distribution over comparable pairs $(\bx,\by)\in L_{\text{mid}}\times L_{\text{mid}}$ induced by {\tt weighted-path-tester}.  Applying Corollary~\ref{bucket2}, we know there that exists an $\ell^*\in \{0,1,\ldots,m\}$ such that \[
\sum_{y\in {A_u}}\sum_{k\in B_{\ell^*}'} \dens^\down_k(y,A)
 = \Omega\left(\frac{ 2^{\ell^*+n}\eps^8\sigma^2}{(\log n)\sqrt{n\ln(1/\eps)}}\right).
\]

Note that conditioning on the event of $\yy=y$ and $\bk=k$, 
  the probability of $\xx\in A$ is $\dens^{\down}_k(y,A)$.
Since $\by,\boldsymbol{\ell},\bk$ are all sampled uniformly,
  $\texttt{weighted-path-tester}$ succeeds with probability at least
\begin{align*}
{\Prx_{(\bx,\by)\leftarrow \calD^w}\big[\by \in A_u, \bx \in A\big]}\hspace{0.02cm}
&= \hspace{0.03cm}{\Prx_{(\bx,\by)\leftarrow \calD^w}[\by\in A_u] \cdot  \Prx_{(\bx,\by)\leftarrow \calD^w}\big[\bx \in A\mid \by\in A_u\big]} \\
&= \hspace{0.03cm}{\frac{|A_u|}{|L_{\text{mid}}|} \cdot \frac1{|A_u|} \sum_{y\in A_u} \frac1{m+1}\sum_{\ell = 0}^m \frac1{|B_\ell'|}\sum_{k\in B'_\ell} \dens^\down_k(y,A)} \\[-0.3ex]
&\ge \hspace{0.03cm}\frac{1}{(m+1)\hspace{0.03cm}|L_{\mathrm{mid}}|\hspace{0.03cm}|B_{\ell^*}'|}\cdot\sum_{y\in {A_u}}\sum_{k\in B_{\ell^*}'}\dens^\down_k(y,A) \\
&=\hspace{0.03cm}\Omega\left(\frac{ 2^{\ell^*+n}\eps^8\sigma^2}{(\log n)\sqrt{n\ln(1/\eps)}}
\cdot \frac{1}{(\log n)\hspace{0.03cm}2^{\ell^*+n} } \right)\hspace{0.03cm}{=\hspace{0.03cm}\Omega\left(\frac{ \eps^8\sigma^2}{(\log^2 n)\sqrt{n\ln(1/\eps)}}\right).}
\end{align*}
This finishes the proof.
\end{proof}

\subsection{Proof of Theorem~\ref{main-theorem-alg}}
\label{sec-alg-proof}
Finally we combine Proposition \ref{theo:mainalg} with the dichotomy theorem of \cite{CS13a}
  to prove Theorem \ref{main-theorem-alg}.
To state the latter, we let $v 2^n$ denote the total number of
  violated edges in $f$. We also let $\sigma 2^n$ denote the size of the largest
  matching of violated edges in the middle layers.
Then we have

\begin{theorem}[Theorem 2.4 of \cite{CS13a}]\label{theo-dichotomy}
For any $f$ that is $\epsilon$-far from monotone, $v\cdot \sigma=\Omega(\epsilon^2)$.
\end{theorem}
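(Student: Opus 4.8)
This is Theorem~2.4 of \cite{CS13a}, which we use as a black box; here is the approach I would take to prove it. The natural starting point is the characterization of distance to monotonicity (Theorem~\ref{violating-pairs} specialized to $m=2$): since $f$ is $\eps$-far from monotone, there is a family $\mathcal{P}$ of $\eps 2^n$ pairwise-disjoint violating pairs $(x,y)$ with $x\prec y$, $f(x)=1$ and $f(y)=0$. Since at most $|\{0,1\}^n\setminus L_{\mathrm{mid}}|\le(\eps/50)2^n$ of these pairs can have an endpoint outside the middle layers, after discarding those we may assume $\mathcal{P}$ consists of $\Omega(\eps)2^n$ disjoint violating pairs with both endpoints in $L_{\mathrm{mid}}$. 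The key point is that for every such pair, each shortest increasing path from $x$ to $y$ stays inside $L_{\mathrm{mid}}$ and must contain at least one violated \emph{edge}, since $f$ drops from $1$ to $0$ somewhere along it.

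From here the plan is a routing/congestion dichotomy. For each pair $P=(x,y)\in\mathcal{P}$ choose a uniformly random shortest increasing path from $x$ to $y$, and for each violated edge $e$ lying in the middle layers let $w(e)$ be the sum over $P$ of the probability that $e$ lies on the path chosen for $P$. Since every chosen path carries at least one violated edge, $\sum_e w(e)\ge\Omega(\eps)2^n$. Fix a threshold $t$ and split into two cases. If $w(e)\le t$ for every $e$, then there must be at least $\Omega(\eps)2^n/t$ distinct violated edges, so $v=\Omega(\eps/t)$. If instead some violated edge has $w(e)>t$, then many disjoint violating pairs have their intervals funneling through a single edge, and one extracts from this configuration a vertex-disjoint family of violated edges, i.e.\ a lower bound on $\sigma$. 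Choosing $t$ to balance the two cases yields $v\cdot\sigma=\Omega(\eps^2)$, up to the lower-order slack absorbed by the $\Omega(\cdot)$ in the statement.

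I expect the main obstacle to be the high-congestion case: converting ``many disjoint violating pairs pass through the edge $e$'' into a genuine large matching of violated edges requires a careful combinatorial argument, and even bounding the routing probabilities $\Pr[e\in\text{path}]$ is delicate, since these are ratios of products of binomial coefficients. This is exactly where working inside the middle layers is essential --- there the relevant binomial ratios are only $O(1/\eps^4)$ (the same phenomenon used in Lemma~\ref{binomial-ratio}), which keeps both the per-edge congestion and the spreading losses polynomially bounded in $1/\eps$. Carrying this out quantitatively is the content of \cite{CS13a}, to which we refer for the full proof.
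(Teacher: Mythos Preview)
The paper does not prove this statement; it is quoted verbatim as Theorem~2.4 of \cite{CS13a} and used as a black box, so there is no ``paper's own proof'' to compare against. Your proposal likewise defers to \cite{CS13a} for the full argument, so at that level the two match.

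As for your heuristic sketch: the high-level tradeoff you describe (few violated edges forces a large matching, and vice versa) is indeed the content of the dichotomy, but your proposed mechanism is not how \cite{CS13a} actually proves it, and your own diagnosis of the obstacle is accurate. In the high-congestion case, ``many disjoint violating pairs route through a single edge $e$'' does \emph{not} readily yield a large vertex-disjoint matching of violated edges --- the violated edges along different paths can collide arbitrarily, and there is no obvious way to uncollide them. The actual argument in \cite{CS13a} goes through a different structural notion (persistence of vertices under short random walks) rather than a routing/congestion dichotomy on shortest paths, and the matching is extracted from the set of \emph{non-persistent} $1$-inputs rather than from congested edges. So while your sketch conveys the right qualitative picture, it should not be read as an outline of the \cite{CS13a} proof, and the step you flag as the main obstacle is one your routing framework does not in fact overcome.
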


We now prove Theorem \ref{main-theorem-alg}.

\begin{proof}[Proof of Theorem \ref{main-theorem-alg}]
As mentioned at the beginning of Section~\ref{sec-alg}, we may assume without
  loss of generality that $\eps \ge 1/n$ since otherwise the edge tester alone succeeds with probability $\Omega(\eps/n) = \Omega(\eps^2)$.
  When $\eps \ge 1/n$, our tester flips a coin, runs the edge tester with probability
  $1/2$, and~runs $\texttt{weighted-path-tester}$ with probability $1/2$.
Given $v$ and $\sigma$ as defined above,
  the success probability of the edge tester is $\Omega(v/n)$;
  the success probability of $\texttt{weighted-path-tester}$ is given in (\ref{tester-prob}).
It follows from Theorem \ref{theo-dichotomy} that the average of these two
  is at least
\[
\Omega\left(\frac{\epsilon^4}
{n^{5/6}(\log^{2/3}n) \hspace{0.04cm}(\ln (1/\epsilon))^{1/6}}\right).\]
This finishes the proof of Theorem \ref{main-theorem-alg}.
\end{proof}

\section*{Acknowledgements}

We thank Eric Blais for a helpful discussion that led to an improvement of our lower bound for general hypergrid domains.

\appendix 

\bibliography{odonnell-bib}{}

\newcommand{\etalchar}[1]{$^{#1}$}
\begin{thebibliography}{BCGSM12}

\bibitem[ACCL07]{ACCL07}
Nir Ailon, Bernard Chazelle, Seshadhri Comandur, and Ding Liu.
\newblock Estimating the distance to a monotone function.
\newblock {\em Random Struct. Algorithms}, 31(3):371--383, 2007.

\bibitem[BBM12]{BBM12}
Eric Blais, Joshua Brody, and Kevin Matulef.
\newblock Property testing lower bounds via communication complexity.
\newblock {\em Computational Complexity}, 21(2):311--358, 2012.

\bibitem[BCGSM12]{BCGM12}
Jop Bri{\"e}t, Sourav Chakraborty, David Garc\'{\i}a-Soriano, and Arie
  Matsliah.
\newblock Monotonicity testing and shortest-path routing on the cube.
\newblock {\em Combinatorica}, 32(1):35--53, 2012.

\bibitem[BKR04]{BKR04}
Tugkan Batu, Ravi Kumar, and Ronitt Rubinfeld.
\newblock Sublinear algorithms for testing monotone and unimodal distributions.
\newblock In {\em {ACM} Symposium on Theory of Computing}, pages 381--390,
  2004.

\bibitem[BO10]{BO10}
Eric Blais and Ryan O'Donnell.
\newblock Lower bounds for testing function isomorphism.
\newblock In {\em Proceedings of the 25th Annual IEEE Conference on
  Computational Complexity}, pages 235--246, 2010.

\bibitem[BRY13]{BRY13}
Eric Blais, Sofya Raskhodnikova, and Grigory Yaroslavtsev.
\newblock Lower bounds for testing properties of functions on hypergrid
  domains.
\newblock {\em Electronic Colloquium on Computational Complexity (ECCC)},
  20:36, 2013.

\bibitem[CS13a]{CS13a}
Deeparnab Chakrabarty and C.~Seshadhri.
\newblock A $o(n)$ monotonicity tester for boolean functions over the
  hypercube.
\newblock In {\em ACM Symposium on Theory of Computing}, pages 411--418, 2013.

\bibitem[CS13b]{CS13b}
Deeparnab Chakrabarty and C.~Seshadhri.
\newblock Optimal bounds for monotonicity and lipschitz testing over hypercubes
  and hypergrids.
\newblock In {\em ACM Symposium on Theory of Computing}, pages 419--428, 2013.

\bibitem[CS13c]{CS13c}
Deeparnab Chakrabarty and C.~Seshadhri.
\newblock An optimal lower bound for monotonicity testing over hypergrids.
\newblock In {\em APPROX-RANDOM}, pages 425--435, 2013.

\bibitem[DGL{\etalchar{+}}99]{DGL+99}
Yevgeniy Dodis, Oded Goldreich, Eric Lehman, Sofya Raskhodnikova, Dana Ron, and
  Alex Samorodnitsky.
\newblock Improved testing algorithms for monotonocity.
\newblock In {\em Proceedings of RANDOM}, pages 97--108, 1999.

\bibitem[EKK{\etalchar{+}}00]{EKK+00}
Funda Erg\"un, Sampath Kannan, S.~Ravi Kumar, Ronitt Rubinfeld, and Mahesh
  Vishwanthan.
\newblock Spot-checkers.
\newblock {\em Journal of Computer and System Sciences}, 60:717--751, 2000.
\newblock Earlier version in STOC'96.

\bibitem[Fis04]{Fis04}
Eldar Fischer.
\newblock On the strength of comparisons in property testing.
\newblock {\em Inf. Comput.}, 189(1):107--116, 2004.

\bibitem[FLN{\etalchar{+}}02]{FLN+02}
Eldar Fischer, Eric Lehman, Ilan Newman, Sofya Raskhodnikova, Ronitt Rubinfeld,
  and Alex Samorodnitsky.
\newblock Monotonicity testing over general poset domains.
\newblock In {\em Proceedings of the 34th Annual ACM Symposium on Theory of
  Computing}, pages 474--483, 2002.

\bibitem[GGL{\etalchar{+}}00]{GGL+00}
Oded Goldreich, Shafi Goldwasser, Eric Lehman, Dana Ron, and Alex Samordinsky.
\newblock Testing monotonicity.
\newblock {\em Combinatorica}, 20(3):301--337, 2000.

\bibitem[GGLR98]{GGL+98}
Oded Goldreich, Shafi Goldwasser, Eric Lehman, and Dana Ron.
\newblock Testing monotonicity.
\newblock In {\em IEEE Symposium on Foundations of Computer Science}, pages
  426--435, 1998.

\bibitem[GOWZ10]{GOWZ10}
Parikshit Gopalan, Ryan O'Donnell, Yi~Wu, and David Zuckerman.
\newblock Fooling functions of halfspaces under product distributions.
\newblock In {\em Proceedings of the 25th Annual IEEE Conference on
  Computational Complexity}, pages 223--234, 2010.

\bibitem[HK08]{HK08}
Shirley Halevy and Eyal Kushilevitz.
\newblock Testing monotonicity over graph products.
\newblock {\em Random Struct. Algorithms}, 33(1):44--67, 2008.

\bibitem[MORS09]{MORS09}
Kevin Matulef, Ryan O'Donnell, Ronitt Rubinfeld, and Rocco Servedio.
\newblock Testing $\pm$1-weight halfspaces.
\newblock In {\em Proceedings of the 13th Annual International Workshop on
  Randomized Techniques in Computation}, pages 646--657, 2009.

\bibitem[MORS10]{MORS10}
Kevin Matulef, Ryan O'Donnell, Ronitt Rubinfeld, and Rocco Servedio.
\newblock Testing halfspaces.
\newblock {\em SIAM Journal on Computing}, 39(5):2004--2047, 2010.

\bibitem[Mos08]{Mos08}
Elchanan Mossel.
\newblock Gaussian bounds for noise correlation of functions and tight analysis
  of {L}ong {C}odes.
\newblock In {\em Proceedings of the 49th Annual IEEE Symposium on Foundations
  of Computer Science}, pages 156--165, 2008.

\bibitem[O'D14]{odonnell}
Ryan O'Donnell.
\newblock {\em The Analysis of Boolean Functions}.
\newblock Cambridge Univ. Press, 2014.
\newblock Preliminary version at
  \href{http://analysisofbooleanfunctions.org}{analysisofbooleanfunctions.org}.

\bibitem[RRS{\etalchar{+}}12]{RRSW12}
Dana Ron, Ronitt Rubinfeld, Muli Safra, Alex Samorodnitsky, and Omri Weinstein.
\newblock Approximating the influence of monotone {B}oolean functions in
  ${O}(\sqrt{n})$ query complexity.
\newblock {\em ACM Transactions on Computation Theory}, 4(4):11, 2012.

\bibitem[RS09]{RS09c}
Ronitt Rubinfeld and Rocco~A. Servedio.
\newblock Testing monotone high-dimensional distributions.
\newblock {\em Random Struct. Algorithms}, 34(1):24--44, 2009.

\bibitem[RS13]{RS13}
Dana Ron and Rocco~A. Servedio.
\newblock Exponentially improved algorithms and lower bounds for testing signed
  majorities.
\newblock In {\em ACM-SIAM Symposium on Discrete Algorithms}, pages 1319--1336,
  2013.

\bibitem[VV11]{VV11}
Gregory Valiant and Paul Valiant.
\newblock Estimating the unseen: an $n/\log(n)$-sample estimator for entropy
  and support size, shown optimal via new {CLT}s.
\newblock In {\em ACM Symposium on Theory of Computing}, pages 685--694, 2011.

\bibitem[Yao77]{Yao77}
Andrew Yao.
\newblock Probabilistic computations: Towards a unified measure of complexity.
\newblock In {\em Proceedings of the 9th Annual ACM Symposium on Theory of
  Computing}, pages 222--227, 1977.

\end{thebibliography}
\bibliographystyle{alpha}

\section{Reduction from hypergrid domains $[m]^n$ when $m$ is odd} 
\label{app:odd-m} 
\begin{lemma}
\label{odd-m-lemma} 
Let $m\in \N$ be odd. There exists a monotone function $h:[m]^k\to\bits$ such that $|\{ x \in [m]^k \colon h(x)=1\}| = |\{ x \in [m]^k \colon h(x)=-1\}| + 1$, and a one-to-one mapping $\Psi : h^{-1}(-1) \to h^{-1}(1)$ such that $\Psi(x) \succ x$ for all $x \in h^{-1}(-1)$. 
\end{lemma}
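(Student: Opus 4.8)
The plan is to construct $h:[m]^k \to \{-1,1\}$ explicitly as a threshold-type function on the sum of coordinates, with $k = \Theta(\log m)$ chosen large enough that a suitable balance condition holds. Concretely, I would take $h(x) = 1$ iff $\sum_{i=1}^k x_i \ge t$ for an appropriately chosen threshold $t$; any such function is clearly monotone. The number of points in $h^{-1}(1)$ is $\sum_{s \ge t} N_{k,m}(s)$ where $N_{k,m}(s)$ is the number of $x \in [m]^k$ with coordinate-sum $s$, and these counts $N_{k,m}(s)$ are symmetric about the midpoint $s_0 = k(m+1)/2$ and (for $k$ not too small) unimodal. The total number of points $m^k$ is odd (product of odd numbers), so the middle count $N_{k,m}(s_0)$ is odd, and choosing $t = s_0 + 1$ (or $t$ just above the median in the right way) makes $|h^{-1}(1)| = |h^{-1}(-1)| + 1$ by the symmetry; I'd verify the exact off-by-one by a short counting argument pairing $s$ with $k(m+1)-s$.

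The more substantive part is producing the one-to-one map $\Psi: h^{-1}(-1) \to h^{-1}(1)$ with $\Psi(x) \succ x$ pointwise. The natural approach is a Hall's-theorem / matching argument on the bipartite comparability graph between $h^{-1}(-1)$ and $h^{-1}(1)$: I want a matching saturating the smaller side $h^{-1}(-1)$, where $x \sim y$ iff $x \prec y$. Since $h$ is a threshold on the sum, $h^{-1}(-1)$ sits in the "low" half of $[m]^k$ and $h^{-1}(1)$ in the "high" half, and one can hope to build the matching layer by layer by the coordinate-sum, using a normalized-flow / LYM-type argument on the graded poset $[m]^k$ to show each down-set has enough up-neighbors. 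Alternatively — and I think this is cleaner — I would exploit the fact that $[m]^k$ is a product of chains and hence has the normalized matching property (it is a symmetric chain order, by a theorem of de Bruijn–Tengbergen–Kruyswijk / Anderson on products of chains): $[m]^k$ has a symmetric chain decomposition, and within each symmetric chain the points with sum below the median are matched one-for-one upward with points above the median. This gives exactly a $\Psi$ with $\Psi(x) \succ x$ provided the threshold $t$ is placed symmetrically around the median; the single leftover point is the lone middle element of the one chain of odd length corresponding to the odd surplus, which accounts for the $+1$.

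So the key steps, in order, are: (1) define $h$ as a threshold on the coordinate sum and observe monotonicity; (2) use the symmetry $x \mapsto (m+1)\mathbf{1} - x$ on $[m]^k$ together with oddness of $m^k$ to pin down the threshold $t$ so that $|h^{-1}(1)| = |h^{-1}(-1)|+1$; (3) invoke the symmetric chain decomposition of the product of chains $[m]^k$; (4) within each symmetric chain, match the elements strictly below the "center" with those strictly above, matching each element of $h^{-1}(-1)$ to an element of $h^{-1}(1)$ strictly above it in the poset, leaving exactly one element of $h^{-1}(1)$ unmatched; (5) conclude $\Psi$ exists with the stated property, and record that $k = \Theta(\log n)$ suffices for the downstream use (the actual value of $k$ is irrelevant to this lemma — it only matters that $h$ exists for every odd $m$, which is what we have proved; any $k\ge 1$ works, so in particular $k=1$ already gives the claim).

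The main obstacle is step (3)–(4): making sure the symmetric chain decomposition of $[m]^k$ is correctly cited and that the "below center / above center" split of each chain lines up exactly with $h^{-1}(-1)$ versus $h^{-1}(1)$ under the sum-threshold definition of $h$. One has to be slightly careful about chains of even versus odd length and about the placement of the threshold relative to the common center of all symmetric chains; getting the single unmatched vertex to land in $h^{-1}(1)$ rather than $h^{-1}(-1)$ is exactly where the oddness of $m^k$ (equivalently, the existence of one chain passing through the unique center point) is used. If one prefers to avoid quoting the symmetric chain decomposition, the fallback is a direct Hall-condition verification via the normalized matching property of $[m]^k$, which is itself standard but requires a short computation comparing $|L_s \cap h^{-1}(-1)|$ with the size of its up-shadow; I would mention this as the alternative but push the chain-decomposition proof as the main line.
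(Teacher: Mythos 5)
There is a genuine gap in your proposal, and it is in the step you flagged as needing only a ``short counting argument.'' A sum-threshold $h(x)=\ind[\sum_i x_i \ge t]$ can never achieve $|h^{-1}(1)|=|h^{-1}(-1)|+1$ once $k\ge 2$: for any $t$, the imbalance equals a signed cumulative sum of the layer sizes $N_{k,m}(s)$, and since the middle layer has size $N_{k,m}(s_0)\ge 3$ when $k\ge 2$ and $m\ge 3$ (e.g.\ $m=3,k=2$ gives layer sizes $1,2,3,2,1$, so the achievable imbalances are $\pm 9,\pm 7,\pm 3$), no choice of $t$ ever gives $+1$. So step (2) fails. The obvious patch --- threshold strictly by $\sum x_i$ off the middle layer and then hand-tune $h$ on the antichain $L_{s_0}$ to fix the count --- runs into trouble at steps (3)--(4): for any symmetric chain decomposition, any chain whose middle element you place in $h^{-1}(-1)$ has $j_C+1$ points in $h^{-1}(-1)\cap C$ but only $j_C$ in $h^{-1}(1)\cap C$, so the matching cannot be done chain-by-chain, and your ``single leftover vertex'' accounting does not go through. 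Finally, the escape hatch ``$k=1$ already gives the claim'' does not save you: although $k=1$ makes the lemma trivially true, the whole point of the appendix (Remark after Proposition~\ref{distance-preserving} and Proposition~\ref{prop:odd-m}) is to use $k=\lceil\log n\rceil$, so that $|S(x)|=\big((m^k-1)/2\big)^n$ is within a constant factor of $m^{kn}/2^n$; with $k=1$ this bound collapses and the reduction loses.

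The paper's construction avoids all of this by not using the coordinate sum at all. It defines $h$ by a lexicographic comparison to the center point $\lceil m/2\rceil^k$: set $h(\lceil m/2\rceil^k)=1$, and otherwise $h(x)=\sign(x_i-\lceil m/2\rceil)$ where $i$ is the first index with $x_i\ne\lceil m/2\rceil$. Monotonicity is a short case check on the first deviating coordinate. The reflection $x\mapsto (m+1)\mathbf{1}-x$ (acting on the single coordinate $x_i$, i.e.\ replacing $x_i$ by $m+1-x_i$) is a bijection from $h^{-1}(-1)$ to $h^{-1}(1)\setminus\{\lceil m/2\rceil^k\}$, which gives both the $+1$ imbalance and an explicit $\Psi$ with $\Psi(x)\succ x$. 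This works uniformly for every $k$, with no reference to a symmetric chain decomposition. Your symmetric-chain instinct is not far off in spirit --- you want a structure that lets you match below-center to above-center along covering relations --- but the lexicographic rule packages exactly that matching into a single coordinate flip, and the extra $+1$ point falls out for free at the unique fixed point of the reflection.
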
 

\begin{proof}
The function $h$ is defined as follows: 
\[ 
h(x_1,\ldots,x_k) = \left\{ 
\begin{array}{cl} 
1 & \text{if $x = \lceil m/2\rceil ^k$} \\
\sign(x_i-\lceil m/2\rceil) & \text{otherwise, where $i := \min\{ i \in [k]\colon x_i \ne \lceil m/2\rceil\}$.} 
\end{array} 
\right.
\] 
The monotonicity of $h$ is straightforward to verify, as is the fact that $|\{ x \in [m]^k \colon h(x)=1\}| = |\{ x \in [m]^k \colon h(x)=-1\}|  + 1$. The proof is complete by noticing that the mapping
\[ \Psi(x_1,\ldots,x_k) = (x_1,\ldots,x_{i-1},-x_i,x_{i+1},\ldots,x_k), \quad \text{where $i := \min\{ i \in [k] \colon x_i \ne \lceil m/2\rceil \}$} \]
is a bijection between $h^{-1}(-1)$ and $h^{-1}(1) \setminus \{ \lceil m/2\rceil^k\}$. 
\end{proof} 

With Lemma~\ref{odd-m-lemma} in hand we are ready to prove the following analogue of Proposition~\ref{distance-preserving} for hypergrid domains $[m]^n$ when $m$ is odd. Given the monotone function $h$ defined in Lemma~\ref{odd-m-lemma}, let $h':[m]^k \to \{-1,1,\bot\}$ be the partial function where $h'(x) = \bot$ if $x = \lceil m/2\rceil^k$, and $h'(x) = h(x)$ otherwise (and so  $|\{ x \in [m]^k \colon h(x)=1\}| = |\{ x \in [m]^k \colon h(x)=-1\}| = (m^k-1)/2$). 

\begin{proposition}
\label{prop:odd-m} 
For all odd $m\in \N$ the mapping
\[ \Phi : \big\{ \text{all functions $f:\{-1,1\}^n\to\bits$} \big\} \to \big\{  \text{all functions $f : [m]^{n \lceil \log n\rceil} \to\bits$}\big\} \]
defined by \emph{(\ref{eq:Phi-odd-m})} below satisfies the following two properties:
\begin{enumerate}
\item If $f\isafunc$ is monotone then $\Phi[f]$ is monotone as well.
\item If $f\isafunc$ is $\eps$-far from monotone then $\Phi[f]$ is $\Omega(\eps)$-far from monotone.
\end{enumerate}
\end{proposition}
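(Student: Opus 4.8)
The plan is to run the same argument as for Proposition~\ref{distance-preserving}, with the single-coordinate indicator $\ind[x_i>m/2]$ replaced by the block function $h:[m]^k\to\bits$ of Lemma~\ref{odd-m-lemma}, taking $k:=\lceil\log n\rceil$. Concretely I would partition the $n\lceil\log n\rceil$ input coordinates into $n$ consecutive blocks $x^{(1)},\dots,x^{(n)}\in[m]^k$ and define~(\ref{eq:Phi-odd-m}) to be
\[ \Phi[f]\big(x^{(1)},\dots,x^{(n)}\big)\ :=\ f\big(h(x^{(1)}),\dots,h(x^{(n)})\big). \]
Property~1 is then immediate: $h$ is monotone, so $z\preceq z'$ in $[m]^{nk}$ forces $h(z^{(i)})\le h({z'}^{(i)})$ for every block $i$, hence $\big(h(z^{(1)}),\dots\big)\preceq\big(h({z'}^{(1)}),\dots\big)$ in $\bn$ and $\Phi[f](z)\le\Phi[f](z')$ whenever $f$ is monotone. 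I would also record that $\Phi[f]$ can be evaluated with a single query to $f$, which is what lets one feed $\Phi$ into a reduction to Theorem~\ref{main-theorem-lb}.

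For Property~2 I would invoke Theorem~\ref{violating-pairs} in both directions, exactly as in the even case. Let $h'$ be the balanced partial function from the statement, and for $b\in\bits$ write $S(b):=h'^{-1}(b)\subseteq[m]^k$, so that $|S(-1)|=|S(1)|=(m^k-1)/2$ and $S(-1)\cap S(1)=\emptyset$; recall also that the bijection $\Psi$ of Lemma~\ref{odd-m-lemma} restricts to a bijection $S(-1)\to S(1)$ with $\Psi(w)\succ w$ for every $w$. For $x\in\bn$ set $S(x):=S(x_1)\times\cdots\times S(x_n)\subseteq[m]^{nk}$. Starting from $\eps2^n$ pairwise disjoint violations $(x,y)$ of $f$ (so $x\prec y$, $f(x)=1$, $f(y)=-1$) guaranteed by Theorem~\ref{violating-pairs}, I would lift each one by pairing every $u\in S(x)$ with the point $v\in S(y)$ obtained blockwise by applying the identity on the blocks $i$ with $x_i=y_i$ and applying $\Psi$ on the blocks with $x_i\ne y_i$ (necessarily $x_i=-1$, $y_i=1$ there). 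Since no block of $u$ or $v$ is the center point $\lceil m/2\rceil^k$, we get $h(u^{(i)})=h'(u^{(i)})=x_i$ and $h(v^{(i)})=y_i$ for every $i$, so $\Phi[f](u)=f(x)=1>-1=f(y)=\Phi[f](v)$; moreover $u\preceq v$ coordinatewise and $u\ne v$ because $x\ne y$ forces at least one $\Psi$-block, so $(u,v)$ is a bona fide violation of $\Phi[f]$.

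The only real work is the disjointness accounting. Within a single violation $(x,y)$ the lifted pairs are pairwise disjoint, since their first coordinates exhaust $S(x)$, their second coordinates exhaust $S(y)$ (the partner map is a bijection $S(x)\to S(y)$), and $S(x)\cap S(y)=\emptyset$ as $x\ne y$. Across two distinct disjoint violations $(x,y)\ne(x',y')$ of $f$, the four points $x,y,x',y'\in\bn$ are distinct, and whenever $a\ne b$ we have $S(a)\cap S(b)=\emptyset$ (they disagree in some block and $S(-1)\cap S(1)=\emptyset$), so $S(x)\cup S(y)$ and $S(x')\cup S(y')$ are disjoint point sets. Hence $\Phi[f]$ admits at least $\eps2^n\cdot\big((m^k-1)/2\big)^n=\eps(m^k-1)^n$ pairwise disjoint violations, and since $k=\lceil\log n\rceil$ gives $m^k\ge 2^k\ge n$ we get $(m^k-1)^n=m^{nk}(1-1/m^k)^n\ge m^{nk}(1-1/n)^n\ge\tfrac14\,m^{nk}$; thus $\Phi[f]$ has at least $(\eps/4)\,m^{n\lceil\log n\rceil}$ pairwise disjoint violations, and the converse direction of Theorem~\ref{violating-pairs} shows $\Phi[f]$ is $(\eps/4)$-far, hence $\Omega(\eps)$-far, from monotone. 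The reason for passing to the partial function $h'$ (and for the off-by-one in Lemma~\ref{odd-m-lemma}) is exactly to force $|S(-1)|=|S(1)|$ so that the blockwise pairing is a genuine bijection and the per-block center points can be discarded at the cost of only the $(1-1/m^k)^n\ge\tfrac14$ factor above; keeping this bookkeeping clean, and checking that $k=\lceil\log n\rceil$ blocks already suffice, is the step I expect to need the most care.
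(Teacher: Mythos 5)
Your proposal is correct and follows essentially the same route as the paper: the same block decomposition with $k=\lceil\log n\rceil$, the same lift via $S(x)=h'^{-1}(x_1)\times\cdots\times h'^{-1}(x_n)$, the same blockwise pairing by $\Psi$, and the same $(1-1/m^k)^n=\Omega(1)$ bookkeeping (the paper writes this as $(1-1/n^{\log m})^n$; your $(1-1/n)^n\ge\tfrac14$ is a slightly looser but equally valid bound). The only cosmetic difference is that you spell out the cross-violation disjointness argument a bit more explicitly than the paper does.
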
 

\begin{proof} 
Fix $k:= \lceil \log n\rceil$. For every $f:\bn\to\bits$, we define $\Phi[f] : [m]^{kn}\to\bits$  to be
the following function: for all $x^1,\ldots,x^n \in [m]^k$ 
\begin{equation} \label{eq:Phi-odd-m}
\Phi[f](x^1,\ldots,x^n) := f\big(h(x^1),\ldots,h(x^n)\big).
\end{equation}
Since $h$ is monotone it follows that $\Phi[f]$ is monotone if $f$ is monotone, and so it remains to show that $\Phi[f]$ is $\Omega(\eps)$-far from monotone if $f$ is $\eps$-far from monotone. Since $f$ is $\eps$-far from monotone, we have by Theorem~\ref{violating-pairs} that there exist $\eps 2^n$ many pairwise disjoint pairs $(x^i,y^i) \in \bn\times\bn$ that are violations with respect to $f$; we will exhibit $\Omega(\eps\, m^{kn})$ many pairwise disjoint pairs in $[m]^{kn}$ that are violations with respect to $\Phi[f]$, which along with another application of Theorem~\ref{violating-pairs} completes the proof.  

Using the same notation as in the proof of Proposition~\ref{distance-preserving}, we define the set-valued function $S$ mapping $x\in \bn$ to subsets of $[m]^{kn}$ as follows: 
\[ S(x) = h'^{-1}(x_1) \times \cdots \times h'^{-1}(x_n) \sse  [m]^{kn}.\]
Note that 
\[ |S(x)| = \left(\frac{m^k-1}{2}\right)^n  = \frac{m^{kn}}{2^n}\left(1-\frac1{m^k}\right)^n  \ge \frac{m^{kn}}{2^n}\left(1-\frac1{n^{\log m}}\right)^n  = \Omega\left(\frac{m^{kn}}{2^n}\right)\] 
for all $x \in \bn$ (where we have used our choice of $k=\lceil \log n\rceil$ for the inequality), and $S(x) \cap S(y) = \emptyset$ if $x \ne y$.  Furthermore, $\Phi[f](x') = f(x)$ for all $x \in \bn$ and $x'\in S(x)$. In words, $S$ maps each $1$-input of $f$ to a set of $((m^k-1)/2)^n$ many $1$-inputs of $\Phi[f]$, and likewise each $0$-input of $f$ to a set of $((m^k-1)/2)^n$ many $0$-inputs of $\Phi[f]$.

For any pair $(x,y)\in\bn\times\bn$, $x\prec y$, that is a violation with respect to $f$, consider pairing the $((m^k-1)/2)^n$ elements of $S(x)$ with the $((m^k-1)/2)^n$ elements of $S(y)$ via $\Psi$ from Lemma~\ref{odd-m-lemma} as follows: each $a \in S(x)$, which we will view as $a = (a_1,\ldots,a_n) \in ([m]^k)^n$, 
is paired with the unique element $b=(b_1,\dots,b_n) \in S(y)$ where $b_i = a_i$ if $x_i = y_i$, and $b_i = \Psi(a_i)$ if $x_i < y_i$.  Since $x\prec y$, it follows from the definitions of $S$ and $\Psi$ that every $x'\in S(x)$ is paired with $y'\in S(y)$ where $x'\prec y'$. Furthermore, as noted above $\Phi[f](x') = f(x) = 1$ whereas $\Phi[f](y') = f(y) = 0$, and so every pair $(x',y')\in S(x)\times S(y)$ is a violation with respect to $\Phi[f]$.  Therefore each of the $\eps 2^n$ many pairs $(x,y)\in \bn\times\bn$ that are violations with respect to $f$ gives rise to $\Omega(m^{kn}/2^n)$ many pairwise disjoint pairs $(x',y')\in S(x) \times S(y)$ that are violations with respect to $\Phi[f]$. Finally recalling that $S(x) \cap S(y) = \emptyset$ if $x \ne y$, we conclude that there are indeed $\eps 2^n\cdot \Omega(m^{kn}/2^n) = \Omega(\eps\, m^{kn})$ many pairwise disjoint pairs that are violations with respect to $\Phi[f]$. This finishes the proof.
\end{proof}

Proposition~\ref{prop:odd-m} along with Theorem~\ref{main-theorem-lb} implies the existence of a universal constant $\eps_0 > 0$ such that any non-adaptive $\eps_0$-tester for the monotonicity of $f: [m]^{N} \to \bits$, where $N := n\lceil \log n\rceil$ and $m$ is odd, must make $\tilde{\Omega}(n^{1/5}) = \tilde{\Omega}(N^{1/5})$ many queries. This along with Proposition~\ref{distance-preserving} (establishing the same lower bound for hypergrid domains $[m]^n$ where $m$ is even) completes the proof of Theorem~\ref{main-hypergrid}.

\end{document}